\newcommand{\lin}{\ensuremath{\multimap}}
\newcommand{\lam}{\ensuremath{\lambda}}
\newcommand{\vdashNEAL}{\ensuremath{\vdash_{\mathsf{NEAL}}}}
\newcommand{\ea}{{\sf EA}}
\newcommand{\contr}[4]{\ensuremath{[#1]_{#2=#3,#4}}}
\newcommand{\promote}[5]{\ensuremath{!\left(#1\right)
    \left[{}^{#2}/#3,\ldots,{}^{#4}/#5\right]}}
\newcommand{\promotewa}[1]{\ensuremath{!\left(#1\right)}}
\newcommand{\LambdaEA}{\ensuremath{\Lambda^{\ea}}}
\newcommand{\linear}{\multimap}
\newcommand{\A}{\mathbb{A}}
\newcommand{\N}{\mathbb{N}}
\newcommand{\U}{\mathbb{U}}
\newcommand{\C}{\mathbb{C}}
\newcommand{\bn}{\rightarrow_{n} }
\newcommand{\bv}{\rightarrow_{v} }
\newcommand{\PT}{\ensuremath{PT}}
\newcommand{\NEAL}{\ensuremath{\textsf{NEAL}}}
\newcommand{\ETAS}{\ensuremath{\textsf{ETAS}}}
\newcommand{\LAL}{\ensuremath{\textsf{LAL}}}
\newcommand{\DLAL}{\ensuremath{\textsf{DLAL}}}
\newcommand{\SAL}{\ensuremath{\textsf{SAL}}}
\newcommand{\EAL}{\ensuremath{\textsf{EAL}}}
\newcounter{number}
\def\doi{4 (4:5) 2008}
\begin{document}

\title[Light Logics and the Call-by-Value Lambda Calculus]{Light Logics and the Call-by-Value Lambda Calculus}

\author[P.~Coppola]{Paolo Coppola\rsuper a}
\address{{\lsuper a}Dipartimento di Matematica e Informatica, Universit\`a di Udine.
         via delle Scienze 206, 33100 Udine, Italy.}
\email{coppola@dimi.uniud.it}
\author[U.~Dal Lago]{Ugo Dal Lago\rsuper b}
\address{{\lsuper b}Dipartimento di Scienze dell'Informazione, Universit\`a di Bologna.
         via Mura Anteo Zamboni 7, 40127 Bologna, Italy}
\email{dallago@cs.unibo.it}
\author[S.~Ronchi Della Rocca]{Simona Ronchi Della Rocca\rsuper c}
\address{{\lsuper c}Dipartimento di Informatica, Universit\`a di Torino
         corso Svizzera 185, 10129 Torino, Italy}
\email{ronchi@di.unito.it}

\thanks{{\lsuper{a,b,c}}All authors have been partially supported by PRIN project ``FOLLIA'' and french ANR ``NO-CoST'' project (JC05\_43380).}

\keywords{Linear logic, lambda calculus, implicit computational complexity}
\subjclass{F.4.1, F.1.1, F.1.3} 

\begin{abstract}
The so-called light logics~\cite{Girard98ic,Asperti98lics,Asperti02tocl} 
have been introduced as logical systems enjoying quite remarkable
normalization properties.
Designing a type assignment system for pure lambda 
calculus from these logics, however, is problematic, as
discussed in~\cite{Baillot04lics}.
In this paper we show that shifting from usual call-by-name to
call-by-value lambda calculus allows regaining strong 
connections with the underlying logic. This will be done in the
context of Elementary Affine Logic (\EAL), designing a type system 
in natural deduction style assigning \EAL\ formulae to
lambda terms. 
\end{abstract}

\maketitle

\section{Introduction}\label{sect:intro}

The so-called light logics~\cite{Girard98ic,Asperti98lics,Asperti02tocl} 
have been introduced as logical counterparts of complexity classes, 
namely polynomial and elementary time functions. 
After their introduction, they have been shown to be relevant for optimal 
reduction~\cite{Coppola01tlca,Coppola03tlca}, programming language 
design~\cite{Asperti02tocl,Terui01lics} and set theory~\cite{Terui02phd}. 
However, proof languages for these logics, designed through the 
Curry-Howard correspondence, are syntactically quite complex
and can hardly be proposed as programming languages. An 
interesting research challenge is the
design of type systems assigning light logics formulae to 
pure lambda-terms,
forcing the class of typable terms to enjoy the same remarkable 
properties which can be proved for the logical systems. 
The mismatch between $\beta$-reduction in the lambda-calculus 
and cut-elimination in logical systems, however,
makes it difficult to both getting the subject reduction property and 
inheriting the complexity properties from the logic, as discussed 
in~\cite{Baillot04lics}. 
Indeed, $\beta$-reduction is more permissive than the 
restrictive copying
discipline governing calculi directly derived from
light logics. Consider, for example, the following
expression in $\Lambda_{\mathsf{LA}}$ (see~\cite{Terui01lics}):
$$
\mathit{let}\;\;M\;\;\mathit{be}\;\;{!x}\;\;\mathit{in}\;\;N
$$
This rewrites to $N\{x/P\}$ if $M$ is $!P$,
but is not a redex if $M$ is, say, an
application. 
It is not possible to map this mechanism into pure
lambda calculus. The solution proposed by 
Baillot and Terui~\cite{Baillot04lics} in the context of Light
Affine Logic (\LAL, see~\cite{Asperti98lics,Asperti02tocl}) 
consists in defining a type-system which is 
strictly more restrictive than the one 
induced by the logic. In this way, they both achieve subject
reduction and a strong notion of polynomial time soundness.\par

Now, notice that mapping the above let expression to 
the application 
$$
(\lambda x.N)M
$$ 
is not meaningless if we shift from the usual call-by-name lambda calculus 
to the call-by-value lambda calculus, where $(\lambda x.N)M$
is not necessarily a redex. 
In this paper, we make the best of this idea, introducing a type
assignment system, that we call \ETAS, assigning formulae of 
Elementary Affine Logic (\EAL) to lambda-terms.
\ETAS\ enjoys the following 
remarkable properties:
\begin{enumerate}[$\bullet$]
  \item
  The language of types coincides with the language of \EAL\ formulae.
  \item
  Every proof of \EAL\ can be mapped into
  a type derivation in \ETAS.
  \item 
  (Call-by-value) subject reduction holds.
  \item
  Elementary bounds can be given on the length of any
  reduction sequence involving a typable term. A similar
  bound holds on the size of terms involved in the reduction.
  \item
  Type inference is decidable and the principal typings can be inferred in polynomial time.
\end{enumerate}
The basic idea underlying \ETAS\ consists in partitioning
premises into three classes, depending on whether they are used 
once, or more than once, or they are in an intermediate status.
We believe this approach can work for other light logics too, and 
some hints will be given.

The proposed system is the first one satisfying the above properties
for light logics. A notion of typability for lambda calculus has 
been defined in~\cite{Coppola01tlca,Coppola03tlca,Baillot05tlca} 
for \EAL, and in~\cite{Baillot02ifip} for \LAL. 
Type inference has been proved to be decidable. In both cases, 
however, the notion of typability is not 
preserved by $\beta$-reduction. 

Noticeably, the proposed approach can be extended to Light
Affine Logic and Soft Affine Logic (\SAL, see~\cite{Baillot04fossacs,Lafont02SLL}).

A preliminary version of the present paper is \cite{Cop-DLag-Ron:EALCBV-05}: here some results have been
improved. In particular a new type inference algorithm is presented, and its complexity is analyzed: it turns out that
our type inference algorithm for \EAL\ has a complexity of the same order than the type inference for simple types.
Moreover some discussions about possible extensions of this method
have been added.

The paper is organized as follows: in Section~\ref{sect:existingworks} a comparison with existing work is made,
in Section~\ref{sect:preliminaries} some preliminary 
notions about \EAL\ and lambda calculus are recalled,
in Section~\ref{sect:ETAS} the \ETAS\ system is introduced, and in Section~\ref{sect:BOUNDS} and~\ref{sect:typeinf} its main properties, namely complexity bounds and a type inference algorithm, are explained. 
Section~\ref{sect:completeness} presents two possible extensions, allowing to
reach completeness for elementary functions,  
and in Section~\ref{sect:otherlightlogics} some hints on how to apply our idea to other light logics are given.
Section~\ref{sect:conclusions} contains a short summary of the obtained results.
 

\section{Comparison with Existing Work}\label{sect:existingworks}
This work is not the first contribution on type systems derived from
light logics. We should mention works on (principal) type inference for
Elementary Affine Logic and Light Affine Logic by Baillot, Coppola,
Martini and Ronchi Della Rocca~\cite{Coppola01tlca,Baillot02ifip,Coppola03tlca}.
There, the goal was basically proving decidability of type inference. The proposed type
systems were the ones directly induced from logical systems. Typable lambda
terms can be efficiently reduced using Lamping's abstract algorithm, although
basic properties like subject reduction and complexity bounds were
not necessarily verified.

Baillot and Terui~\cite{Baillot04lics} proposed
a type system inspired by light logics and 
enjoying subject reduction and polynomial time normalization, called
Dual Light Affine Logic (\DLAL). The underlying 
term system is ordinary lambda-calculus with usual, call-by-name reduction. 
They've recently proved~\cite{Atassi06} that system $\mathsf{F}$ terms can be 
decorated with light types in polynomial time, following similar work
for Elementary Affine Logic~\cite{Baillot05tlca}.

Our approach should be understood as complementary to the one
proposed by Baillot
and Terui~\cite{Baillot04lics}: we exploit call-by-value evaluation
and this allows us to stay closer to logical systems. On the other
hand, the way our type system is formulated prevents us from getting
the full power of second-order quantification. Nevertheless, second-order
quantification is not as crucial with call-by-value as with usual call-by-name,
where data can be encoded in Church-style, following 
Berarducci and B\"ohm~\cite{Bohm85tcs}.

\section{Preliminaries}\label{sect:preliminaries}
In this section we recall the proof calculus for Elementary Affine Logic, $\LambdaEA$. 
Then relations with the lambda calculus will be discussed.  

\begin{defi}[Terms, Types, Contexts]\hfil
\begin{enumerate}[i)] 
 \item The set $\Lambda$ of terms of the lambda calculus is
  defined by the grammar $M::= x \mid MM \mid \lambda x.M$, where 
  $x\in\mathit{Var}$, a countable set of variables.  
\item The grammar generating the set $\LambdaEA$ of terms of the 
  Elementary Lambda Calculus (\emph{\ea-terms} for short)
  is obtained from the previous one by adding rules 
  $$
  M ::=\;\promote{M}{M}{x}{M}{x} \mid \contr{M}{M}{x}{y}
  $$
  and by constraining all variables to occur at most once. These
  two constructs interpret promotion and contraction, respectively.
\item \emph{\ea-types} are formulae of 
  (Propositional) Elementary Affine Logic (hereby \EAL),
  and are generated by the grammar $A ::= a\mid A
  \lin A \mid\;!A$ where $a$ belongs to a countable set of basic
  type constants.  \ea-types will be ranged over by $A,B,C$.
\item \emph{\ea-contexts} are finite subsets of \ea-type
  assignments to variables, where all variables are different.  
  Contexts are ranged over by $\Phi$,
  $\Psi$. If $\Phi = \{x_{1}:A_{1},\ldots,x_{n}:A_{n}\}$, then $dom
  (\Phi) = \{x_{1},\ldots,x_{n}\} $.  Two contexts are \emph{disjoint}
  if their domains have empty intersection.
\item The type assignment system in natural-deduction style for \ea-terms
  ($\vdashNEAL$ for short) assigns \ea-types
  to \ea-terms. The system is given in Table \ref{tab:neal-type-system}. With a slight
  abuse of notation, we will denote by \NEAL\ the set of typable
  terms in $\LambdaEA$.
\end{enumerate}
\end{defi}
  
\noindent Both $\LambdaEA$ and $\Lambda$ 
are ranged over by $M,N,P,Q$. The context should help avoiding
ambiguities.  Symbol $\equiv$ denotes syntactic identity on both types
and terms. The identity on terms is taken modulo names of bound
variables and modulo permutation in the list
${}^{M}/x,\cdots,{}^{M}/x$ inside $!\left(M\right)[{}^{M}/x,\ldots,$
${}^{M}/x]$.
  
\begin{table*}
\begin{center}
\fbox{
\begin{minipage}{.97\textwidth}
\vspace{3mm}
$$
\begin{array}{rcl}
\infer[A]{\Phi,x:A\vdashNEAL x:A}{} 
&\hspace{3mm} &
\infer[C]{\Phi,\Psi\vdashNEAL
  \contr{N}{M}{x}{y}:B}{\Phi\vdashNEAL M:!A &
  \Psi,x:!A,y:!A\vdashNEAL N:B}
\end{array}
$$
$$ 
\begin{array}{rcl}
\infer[I_\lin]{\Phi\vdashNEAL \lam
  x.M:A\lin  B}{\Phi,x:A\vdashNEAL M:B}
&\hspace{1mm} &
\infer[E_\lin]{\Phi,\Psi\vdashNEAL M\ N:B}{\Phi\vdashNEAL M:A\lin B &
  \Psi\vdashNEAL N:A}
\end{array}
$$
$$
\infer[!]{\Phi,\Psi_1,\ldots,\Psi_n\vdashNEAL
  \promote{N}{M_1}{x_1}{M_n}{x_n}:!B}{%
  \Psi_1\vdashNEAL M_1:!A_1\quad\cdots\quad
  \Psi_n\vdashNEAL M_n:!A_n& 
  x_1:A_1,\ldots,x_n:A_n\vdashNEAL N:B}
$$
\vspace{0mm}
\end{minipage}}
\end{center}   
\caption{Type assignment system for \ea-terms.
Contexts with different names are intended to be disjoint.}\label{tab:neal-type-system}
\end{table*}

\begin{table*}
\begin{center}
\fbox{
  \begin{minipage}{0.97\textwidth}
 \begin{displaymath}
 \begin{array}{l}
 (\lam x.M\ N) \qquad \to_\beta \qquad M\{N/x\}\\\\
 \contr{N}{\promote{M}{M_1}{x_1}{M_n}{x_n}}{x}{y} \qquad
 \to_{\mathsf{dup}} \\ 
 \qquad \contr{\ldots\contr{N\{ ^{\promote{M}{x_1'}{x_1}{x_n'}{x_n}}/x\}
     \{ ^{\promote{M'}{y_1'}{y_1}{y_n'}{y_n}}/y\}}{M_1}{x_1'}{y_1'}
   \cdots}{M_n}{x_n'}{y_n'}\\\\
 !(M)[ ^{M_1}/x_1,\cdots,
 ^{\promote{N}{P_1}{y_1}{P_m}{y_m}}/x_i,\cdots, ^{M_n}/x_n] \qquad
 \to_{!-!}\\ 
 \qquad  !(M\{N/x_i\})[ ^{M_1}/x_1,\cdots, ^{P_1}/y_1, \cdots, ^{P_m}/y_m,
 \cdots ^{M_n}/x_n]\\\\
 (\contr{M}{M_1}{x_1}{x_2}\ N) \qquad \to_{@-\mathsf{c}}
 \qquad \contr{(M\{x_1'/x_1,x_2'/x_2\}\ N)}{M_1}{x_1'}{x_2'}\\\\
 (M\ \contr{N}{N_1}{x_1}{x_2}) \qquad \to_{@-\mathsf{c}}
 \qquad \contr{(M\ N\{x_1'/x_1,x_2'/x_2\})}{N_1}{x_1'}{x_2'}\\\\
 !(M)[ ^{M_1}/x_1,\cdots,
 ^{\contr{M_i}{N}{y}{z}}/x_i,\cdots, ^{M_n}/x_n] \qquad
 \to_{!-c}\\
 \qquad \contr{!(M)[ ^{M_1}/x_1,\cdots, ^{M_i\{y'/y,z'/z\}}/x_i,\cdots,
   ^{M_n}/x_n]}{N}{y'}{z'}
 \end{array}
 \end{displaymath}
 \begin{displaymath}
 \begin{array}{l}
 \contr{M}{\contr{N}{P}{y_1}{y_2}}{x_1}{x_2} \qquad
 \to_{\mathsf{c}-\mathsf{c}} \qquad
 \contr{\contr{M}{N\{y_1'/y_1,y_2'/y_2\}}{x_1}{x_2}}{P}{y_1'}{y_2'}\\\\
 \lam x.\contr{M}{N}{y}{z} \qquad \to_{\lam-\mathsf{c}} \qquad
 \contr{\lam x.M}{N}{y}{z} \mbox{\textrm{ where }}
 x\notin\mathtt{FV}(N)
 \end{array}
 \end{displaymath}
 where $M'$ in the $\to_{\mathsf{dup}}$-rule is obtained from $M$
 replacing all its free variables with fresh ones ($x_i$ is replaced
 with $y_i$); $x_1'$ and $x_2'$ in the $\to_{@-\mathsf{c}}$-rule, $y'$
 and $z'$ in the $\to_{!-c}$-rule and $y_1',y_2'$ in the
 $\to_{\mathsf{c}-\mathsf{c}}$-rule are fresh variables.
  \end{minipage}}
\end{center}
\caption{Normalization rules in $\Lambda^{\ea}$.}\label{fig:ea-reduction} 
\end{table*}

On $\Lambda$, both the call-by-name and the call-by-value
$\beta$-reduction will be used, according to the following definition.
\vfill\eject

\begin{defi}[Reduction]\hfill
\begin{enumerate}[i)]
\item We refer to the contextual
closure of the rule $(\lambda x. M) N\!\bn\! M\{ N/x\}  $, where $M\{ N/x\}$ denotes the
capture free substitution of $N$ to the free occurrences of $x$ in
$M$, as the \emph{call-by-name $\beta$-reduction};
\item \emph{Values} are generated by the grammar
$V::=x\mid\lambda x.M$ where $x$ ranges over $\mathit{Var}$ and $M$ ranges
over $\Lambda$. $\mathcal{V}$ is the set of all values. Values are denoted by $V,U,W$.
 The {\em call-by-value} $\beta$-reduction is the contextual closure of the
rule $(\lambda x. M) V \bv M\{ V/x\}$ where $V$ ranges over values.
\item Let $t \in \{n,v \}$; symbols $\rightarrow^{+}_{t}$ and $\rightarrow^{*}_{t}$
denote the transitive closure and the symmetric and transitive closure of
$\rightarrow_{t}$, respectively.            
\end{enumerate}  
\end{defi}

\noindent A term in $\LambdaEA$ can be transformed naturally to a term in $\Lambda$ by performing
the substitutions which are explicit in it, and forgetting the modality $!$. Formally,
the translation function $(\cdot)^*:\LambdaEA\to\Lambda$ is
defined by induction on the structure of \ea-terms as follows:
\begin{eqnarray*}
  (x)^* &=& x \\
  (\lam x.M)^* &=& \lam x.(M)^* \\
  (M N)^* &=& (M)^* (N)^*\\
  (\contr{M}{N}{x_1}{x_2})^* &=& (M)^*\{(N)^*/x_1,(N)^*/x_2\} \\
  (\promote{N}{M_1}{x_1}{M_n}{x_n})^* &=&
    (N)^*\{(M_1)^*/x_1,\ldots,(M_n)^*/x_n\}
\end{eqnarray*} 
where $M\{{}^{M_1}/x_1,\cdots,{}^{M_n}/x_n\}$ denotes the
simultaneous substitution of all free occurrences of $x_i$ by $M_i$ ($1\leq i \leq n$).

The map $(\cdot)^*$ easily induces a type-assignment system $\NEAL^*$ for
pure lambda-calculus: take \NEAL\ and replace every occurrence of
a term $M$ by $M^*$ in every rule. Normalization in \NEAL (see Table~\ref{fig:ea-reduction}), however,
is different from normalization in lambda-calculus --- $\NEAL^*$ does not even satisfy subject-reduction. Moreover,
lambda calculus does not provide any mechanism for sharing:
the argument is duplicated as soon as $\beta$-reduction fires. This,
in turn, prevents from analyzing normalization in the lambda calculus
using the same techniques used in logical systems. This phenomenon
has catastrophic consequences in the context of Light Affine Logic,
where polynomial time bounds cannot be transferred from the logic
to pure lambda-calculus~\cite{Baillot04lics}.

Consider now a different translation $(\cdot)^\#:\LambdaEA\to\Lambda$:
\begin{eqnarray*}
  (x)^\# &=& x\\
  (\lam x.M)^\# &=& \lam x.(M)^\#\\ 
  (M N)^\# &=& (M)^\#\ (N)^\#\\
  (\contr{N}{M}{x}{y})^\# &=& 
    \left\{
      \begin{array}{ll}
        (N)^\#\{M/x,M/y\} & \mbox{\emph{if $M$ is a variable}} \\
        (\lambda z.(N)^\#\{z/x,z/y\})(M)^\# & \mbox{\emph{otherwise}}
      \end{array}
\right. \\
  (\promote{N}{M_1}{x_1}{M_n}{x_n})^\# &=&
    \left\{
      \begin{array}{l}
        (N)^\#\hspace{.5cm}\mbox{\emph{if $n=0$}} \\
        (\promote{N}{M_2}{x_2}{M_n}{x_n})^\#\{M_1/x_1\}\\
        \hspace{.5cm}\mbox{\emph{if $n\geq 1$ and $M_1$ is a variable}} \\
        (\lambda x_1.(\promote{N}{M_2}{x_2}{M_n}{x_n})^\#)(M_1)^\#\\
        \hspace{.5cm}\mbox{\emph{if $n\geq 1$ and $M_1$ is not a variable}} 
      \end{array}
\right.
\end{eqnarray*}
Please observe that while $(\cdot)^\#$ maps $\contr{N}{M}{x}{y}$
and $\promote{N}{M_1}{x_1}{M_n}{x_n}$ to applications (except when the
arguments are variables), $(\cdot)^*$
maps them to terms obtained by substitution. Indeed, if lambda calculus 
is endowed with ordinary $\beta$-reduction, the two translations are 
almost equivalent:
\begin{lem}
For every \ea-term $M$, $(M)^\# \bn^{*}  (M)^*$.
\end{lem}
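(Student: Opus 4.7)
The plan is a straightforward structural induction on the $\ea$-term $M$, using the fact that a one-step $\beta$-redex $(\lambda z.N)P \bn N\{P/z\}$ lets us simulate the let-binding introduced by $(\cdot)^\#$. The key auxiliary fact I will tacitly rely on is that $\bn^*$ is a congruence closed under substitution in both arguments, i.e.\ if $N \bn^* N'$ and $P \bn^* P'$ then $N\{P/x\} \bn^* N'\{P'/x\}$; this is standard for call-by-name $\beta$-reduction.

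The base case $M \equiv x$ is immediate since both translations equal $x$. The cases $M \equiv \lambda x.N$ and $M \equiv P\,Q$ follow from the IH together with the congruence property, since both $(\cdot)^\#$ and $(\cdot)^*$ commute with abstraction and application on these constructors. For $M \equiv \contr{N}{P}{x}{y}$, I split on whether $P$ is a variable or not. If $P$ is a variable $z$, both translations produce $(N)^{\#/*}\{z/x,z/y\}$ and we conclude by IH on $N$ plus the substitution lemma. If $P$ is not a variable, then
\[
(M)^\# \;\equiv\; (\lambda z.(N)^\#\{z/x,z/y\})(P)^\# \;\bn\; (N)^\#\{(P)^\#/x,(P)^\#/y\},
\]
and applying the IH to $N$ and $P$ together with the substitution lemma drives this to $(N)^*\{(P)^*/x,(P)^*/y\} \equiv (M)^*$.

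For the promotion case $M \equiv \promote{N}{M_1}{x_1}{M_n}{x_n}$, I do a secondary induction on $n$. The case $n=0$ is exactly the IH on $N$. For $n\ge 1$, if $M_1$ is a variable $z$, then $(M)^\# \equiv (\promote{N}{M_2}{x_2}{M_n}{x_n})^\# \{z/x_1\}$, and the inner IH together with the substitution lemma gives the reduction to $(N)^*\{z/x_1, (M_2)^*/x_2,\ldots,(M_n)^*/x_n\} \equiv (M)^*$. If $M_1$ is not a variable, $(M)^\#$ $\beta$-reduces in one step to $(\promote{N}{M_2}{x_2}{M_n}{x_n})^\# \{(M_1)^\#/x_1\}$, and we finish exactly as in the contraction case by combining the inner induction on $n$ with the IH on $M_1$.

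The only mildly delicate point is the bookkeeping in the promotion case, where the secondary induction on the length of the substitution list has to be set up so that peeling off $M_1$ yields a genuinely smaller term to which the IH applies; once this is arranged, everything reduces to the observation that each $(\lambda z.\cdot)(\cdot)$ introduced by $(\cdot)^\#$ is a call-by-name redex that can be contracted to recover exactly the substitution performed by $(\cdot)^*$.
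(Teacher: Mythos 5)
Your proof is correct, and it is exactly the induction on the structure of $M$ that the paper itself invokes (the paper states the proof simply as ``by induction on $M$''), with the expected case analysis on whether the substituted term is a variable and the standard fact that $\bn^*$ (being the full contextual closure here) is compatible with substitution. The bookkeeping in the promotion case, peeling off one binding at a time and using that the bound $x_i$ do not occur in the $M_j$, is handled appropriately.
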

\begin{proof}
By induction on $M$.
\end{proof}
However, it is certainly not true that $(M)^\# \bv^{*}  (M)^*$.
The map $(\cdot)^\#$, differently from $(\cdot)^*$, does not cause
an exponential blowup on the length of terms.
The \emph{length} $L(M)$ of a term $M$ is defined
inductively as follows: 
\begin{eqnarray*}
  L(x) &=& 1 \\
  L(\lam x.M) &=& 1 + L(M)\\
  L(M\ N) &=& 1 + L(M) + L(N)
\end{eqnarray*}
The same definition can be extended to $\ea$-terms by way
of the following equations:
\begin{eqnarray*}
  L(\promotewa{M}) &=& L(M) +1\\
  L(\promote{M}{M_1}{x_1}{M_n}{x_n}) &=&
    L(\promote{M}{M_1}{x_1}{M_{n-1}}{x_{n-1}})+
    L(M_n) + 1\\
  L(\contr{M}{N}{x}{y}) &=& L(M) + L(N) +1
\end{eqnarray*}  

\begin{prop}\label{prop:corrispsize}
For every $N\in\LambdaEA$, $L(N^\#)\leq 2L(N)$.
\end{prop}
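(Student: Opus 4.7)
I would prove this by structural induction on $N$, with a secondary induction on $n$ for the promotion construct $\promote{M}{M_1}{x_1}{M_n}{x_n}$. The key quantitative observation is the following: the defining clauses of $L$ assign a cost of $+1$ to each contraction node and each promotion-with-substitution node, and the target bound multiplies this by $2$, giving us a budget of $+2$ exactly matching the overhead ($\lambda$-abstraction plus application) that $(\cdot)^\#$ introduces in its non-variable sub-cases.

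For the base case $N \equiv x$ and the structural cases $N \equiv \lambda x.M$ and $N \equiv M\,P$, the inequality follows immediately from the inductive hypothesis since $(\cdot)^\#$ is homomorphic on these constructors. The contraction case $N \equiv \contr{N'}{M}{x}{y}$ splits according to the definition of $(\cdot)^\#$: if $M$ is a variable, then $N^\#$ is $(N')^\#\{M/x,M/y\}$, whose length equals $L((N')^\#)$ because substituting a variable for a variable does not change the length; thus $L(N^\#) \leq 2L(N') \leq 2L(N)$. If $M$ is not a variable, then $L(N^\#) = L((\lambda z.(N')^\#\{z/x,z/y\})\,(M)^\#) = 2 + L((N')^\#) + L(M^\#) \leq 2 + 2L(N') + 2L(M) = 2L(N)$ by the inductive hypothesis applied to $N'$ and $M$.

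For the promotion case $N \equiv \promote{N'}{M_1}{x_1}{M_n}{x_n}$, I proceed by induction on $n$. When $n = 0$ we have $N^\# = (N')^\#$ and $L(N) = L(N') + 1$, so $L(N^\#) \leq 2L(N') < 2L(N)$. When $n \geq 1$, let $N'' \equiv \promote{N'}{M_2}{x_2}{M_n}{x_n}$; by unwinding the defining recurrence for $L$, one has $L(N) = L(N'') + L(M_1) + 1$. If $M_1$ is a variable, then $N^\# = (N'')^\#\{M_1/x_1\}$, whose length equals $L((N'')^\#)$, so by the inner inductive hypothesis $L(N^\#) \leq 2L(N'') \leq 2L(N)$. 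If $M_1$ is not a variable, then $L(N^\#) = 2 + L((N'')^\#) + L(M_1^\#) \leq 2 + 2L(N'') + 2L(M_1) = 2L(N)$.

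I do not expect any genuine obstacle: the only care needed is to notice that substitution of a variable for a variable leaves the length unchanged, and that the definition of $L$ on promotion is invariant under reordering of the $M_i$ (so that peeling off $M_1$ in the induction is legitimate), both of which are immediate from the formulas. The proof is a purely arithmetic bookkeeping induction.
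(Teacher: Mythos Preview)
Your proof is correct and follows essentially the same structural induction (with a secondary induction on $n$ for promotion) as the paper's own argument. The only cosmetic difference is that in the promotion case you peel off $M_1$ (matching the definition of $(\cdot)^\#$) whereas the paper peels off $M_n$ (matching the recursive clause for $L$); both rely on the reordering invariance you correctly flag, and the arithmetic is identical.
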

\begin{proof}
By induction on $N$. The cases for variables, abstractions
and applications are trivial. Let us now consider the
other two inductive cases. Suppose 
$N=\contr{P}{Q}{x}{y}$. If $Q$ is a variable, then 
$L(N^\#)=L(P^\#)\leq 2L(P)\leq 2L(N)$. If $Q$ is not
a variable, then $L(N^\#)=L(P^\#)+L(Q^\#)+2\leq
2L(P)+2L(Q)+2=2(L(P)+L(Q)+1)=2L(N)$.
If, on the other hand, $N=\promote{M}{M_1}{x_1}{M_n}{x_n}$,
then we can proceed by induction on $n$. If $n=0$, then
the inequality is trivially verified. If, on the other hand,
$n>0$, then we must distinguish two different cases:
if $M_n$ is a variable, then the
inequality is trivially satisfied; if $M_n$ is not a variable,
then $N^\#$ is 
$(\lambda x_n.(\promote{M}{M_1}{x_1}{M_{n-1}}{x_{n-1}})^\#))M_n^\#$
and, by the induction hypothesis on $n$ and $M_n$, we get
\begin{eqnarray*}
L(N^\#)&=&2+L((\promote{M}{M_1}{x_1}{M_{n-1}}{x_{n-1}})^\#)+
L(M_n^\#)\\&\leq& 2+2L(\promote{M}{M_1}{x_1}{M_{n-1}}{x_{n-1}})+
2L(M_n^\#)\\ &=&2L(N)
\end{eqnarray*}
This concludes the proof.
\end{proof}

\section{The Elementary Type Assignment System}\label{sect:ETAS}
In this section we will define a type assignment system typing
lambda-terms with \EAL\ formulae.  We want the system to be
\emph{almost} syntax directed, the difficulty being the
handling of $C$ and $!$ rules. 
This is solved by splitting the context into three parts,
the \emph{linear} context, the \emph{modal} context, and the
\emph{parking} context. In particular the parking context is used to
keep track of premises which must become modal in the future.

\begin{defi}\hfill
\begin{enumerate}[i)]
\item An \EAL\ formula $A$ is {\em modal} if 
  $A\equiv !B$ for some $B$, it is {\em linear} otherwise.
\item A context is a set of pairs $x:A$ where $x$ is a variable and
  $A$ is an \ea-type, where all variables are disjoint. A context is {\em linear} if it assigns linear
  \ea-types to variables, while it is {\em modal} if it assigns modal
  \ea-types to variables. If $\Phi$ is a context, $\Phi^{L} $ and
  $\Phi^{I} $ denote the linear and modal sub-contexts of $\Phi$,
  respectively.
\item  The Elementary Type Assignment System (\ETAS) proves
  statements like $\Gamma \mid \Delta \mid \Theta \vdash M:
  A$ where $\Gamma$ and $\Theta$ are linear contexts and $\Delta$ is
  a modal context. The contexts have disjoint variables. The rules of the system are shown in
  Table~\ref{tab:lambda-type-system}.  In what follows,
  $\Gamma $, $\Delta$ and $\Theta$ will range over linear,
  modal and parking contexts respectively.
\item  A \emph{typing judgement} for $M$ is a statement of the kind $\Gamma \mid \Delta \mid \emptyset
  \vdash M: A$. A term $M \in \Lambda$ is \emph{\ea-typable} if there is a typing for it.
  Type derivations built according to rules in Table~\ref{tab:lambda-type-system} will
  be denoted with greek letters like $\pi$, $\rho$ and $\sigma$. If $\pi$ is a type derivation
  with conclusion $\Gamma \mid \Delta \mid \Theta\vdash M: A$, we write
  $\pi:\Gamma \mid \Delta \mid \Theta\vdash M: A$.
\end{enumerate}  
\end{defi}  

\begin{table*} 
\begin{center}
\fbox{
\begin{minipage}{.95\textwidth}
\vspace{3mm}
$$
\begin{array}{rcl}
\infer[A^L]{\Gamma, x:A\mid\Delta\mid\Theta\vdash x:A}{}
&\hspace{8mm}& 
\infer[A^P]{\Gamma\mid\Delta\mid x:A,\Theta\vdash x:A}{}
\end{array}
$$
$$
\begin{array}{rcl}
\infer[I_\lin^L]{\Gamma\mid\Delta\mid\Theta\vdash\lam x.M:A\lin B}{
\Gamma,x:A\mid\Delta\mid\Theta\vdash M: B}
&\hspace{6mm}&
\infer[I_\lin^I]{\Gamma\mid\Delta\mid\Theta\vdash\lam x.M:A\lin B}{
\Gamma\mid\Delta,x:A\mid\Theta\vdash M: B}
\end{array}
$$
$$
\infer[E_\lin]{\Gamma_1,\Gamma_2\mid\Delta\mid\Theta\vdash M\ N: B}{
\Gamma_1\mid\Delta\mid\Theta\vdash M:A\lin B 
&
\Gamma_2\mid\Delta\mid\Theta\vdash N:A}
$$
$$
\infer[!]{\Gamma_2\mid !\Gamma_1,!\Delta_1,!\Theta_1,
\Delta_2\mid\Theta_2\vdash M:!A}{\Gamma_1\mid\Delta_1\mid\Theta_1\vdash M:A}
$$
\vspace{0mm}
\end{minipage}}
\end{center}
\caption{The Elementary Type Assignment System (\ETAS). Contexts with different names
are intended to be disjoint.  }\label{tab:lambda-type-system}
\end{table*}

\noindent Rules $A^L$ and $A^P$ (see Table~\ref{tab:lambda-type-system})
are two variations on the classical axiom
rule. Notice that a third axiom rule
$$
\infer[A^I]{\Gamma\mid x:!A,\Delta\mid\Theta\vdash x:!A}{}
$$
is derivable. Abstractions cannot be performed on variables
in the parking context. The rule $E_\linear$ is the standard rule 
for application. Rule $!$ is 
derived from the one traditionally found in sequent calculi and
is weaker than the rule induced by \NEAL\ via $(\cdot)^*$.
Nevertheless,
it is sufficient for our purposes and (almost) syntax-directed.
The definition of an \ea-typable term takes into account the 
auxiliary role of the parking context. 

\begin{exa}\label{exa:terms}
Let us illustrate the r\^oles of the various \ETAS\ rules by way of an
example. Consider the Church's numeral $\underline{2}\equiv\lambda x.\lambda y.x(xy)$,
let $B$ be $!(A\multimap A)$ and $C$ be $B\multimap B$. A type derivation for
$\underline{2}$ is the following:
{\footnotesize
$$
\infer[I_\lin^I]
  {\emptyset\mid\emptyset\mid\emptyset\vdash \lambda x.\lambda y.x(xy):!C\multimap !C}
  {
    \infer[!]
    {\emptyset\mid x:!C\mid\emptyset\vdash \lambda y.x(xy):!(B\multimap B)}
    {
      \infer[I_\lin^I]
      {\emptyset\mid\emptyset\mid x:C\vdash \lambda y.x(xy):B\multimap B}
      {
        \infer[E_\lin]
        {\emptyset\mid y:B\mid x:C\vdash x(xy):B}
        {
          \infer[\!E_\lin]
          {\emptyset\mid y:B\mid x:C\vdash xy:B}
          {
            \infer[A^P]
            {\emptyset\mid y:B\mid x:C\vdash x:C}
            {}
            &
            \infer[!]
            {\emptyset\mid y:B\mid x:C\vdash y:B}
            {
              \infer[A^L\!]
              {y:A\multimap\!A\mid\emptyset\mid\emptyset\vdash y:A\multimap\!A}
              {}
            }
          }
          &
          \infer[\!A^P]
          {\emptyset\mid y:B\mid x:C\vdash x:C}
          {}
        }
      }
    }
  }
$$
}%
Call this type derivation $\pi(2,B)$.
But Church numerals can be typed slightly differently. Consider the
term $\underline{3}\equiv\lambda x.\lambda y.x(x(xy))$ and the
following type derivation (where $D$ stands for $A\lin A$):
{\footnotesize
$$
\infer[I_\lin^I]
  {\emptyset\mid\emptyset\mid\emptyset\vdash \lambda x.\lambda y.x(x(xy)):C}
  {
    \infer[!]
    {\emptyset\mid x:B\mid \emptyset\vdash \lambda y.x(x(xy)):B}
    {
      \infer[I_\lin^I]
      {\emptyset\mid\emptyset\mid x:D\vdash \lambda y.x(x(xy)):D}
      {
        \infer[E_\lin]
        {y:A\mid\emptyset\mid x:D\vdash x(x(xy)):A}
        {
          \infer[E_\lin]
          {y:A\mid\emptyset\mid x:D\vdash x(xy):A}
          {
            \infer[E_\lin]
            {y:A\mid\emptyset\mid x:D\vdash xy:A}
            {
              \infer[A^P]
              {\emptyset\mid\emptyset\mid x:D\vdash x:D}
              {}
              &
              \infer[A^L]
              {y:A\mid\emptyset\mid x:D\vdash y:A}
              {}
            }
            &
            \infer[A^P]
            {\emptyset\mid\emptyset\mid x:D\vdash x:D}
            {}
          }
          &
          \infer[A^P]
          {\emptyset\mid\emptyset\mid x:D\vdash x:D}
          {}
        }
      }
    }
  }
$$}
This is $\pi(3,A)$
This way we can give the application $\underline{2}\;\underline{3}$ the
type $!C$.
\end{exa}

This system does not satisfy call-by-name subject-reduction. 
Consider, for example, the lambda term
$M \equiv (\lambda x.yxx)(wz)$. 
A typing for it is the following:
$$y:!A\multimap !A\multimap A,w:A \multimap!A,z:A
\;|\;\emptyset\;|\;\emptyset \vdash M: A$$
$M \bn N$, where $N\equiv y(wz)(wz)$ and
$y:!A\multimap !A\multimap A,w:A \multimap!A,z:A
\;|\;\emptyset\;|\;\emptyset \not\vdash N: A $, because 
rule $E_\lin$ requires the two linear contexts to be disjoint. 
Note that
both $\emptyset \;|\;\emptyset\;|\;y:!A\multimap !A\multimap A,w:A
\multimap!A,z:A \vdash M: A$ and $\emptyset
\;|\;\emptyset\;|\;y:!A\multimap !A\multimap A,w:A \multimap!A,z:A
\vdash N:A$, but 
these are not 
\ea-typings. 

The subject reduction problem, however, disappears when
switching from call-by-name to call-by-value reduction.

\begin{lem}[Weakening Lemma]\label{lemma:weakening}
If $\pi:\Gamma_1\mid\Delta_1\mid\Theta_1\vdash M:A$, then there is
$\sigma:\Gamma_1,\Gamma_2\mid\Delta_1,\Delta_2\mid\Theta_1,\Theta_2\vdash M:A$,
for every $\Gamma_{2}, \Delta_{2}, \Theta_{2}$ disjoint from each other and from 
$\Gamma_{1}, \Delta_{1}, \Theta_{1}$.
Moreover, the number of rule instances in $\sigma$ is identical to the number of rule
instances in $\pi$.
\end{lem}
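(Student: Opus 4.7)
The plan is to proceed by straightforward induction on the structure of the derivation $\pi$, showing in each case that we can build $\sigma$ with the same rule instances by distributing the additional contexts $\Gamma_2, \Delta_2, \Theta_2$ appropriately, using the disjointness hypothesis to ensure all side conditions remain met.

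For the axiom cases $A^L$ and $A^P$, the conclusion already permits arbitrary extra context entries on the non-principal sides, so we simply include $\Gamma_2, \Delta_2, \Theta_2$ in the new axiom instance. For the abstraction rules $I_\lin^L$ and $I_\lin^I$, I apply the induction hypothesis to the unique premise with the same additional contexts (the abstracted variable, which is fresh, remains disjoint from $\Gamma_2, \Delta_2, \Theta_2$), then apply the same rule. For the application rule $E_\lin$, the two premises share the modal and parking contexts, so I apply the induction hypothesis to both premises adding $\Delta_2$ and $\Theta_2$ to each, and I add the entire $\Gamma_2$ to (say) the left premise only, adding nothing linear to the right; the conclusion of $E_\lin$ then has the desired shape since the two linear contexts still have disjoint domains.

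The interesting case is the $!$ rule, whose conclusion already contains auxiliary slots $\Gamma_2, \Delta_2, \Theta_2$ (using the paper's notation for that rule) that do not appear in the premise. Here I do \emph{not} need to invoke the induction hypothesis on the premise at all: to weaken the conclusion with a further linear context $\Gamma_2'$, modal context $\Delta_2'$, and parking context $\Theta_2'$, I simply enlarge these auxiliary slots, placing $\Gamma_2'$ with the existing linear $\Gamma_2$, $\Delta_2'$ with the existing modal $\Delta_2$ (this is well-formed because $\Delta_2'$ is, by definition of a modal context, composed entirely of $!$-formulas), and $\Theta_2'$ with $\Theta_2$. The single premise is reused unchanged.

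In every case the constructed $\sigma$ has exactly one rule instance for each rule instance of $\pi$, establishing the final clause about rule counts. The main potential obstacle would be the $!$ rule, since naively one might worry about having to push the weakening through to the premise and thereby invalidate the structure of the rule; but the design of $!$ in \ETAS, with its built-in auxiliary contexts $\Gamma_2, \Delta_2, \Theta_2$, is exactly what makes weakening admissible without changing the premise, which is why the rule-count bound comes out for free.
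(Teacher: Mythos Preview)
Your proof is correct, and in fact the paper states this lemma without proof, treating it as routine; your case analysis on the last rule of $\pi$, with the key observation that the auxiliary slots $\Gamma_2,\Delta_2,\Theta_2$ in the $!$ rule absorb the weakening without touching the premise, is exactly the intended argument.
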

\begin{lem}[Shifting Lemma]\label{lemma:lintopark}
If $\pi:\Gamma,x:A\mid\Delta\mid\Theta\vdash M:B$, then there is
$\sigma:\Gamma\mid\Delta\mid x:A,\Theta\vdash M:B$.
Moreover, the number of rule instances in $\sigma$ is identical to the number of
rule instances in $\pi$.
\end{lem}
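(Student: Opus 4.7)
The plan is to proceed by structural induction on the derivation $\pi$, handling each of the five rules of \ETAS\ (axiom rules $A^L$ and $A^P$, introduction rules $I_\lin^L$ and $I_\lin^I$, elimination rule $E_\lin$, and the $!$ rule). In each case, I will exhibit a derivation $\sigma$ whose last rule either matches $\pi$'s last rule directly (with contexts rearranged so that $x:A$ moves from the linear to the parking position) or is a close variant of it. Throughout, I will track that every rule in $\pi$ produces exactly one corresponding rule in $\sigma$, which gives the preservation of rule count.

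The axiom and binding cases are essentially bookkeeping. For $A^L$, if the axiom variable is $x$ itself, then I convert this single instance of $A^L$ into a single instance of $A^P$, which is precisely what is needed. If instead the axiom variable is some $y\neq x$, then $x:A$ sits in the linear context alongside $y:C$, and I keep the axiom as $A^L$ with $x:A$ moved to the parking context. For $A^P$, the axiom variable necessarily differs from $x$ (since linear and parking contexts are disjoint), so I simply enlarge the parking context by $x:A$ and reapply $A^P$. Both $I_\lin^L$ and $I_\lin^I$ pass $x:A$ through unchanged in the premise, so I apply the induction hypothesis to the premise and reapply the same introduction rule. The $!$ rule is also immediate: since $x:A$ must lie in the conclusion's linear context $\Gamma_2$ (which is fresh with respect to the premise), I simply remove $x:A$ from $\Gamma_2$ and insert it into $\Theta_2$, leaving the premise untouched.

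The only subtle case, and the main obstacle, is $E_\lin$, because the variable $x:A$ lies in exactly one of the two linear premises (by disjointness), say $\Gamma_1$. I apply the induction hypothesis to that premise, obtaining a derivation with $x:A$ now in the parking context. The other premise, however, must be made to agree on the parking context so that $E_\lin$ can fire; for this I invoke Lemma~\ref{lemma:weakening} to add $x:A$ to the parking context of the second premise. Crucially, both the induction hypothesis and the Weakening Lemma preserve rule counts exactly, so combining them and reapplying $E_\lin$ yields a derivation $\sigma$ with the same number of rule instances as $\pi$. The symmetric situation where $x:A\in\Gamma_2$ is handled identically.
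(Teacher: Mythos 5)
Your proof is correct, and it is exactly the routine induction on the derivation that the paper leaves implicit (the lemma is stated without proof): the only non-trivial points — converting $A^L$ into $A^P$ when the axiom variable is $x$, and using the Weakening Lemma (which preserves rule count) to equalize the parking contexts of the two premises of $E_\lin$ — are handled properly, so both the derivability and the rule-count claims go through.
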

\begin{lem}[Substitution Lemma]\label{lemma:substitution}
Suppose $\Gamma_{1} $ and $\Gamma_{2} $ are disjoint contexts. Then:
\begin{enumerate}[\em i)]
  \item
  If $\pi:\Gamma_1,x:A\mid\Delta\mid\Theta\vdash M:B$ and
  $\sigma:\Gamma_2\mid\Delta\mid\Theta\vdash N:A$, then there is
  $\rho:\Gamma_1,\Gamma_2\mid\Delta\mid\Theta\vdash M\{N/x\}:B$.
  \item
  If $\pi:\Gamma\mid\Delta\mid x:A,\Theta\vdash M:B$ and 
  $\sigma:\emptyset\mid\Delta\mid\Theta\vdash N:A$, then there is
  $\rho:\Gamma\mid\Delta\mid \Theta\vdash M\{N/x\}:B$.
  \item
  If $\pi:\Gamma_1\mid\Delta,x:A\mid\Theta\vdash M:B$, 
  $\sigma:\Gamma_2\mid\Delta\mid\Theta\vdash N:A$ and $N\in \mathcal{V}$, then
  there is $\rho:\Gamma_1,\Gamma_2\mid\Delta\mid\Theta\vdash M\{N/x\}:B$.
\end{enumerate}
\end{lem}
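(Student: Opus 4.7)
The plan is to prove the three clauses simultaneously by induction on the structure of the derivation $\pi$, performing a case analysis on its last rule. A simultaneous (rather than three separate) induction is needed because the $!$ rule carries variables between the three zones of the context, so an occurrence $x:A$ in the conclusion may originate in a different zone in the premise, forcing the inductive step in one clause to invoke another. Throughout I freely use the Weakening Lemma~\ref{lemma:weakening} and the Shifting Lemma~\ref{lemma:lintopark}.

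The axiom cases $A^L$ and $A^P$ are immediate: if the subject variable $y$ is distinct from $x$, we re-apply the same axiom with the adjusted contexts; if $y=x$ (possible in clause (i) via $A^L$ and in clause (ii) via $A^P$, but impossible in clause (iii) since no primitive axiom places $x$ in the modal zone), then $M\{N/x\}=N$ and the conclusion follows from $\sigma$ together with weakening. The introduction rules $I_\lin^L$ and $I_\lin^I$ are dealt with by alpha-renaming the bound variable away from $x$ and $\FV(N)$, pushing $x$ into the premise, applying the appropriate clause of the induction hypothesis (after first extending $\sigma$ by weakening to carry the bound variable), and re-applying the rule.

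The $E_\lin$ case splits according to the clause. In clause (i) the linear variable $x$ lies in $\Gamma_1$, which is partitioned disjointly between the two premises, so $x$ appears in exactly one of them; we invoke IH(i) on that premise and leave the other unchanged. In clause (ii), $x$ is in the shared parking context, but the hypothesis that the linear zone of $\sigma$ is empty lets us apply IH(ii) to both premises without duplicating any linear variable. Clause (iii) is the delicate case: the modal variable $x$ is shared by both premises, and a naive twofold use of IH(iii) would duplicate $\Gamma_2$. The resolution exploits $N\in\mathcal{V}$. First, $x$ can actually occur in $M$ only when $A$ is modal ($A\equiv !A'$), since from the modal zone no axiom other than the derived $A^I$ (itself obtained through a $!$ step) can introduce $x$; when $A$ is non-modal we have $M\{N/x\}=M$ and weakening suffices. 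When $A=!A'$, the value $N$ of modal type forces $\sigma$ to end with a $!$ rule whose conclusion's linear zone $\Gamma_2$ is simply added by the weakening built into that rule, and so does not reflect any genuine use of $\Gamma_2$ in $N$. We may therefore replace $\sigma$ by a derivation $\sigma'$ with empty linear zone, apply IH(iii) to both premises using $\sigma'$, re-apply $E_\lin$, and finally reintroduce $\Gamma_2$ by a single weakening step.

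Finally, the $!$ rule closes the induction: the occurrence $x:A$ in the conclusion's modal zone may come from the premise's linear $\Gamma_1$, modal $\Delta_1$, or parking $\Theta_1$ zones, or from the fresh $\Delta_2$ adjoined by the rule itself. In each sub-case we analyze the last rule of $\sigma$ to construct an auxiliary derivation $\sigma'$ matching the shape required by the premise, invoke the appropriate clause of the induction hypothesis, and re-apply the $!$ rule; the clause invoked typically differs from the one being proved, which is exactly what the simultaneous induction is designed to accommodate. The principal obstacle is thus clause (iii) with $E_\lin$, where the hypothesis $N\in\mathcal{V}$ is indispensable.
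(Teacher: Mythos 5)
Your proposal is correct and follows essentially the same route as the paper: a simultaneous induction over the derivation $\pi$, with the crucial observation that in clause (iii) a value $N$ of modal type forces $\sigma$ to end in a $!$ rule, so one can pass to a version of $\sigma$ with empty linear zone, apply the induction hypothesis to both premises of $E_\lin$, and restore $\Gamma_2$ by a single weakening at the end. The only loose point is your $!$ case, where in general one must weaken and shift the premise of $\pi$ as well (not only $\sigma$) to bring both derivations to a common context shape; this is why the paper's Weakening and Shifting Lemmas record that they preserve the number of rule instances, so that the appeal to the induction hypothesis remains legitimate.
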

\begin{proof}
  The first point can be easily proved by induction on the derivation
  for $\Gamma_1,x:A\mid\Delta\mid\Theta\vdash M:B$ using, in
  particular, the Weakening Lemma. 

  Let us prove the second point (by the same
  induction). The case for $A^P$ can be proved by way of the previous
  lemmas. $I^L_\lin$ and $I^I_\lin$ are trivial. $E_\lin$ comes
  directly from the induction hypothesis and
  Lemma~\ref{lemma:weakening}.  $!$ is trivial since $x$ cannot appear
  free in $M$ and so $M\{N/x\}$ is just $M$.

  The third point can
  be proved by induction, too, but it is a bit more difficult. First
  of all, observe that $A$ must be in the form
  $\underbrace{!...!}_{n}C$, with $n\geq 1$. Let us focus on rules
  $E_\lin$ and $!$ (the other ones can be handled easily). Since $N\in\mathcal{V}$,
  the derivation for $\Gamma_2\mid\Delta\mid\Theta\vdash N:A$ must end
  with $A^L$, $A^P$, $I^L_\lin$ or $I^I_\lin$ (depending on the shape
  of $N$), followed by exactly $n$ instances of the $!$ rule, being it
  the only non-syntax-directed rule. If the last rule used in $\pi$
  is $E_\lin$, then $\pi$ has the following shape:
  $$
  \infer{\Gamma_1\mid x:A,\Delta\mid\Theta\vdash M:B}
  {
    \phi:\Gamma_3\mid x:A,\Delta\mid\Theta\vdash L:D\lin B &
    \psi:\Gamma_4\mid x:A,\Delta\mid\Theta\vdash P:D
  }
  $$
  where $\Gamma_1\equiv\Gamma_3,\Gamma_4$ and $M\equiv LP$. 
  $\sigma$ can be written as follows:
  $$
  \infer{\Gamma_2\mid\Delta\mid\Theta\vdash N:A}
  {
    \xi:\Gamma_5\mid\Delta_1\mid\Theta_1\vdash N:C &
  }
  $$
  where $\Delta\equiv !\Gamma_5,!\Delta_1,!\Theta_1,
  \Delta_2$ and $A\equiv !C$. From $\xi$ we can obtain
  a derivation $\chi:\emptyset\mid\Delta\mid\Theta\vdash N:A$
  and applying (two times) the induction hypothesis, we get
  $\mu:\Gamma_3\mid\Delta\mid\Theta\vdash L\{N/x\}:D\lin B$
  and $\nu:\Gamma_4\mid\Delta\mid\Theta\vdash P\{N/x\}:D$
  from which we get the desired $\rho$ by applying rule
  $E_\lin$ and Lemma~\ref{lemma:weakening}.
  If the last rule used in $\pi$ is $!$, then $\pi$ has the following shape:
  $$
  \infer{\Gamma_1\mid x:A,\Delta\mid\Theta\vdash M:B}
  {
    \phi:\Gamma_3\mid\Delta_1\mid\Theta_1\vdash M:C &
  }
  $$
  where $x:A,\Delta\equiv !\Gamma_3,!\Delta_1,!\Theta_1,\Delta_3$ and $B\equiv !C$.
  $\sigma$ can be written as follows:
  $$
  \infer{\Gamma_2\mid\Delta\mid\Theta\vdash N:A}
  {
    \psi:\Gamma_4\mid\Delta_2\mid\Theta_2\vdash N:D &
  }
  $$
  where $\Delta\equiv !\Gamma_4,!\Delta_2,!\Theta_2,\Delta_4$ and $A\equiv !D$.
  We now distinguish some cases:
  \begin{enumerate}[$\bullet$]
    \item
      If $x\in\mathit{dom}(\Delta_3)$, then $x\notin\mathit{FV}(M)$
      and $\rho$ is obtained easily from $\phi$.
    \item
      If $x\in\mathit{dom}(\Delta_1)$, then let $\Delta_1\equiv x:D,\Delta_5$.
      By applying several times Lemma~\ref{lemma:weakening}
      and Lemma~\ref{lemma:lintopark} we can obtain
      type derivations 
      \begin{eqnarray*}
      \xi&:&\emptyset\mid x:D,\Delta_2\cup\Delta_5\mid\Gamma_3\cup\Gamma_4\cup\Theta_1\cup\Theta_2\vdash M:C\\
      \chi&:&\emptyset\mid\Delta_2\cup\Delta_5\mid\Gamma_3\cup\Gamma_4\cup\Theta_1\cup\Theta_2\vdash N:D
      \end{eqnarray*}
      which have the same number of rule instances as $\phi$ and $\psi$, respectively.
      By applying point ii) of this Lemma, we obtain 
      $$
      \mu:\emptyset\mid\Delta_2\cup\Delta_5\mid\Gamma_3\cup\Gamma_4\cup\Theta_1\cup\Theta_2\vdash M\{N/x\}:C
      $$
      from which $\rho$ can be easily obtained.
    \item
      If $x\in\mathit{dom}(\Theta_1)$, then let $\Theta_1\equiv x:D,\Theta_3$.
      By applying several times Lemma~\ref{lemma:weakening}
      and Lemma~\ref{lemma:lintopark} we can obtain
      type derivations 
      \begin{eqnarray*}
      \xi&:&\emptyset\mid\Delta_1\cup\Delta_2\mid x:D,\Gamma_3\cup\Gamma_4\cup\Theta_2\cup\Theta_5\vdash M:C\\
      \chi&:&\emptyset\mid\Delta_1\cup\Delta_2\mid\Gamma_3\cup\Gamma_4\cup\Theta_2\cup\Theta_5\vdash N:D
      \end{eqnarray*}
      which have the same number of rule instances as $\phi$ and $\psi$, respectively.
      By applying the inductive hypothesis, we obtain
      $$
      \mu:\emptyset\mid\Delta_1\cup\Delta_2\mid\Gamma_3\cup\Gamma_4\cup\Theta_2\cup\Theta_5\vdash M\{N/x\}:C
      $$
      from which $\rho$ can be easily obtained.
    \item
      If $x\in\mathit{dom}(\Gamma_3)$, then let $\Gamma_3\equiv x:D,\Gamma_5$.
      By applying several times Lemma~\ref{lemma:weakening}
      and Lemma~\ref{lemma:lintopark} we can obtain
      type derivations 
      \begin{eqnarray*}
      \xi&:&\emptyset\mid\Delta_1\cup\Delta_2\mid x:D,\Gamma_4\cup\Gamma_5\cup\Theta_1\cup\Theta_2\vdash M:C\\
      \chi&:&\emptyset\mid\Delta_1\cup\Delta_2\mid\Gamma_4\cup\Gamma_5\cup\Theta_1\cup\Theta_2\vdash N:D
      \end{eqnarray*}
      which have the same number of rule instances as $\phi$ and $\psi$, respectively.
      By applying the inductive hypothesis, we obtain
      $$
      \mu:\emptyset\mid\Delta_1\cup\Delta_2\mid\Gamma_4\cup\Gamma_5\cup\Theta_1\cup\Theta_2\vdash M\{N/x\}:C
      $$
      from which $\rho$ can be easily obtained.
  \end{enumerate}
This concludes the proof.
\end{proof}
\begin{thm}[Call-by-Value Subject Reduction]\label{theo:SR}
$\Gamma \mid \Delta \mid \Theta \vdash M: A$ and $M \bv N$ implies
$\Gamma \mid \Delta \mid \Theta \vdash N: A$.  
\end{thm}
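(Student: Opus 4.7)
The plan is to prove the theorem by induction on the derivation of $M \bv N$, equivalently by induction on the structure of $M$ together with a case analysis on the position of the redex. The base case, where $M$ itself is the redex, is where all the real content lies; the inductive cases (application under a function, application under an argument, abstraction body) are dispatched by a direct appeal to the induction hypothesis combined with a reuse of the same last rule in the derivation.

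For the base case, assume $M \equiv (\lambda x.P)V$ with $V \in \mathcal{V}$ and $N \equiv P\{V/x\}$. Since $M$ is an application, the derivation of $\Gamma\mid\Delta\mid\Theta\vdash (\lambda x.P)V:A$ must end with $E_\lin$, and hence splits $\Gamma$ as $\Gamma_1,\Gamma_2$ (with $\Gamma_1,\Gamma_2$ disjoint) and produces subderivations
$$\pi_1:\Gamma_1\mid\Delta\mid\Theta\vdash \lambda x.P:B\lin A\qquad \pi_2:\Gamma_2\mid\Delta\mid\Theta\vdash V:B.$$
Now $\pi_1$ must end with either $I^L_\lin$ or $I^I_\lin$, giving either $\Gamma_1,x:B\mid\Delta\mid\Theta\vdash P:A$ or $\Gamma_1\mid\Delta,x:B\mid\Theta\vdash P:A$. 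In the first case I apply point (i) of the Substitution Lemma to conclude $\Gamma_1,\Gamma_2\mid\Delta\mid\Theta\vdash P\{V/x\}:A$. In the second case I apply point (iii) of the Substitution Lemma, whose applicability hinges precisely on $V \in \mathcal{V}$, which is guaranteed by the call-by-value reduction rule.

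The inductive step amounts to verifying that reduction under a one-hole context preserves typability. If $M \equiv PQ$ and $P \bv P'$, then the derivation ends with $E_\lin$ with a subderivation of $P$; applying the induction hypothesis to that subderivation and reassembling with the unchanged subderivation for $Q$ yields the typing for $P'Q$. The case $M \equiv PQ$ with $Q \bv Q'$ is symmetric. If $M \equiv \lambda x.P$ with $P \bv P'$, then the last rule is $I^L_\lin$ or $I^I_\lin$, and the induction hypothesis applied to the single subderivation gives the result after reapplying the same introduction rule.

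The main obstacle, and the conceptual heart of the argument, is the $I^I_\lin$ branch of the base case: this is exactly the situation where the argument of the redex will be duplicated during substitution (because $x$ sits in the modal context), and only the restriction to values makes the required substitution lemma point (iii) applicable. Under call-by-name this step would fail, which is consistent with the counterexample $(\lambda x.yxx)(wz)$ discussed just before the statement. All the bookkeeping about the parking context $\Theta$ and the disjointness side-conditions on $\Gamma_1,\Gamma_2$ is already absorbed into the Weakening, Shifting, and Substitution Lemmas, so no fresh combinatorics on contexts is needed beyond the invocations above.
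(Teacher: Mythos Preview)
Your overall strategy and the core of the argument---reducing the base case to clauses (i) and (iii) of the Substitution Lemma, with clause (iii) being the point where the value restriction is essential---is exactly the paper's approach. There is, however, one genuine gap.

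You repeatedly assert that the last rule of the typing derivation is determined by the syntactic shape of the subject: ``the derivation of $\Gamma\mid\Delta\mid\Theta\vdash (\lambda x.P)V:A$ must end with $E_\lin$'', ``the derivation ends with $E_\lin$'' when $M\equiv PQ$, and ``the last rule is $I^L_\lin$ or $I^I_\lin$'' when $M\equiv\lambda x.P$. These claims are false because the rule $!$ is \emph{not} syntax-directed: whenever $A\equiv{!}B$, the derivation may end with an instance of $!$ whose premise types the same term $M$ with type $B$ in a different context. Your induction on $M\bv N$ gives you no handle on this situation, since the term does not change across a $!$ rule.

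The fix is to reorganize the argument as an induction on the typing derivation $\pi$ (rather than on the reduction). Then there is simply one more case: if the last rule of $\pi$ is $!$, apply the induction hypothesis to its premise (same $M$, same reduction $M\bv N$, strictly smaller derivation) and reapply $!$. All your other cases go through unchanged. Note, by contrast, that your claim about $\pi_1$ is correct: once you are at the $E_\lin$ node, the premise typing $\lambda x.P$ has the linear type $B\lin A$, which is not of the form ${!}C$, so $!$ is excluded there and the last rule really is $I^L_\lin$ or $I^I_\lin$. The paper's one-line proof implicitly acknowledges this by speaking of ``a subderivation ending by an application of the rule $(E_\lin)$'', i.e.\ after peeling off any trailing $!$ rules and descending through the context.
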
 
\begin{proof}
A redex is a term of the shape $(\lambda x.M')N'$, where $N'\in \mathcal{V}$. Then 
it can be the subject of a subderivation ending by an application of the rule $(E_{\lin})$ immediately 
preceded by 
an application of rule $(I_{\lin})$. So the result follows by the Substitution Lemma.
\end{proof}   

We are now going to prove 
that the set of typable $\lambda$-terms coincides with
$(\NEAL)^{\#}$. To do this we need the following
lemma.

\begin{lem}[Contraction Lemma]\label{lemma:contraction}\hfill
\begin{enumerate}[\em i)]
  \item
  If $\Gamma\mid\Delta\mid x:A,y:A,\Theta\vdash M:B$, then
  $\Gamma\mid\Delta\mid z:A,\Theta\vdash M\{z/x,z/y\}:B$
  \item
  If $\Gamma\mid x:A,y:A,\Delta\mid\Theta\vdash M:B$, then
  $\Gamma\mid z:A,\Delta\mid\Theta\vdash M\{z/x,z/y\}:B$
\end{enumerate}
\end{lem}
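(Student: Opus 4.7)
The plan is to prove the two parts by induction on the derivation $\pi$, establishing Part (i) first (independently) and then Part (ii), which will invoke Part (i) on strictly smaller derivations in certain subcases of the $!$ rule. In both parts, the axiom cases $A^L$ and $A^P$ are handled by inspecting whether the subject equals $x$ or $y$ (in which case one re-applies $A^P$ for $z$) or is a different variable (in which case the renaming is vacuous on the subject and just updates the contexts). The cases for $I_\lin^L$, $I_\lin^I$, and $E_\lin$ propagate the induction hypothesis through the sub-derivations, using the Weakening Lemma where necessary.

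The $!$-case of Part (i) is immediate: variables $x,y$ sitting in the parking context of the conclusion must be freshly placed in the $\Theta_2$ slot of the $!$ rule (rather than coming from any component of the premise), so $x,y\notin\mathtt{FV}(M)$, the substitution $M\{z/x,z/y\}$ acts as the identity, and we simply re-apply the $!$ rule with parking context $z{:}A,(\Theta_2\setminus\{x{:}A,y{:}A\})$.

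The substantive work is the $!$-case of Part (ii). There the last rule in $\pi$ derives $\Gamma_2\mid !\Gamma_1,!\Delta_1,!\Theta_1,\Delta_2\mid\Theta_2\vdash M:!B$ from a premise $\Gamma_1\mid\Delta_1\mid\Theta_1\vdash M:B$, and I case-split on where $x{:}A$ and $y{:}A$ lie among the four components of the modal context in the conclusion. The linear-vs.-modal type discipline rules out every configuration in which one of $x,y$ would come from a linear part ($!\Gamma_1$ or $!\Theta_1$) while the other comes from a modal part ($!\Delta_1$), since the common type $A$ cannot simultaneously be linear and modal. The surviving configurations are handled as follows. If both lie in $\Delta_2$, they are fresh, the substitution is vacuous, and we simply re-apply $!$ with modified $\Delta_2$. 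If both lie in $!\Delta_1$, we invoke Part (ii) inductively on the strictly smaller premise (where they are modal) and then promote by $!$. If both lie in $!\Theta_1$, we invoke Part (i) on the premise (where they are parking) and then promote. If both lie in $!\Gamma_1$, or if one lies in $!\Gamma_1$ and the other in $!\Theta_1$, we apply the Shifting Lemma to move the $\Gamma_1$-resident(s) into the parking context (this preserves derivation size), then invoke Part (i) on the shifted derivation, and then promote. Finally, for configurations with exactly one of $x,y$ in $\Delta_2$ and the other free in $M$, the $\Delta_2$-resident is not in $\mathtt{FV}(M)$; rename the other variable to $z$ in the premise and re-apply $!$ with a suitably modified $\Delta_2$.

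The principal obstacle is this last group of subcases: when $x$ or $y$ originates in the linear premise context $\Gamma_1$, neither Part (ii) nor a direct appeal to the Substitution Lemma applies—one cannot substitute a modally-positioned $z$ into a linear slot, because an axiom for $z{:}A$ with $z$ in the modal context is unavailable. The fix is the Shifting Lemma, which bridges from linear to parking at no cost in derivation size, making Part (i) applicable for the contraction; the crucial observation that enables the bridge is that linear and parking contexts share the same type discipline, so a shift is type-preserving and the shifted derivation is strictly smaller than the original Part (ii) derivation, validating the appeal to the induction hypothesis.
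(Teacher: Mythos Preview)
The paper states the Contraction Lemma without proof, so there is no author's argument to compare against. Your inductive argument is correct and is the natural way to establish the result: Part~(i) goes through directly because the parking context of the conclusion of the $!$~rule is entirely fresh, and Part~(ii) requires exactly the case analysis on the four components $!\Gamma_1,!\Delta_1,!\Theta_1,\Delta_2$ that you carry out. Your key observation---that the mixed cases where one of $x,y$ comes from $!\Delta_1$ and the other from $!\Gamma_1$ or $!\Theta_1$ are impossible because the stripped type~$C$ (with $A\equiv{!C}$) would have to be simultaneously linear and modal---is what makes the enumeration finite and manageable, and the use of the Shifting Lemma to reduce the $!\Gamma_1$ cases to Part~(i) is precisely the right move (and mirrors how the paper handles the analogous subcases in the proof of the Substitution Lemma, point~iii). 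One minor remark: since you prove Part~(i) completely before embarking on Part~(ii), your appeals to Part~(i) inside the $!$~case of Part~(ii) need not be restricted to ``strictly smaller'' derivations; Part~(i) is already available for arbitrary derivations at that point.
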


\begin{thm}\label{ETAS-and-NEAL}\hfill
\begin{enumerate}[\em i)] 
\item If $\Phi\vdashNEAL M:A$ then $\Phi^{L} \mid\Phi^{I}
  \mid\emptyset\vdash (M)^\#:A$.
\item  If $\Gamma\mid\Delta\mid\emptyset \vdash M:A$, there
  is $N\in\LambdaEA$ such that $(N)^\#=M$
  and $\Gamma,\Delta\vdashNEAL N:A$. 
\end{enumerate} 
\end{thm}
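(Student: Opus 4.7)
Both parts are proved by induction on derivations: part~(i) by induction on the $\vdashNEAL$ derivation, and part~(ii) by induction on the \ETAS\ derivation, suitably generalised to handle non-empty parking contexts.

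For part~(i), the axiom, $I_\lin$, and $E_\lin$ cases are handled directly: axioms become $A^L$ or the derivable $A^I$ depending on whether $A$ is modal; $I_\lin$ becomes $I_\lin^L$ or $I_\lin^I$ under the same distinction; and $E_\lin$ uses the Weakening Lemma to reconcile \NEAL's disjoint contexts $\Phi$ and $\Psi$ into an \ETAS\ derivation where the modal parts agree on both sides. The delicate cases are those where $(\cdot)^\#$ splits on whether a subterm is a variable. For the contraction rule $C$ with premises $\Phi \vdashNEAL M:!A$ and $\Psi, x:!A, y:!A \vdashNEAL N:B$, I would apply the Contraction Lemma (part~ii, since $!A$ is modal) to the induction-hypothesis derivation of $(N)^\#$; if $M$ is a variable $z$, the result gives the desired derivation directly after weakening, while if $M$ is not a variable, I would abstract the freshly introduced variable by $I_\lin^I$ and combine with the IH derivation of $(M)^\#$ via $E_\lin$, producing the translation $(\lambda w.(N)^\#\{w/x,w/y\})(M)^\#$.

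The promotion rule is handled by a nested induction on the number $n$ of argument pairs $(M_i,x_i)$. Starting from the IH derivation of $(N)^\#:B$, one application of \ETAS's $!$ rule yields $(N)^\#:!B$ with every $x_i$ in the modal context at type $!A_i$. I then process each $M_i$ in turn: when $M_i$ is a variable, I invoke Substitution Lemma part~iii (a variable is a value); when $M_i$ is not a variable, the translation introduces a lambda-redex, which I handle by $I_\lin^I$ followed by $E_\lin$. Weakening reconciles the several contexts throughout.

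For part~(ii), I would strengthen the statement to allow non-empty parking contexts, with the understanding that parking variables correspond in the constructed \NEAL\ derivation to several linear copies of themselves, one per use; these copies will be wrapped into $!$-types by a surrounding promotion and then contracted back into a single variable. Axiom and $I_\lin$ cases translate to the analogous \NEAL\ rules. $E_\lin$ is handled after relabelling the shared modal and parking variables into disjoint copies on the two sides, so that \NEAL's disjointness requirement is met. The decisive case is the \ETAS\ $!$ rule: applying \NEAL\ promotion to the IH-produced $N_0$, with the inner variables $x_i$ of type $A_i$ renamed disjointly from the outer variables that replace them (each outer substitute being the variable itself, now with type $!A_i$), yields a term whose $(\cdot)^\#$ equals $(N_0)^\#$ because all substituted terms are variables; subsequent \NEAL\ contractions then merge the multiple $!$-typed copies of each erstwhile parking variable into the single occurrence required by the \ETAS\ conclusion, and the weakening $\Phi$ built into the \NEAL\ promotion rule absorbs $\Gamma_2, \Delta_2, \Theta_2$. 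The main obstacle will be precisely this bookkeeping: \ETAS's shared modal and parking contexts must be translated into \NEAL's strictly linear discipline, where every reuse of a variable becomes an explicit contraction that requires a $!$-type and hence sits outside an intervening promotion.
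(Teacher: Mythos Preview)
Your proposal is correct and follows the same strategy as the paper. For part~(i) you proceed by induction on the \NEAL\ derivation, using the Weakening Lemma for $E_\lin$ and the Contraction Lemma for $C$, exactly as the paper does; you additionally spell out the promotion case (which the paper omits). For part~(ii) your generalisation---that parking variables become several fresh linear copies in the \NEAL\ context, with $M$ recovered from $(N)^\#$ by substituting the original parking variable for each copy---is precisely the strengthened statement the paper records (and then leaves to the reader). Your treatment of $E_\lin$ and $!$ in part~(ii), splitting shared modal variables into disjoint copies and re-merging them via \NEAL\ contractions with variable arguments so that $(\cdot)^\#$ is unaffected, is the intended mechanism.
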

\proof\hfill
\begin{enumerate}[i)]
\item By induction on the structure of the derivation
for $\Phi\vdashNEAL M:A$. Let us focus on nontrivial
cases. \par
If the last used rule is $E_\lin$, the two premises are
$\Phi \vdashNEAL N:B\lin C$ and
$\Phi_2\vdashNEAL P:B$, and $M \equiv NP$. By induction hypothesis,
$\Phi^{L}_1\mid\Phi^{I}_1\mid\emptyset\vdash (N)^\#:B\lin C$, and
$\Phi^{L}_2\mid\Phi^{I}_2\mid\emptyset\vdash (P)^\#:B$ and,
by Weakening Lemma, 
$\Phi^{L}_1\mid\Phi^{I}_1,\Phi^{I}_2\mid\emptyset\vdash (N)^\#:B\lin C$,
$\Phi^{L}_2\mid\Phi^{I}_1,\Phi^{I}_2\mid\emptyset\vdash (P)^\#:B$
Rule $E_\lin$ leads to the thesis. 

If the last used rule is $C$, the two premises
are $\Phi_1\vdashNEAL N:!A$
and $\Phi_2,x:!A,y:!A\vdashNEAL P:B$. By induction
hypothesis, $\Phi^{L}_1\mid\Phi^{I}_1\mid\emptyset\vdash (N)^\#:!A$,
$\Phi^{L}_2\mid\Phi^{I}_2,x:!A,y:!A\mid\emptyset\vdash (P)^\#:B$.
By Contraction Lemma, 
$\Phi^{L}_2\mid\Phi^{I}_2,z:!A\mid\emptyset\vdash (P)^\#\{z/x,z/y\}:B$
and so
$\Phi^{L} _2\mid\Phi^{I}_2\mid\emptyset\vdash \lambda z.(P)^\#\{z/x,z/y\}:!A\lin B$
By rule $E_\lin$ and Weakening Lemma, we finally get
$\Phi^{L}_1,\Phi^{L} _2\mid\Phi^{L}_1,\Phi^{I}_2\mid\emptyset\vdash 
(\lambda z.(P)^\#\{z/x,z/y\})(N)^\#:B$.
\item 
The following, stronger, statement can
be proved by induction on $\pi$:
if $\pi:\Gamma\mid\Delta\mid x_1:A_1,\ldots,x_n:A_n\vdash M:A$, then
there is $N\in\LambdaEA$ such that 
$$
M=(N)^\#\{x_1^1/y_1,\ldots,x_1^{m_1}/y_1,
\ldots,x_n^1/y_n,\ldots,x_n^{m_n}/y_n\}
$$
and $\Gamma,\Delta,y_1^1:A_1,\ldots,y_1^{m_1}:A_1,$ $\ldots,y_n^1:A_n,\ldots,
y_n^{m_n}:A_n\vdashNEAL N:A$.\qed
\end{enumerate}

\noindent We have just established a deep \emph{static} correspondence
between \NEAL\ and the class of typable lambda terms. But what
about \emph{dynamics}? Unfortunately, the two systems are not
bisimilar. Nevertheless, every
call-by-value reduction-step in the lambda calculus corresponds
to at least one normalization step in $\LambdaEA$.
A normalization step in $\LambdaEA$ is denoted by $\rightarrow$; 
$\rightarrow^{+} $ denotes the transitive closure of $\rightarrow$.

An \emph{expansion} is a term in $\LambdaEA$ that can
be written either as $\promote{M}{x_1}{y_1}{x_n}{y_n}$
or as $\contr{N}{z}{x}{y}$, where $N$ is itself an
expansion.

\begin{lem}\label{lemma:expansion}
If $M$ is an expansion, then 
\begin{enumerate}[$\bullet$]
\item
$\contr{L}{M}{x}{y}\rightarrow^* P$, where $P^\#\equiv L\{M^\#/x,M^\#/y\}$;
\item
If $M\equiv P_i$, then  $\promote{L}{P_1}{x_1}{P_n}{x_n}\rightarrow^*Q$
where 
$$
Q^\#\equiv(!\left(L\right)
    \left[{}^{P_1}/x_1,\ldots,{}^{P_{i-1}}/x_{i-1},
    {}^{P_{i+1}}/x_{i+1},\ldots,{}^{P_n}/x_n\right])^\#\{M^\#/x_i\}.
$$
\end{enumerate} 
\end{lem}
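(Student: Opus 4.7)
I prove both items simultaneously by induction on the structure of $M$ as an expansion. By definition, $M$ is either a promotion $\promote{N}{x_1}{y_1}{x_n}{y_n}$ with all arguments being variables (the base case) or a contraction $\contr{N'}{z}{x_0}{y_0}$ whose contracted argument is a variable and whose body $N'$ is itself an expansion (the inductive step). The key observation behind the definition is that an expansion's $(\cdot)^{\#}$-image is a pure composition of meta-substitutions applied to its innermost promotion body, with no $\beta$-redex introduced; matching this behaviour via a sequence of $\LambdaEA$-reductions is exactly what the two items ask for.

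For item (i) base case, apply $\to_{\mathsf{dup}}$ exactly once to $\contr{L}{M}{x}{y}$, obtaining
\[
\contr{\cdots\contr{L\{\promote{N}{u_1}{y_1}{u_n}{y_n}/x\}\{\promote{N'}{v_1}{y_1}{v_n}{y_n}/y\}}{x_1}{u_1}{v_1}\cdots}{x_n}{u_n}{v_n},
\]
with fresh $u_i, v_i$ and $N'$ an $\alpha$-variant of $N$. Under $(\cdot)^{\#}$ each of the $n$ outer contractions (variable argument $x_i$) becomes a meta-substitution, and each of the two inner promotions (all-variable arguments) is likewise a meta-substitution; composing them collapses the display into $L\{M^{\#}/x, M^{\#}/y\}$. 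For the inductive step, $M \equiv \contr{N}{z}{x_0}{y_0}$; apply $\to_{\mathsf{c}-\mathsf{c}}$ to commute the two contractions, obtaining $\contr{\contr{L}{N\{x_0'/x_0, y_0'/y_0\}}{x}{y}}{z}{x_0'}{y_0'}$. The renaming $N\{x_0'/x_0, y_0'/y_0\}$ is again an expansion (with one fewer contraction than $M$), so the IH yields a reduct $P_1$ with $P_1^{\#} = L\{N^{\#}\{x_0'/x_0, y_0'/y_0\}/x,\; N^{\#}\{x_0'/x_0, y_0'/y_0\}/y\}$. The remaining outer contraction contributes $\{z/x_0', z/y_0'\}$ under $(\cdot)^{\#}$, and composition delivers $L\{M^{\#}/x, M^{\#}/y\}$.

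Item (ii) follows the same pattern. In the base case ($P_i$ a promotion-expansion), apply $\to_{!-!}$ once to merge the inner and outer promotions into a single one whose new arguments contain the variables $z_1, \ldots, z_m$ from the inner promotion and whose body is $L\{N/x_i\}$; each of the new variable arguments becomes a meta-substitution under $(\cdot)^{\#}$, and collecting all substitutions matches the stated target. In the inductive case ($P_i$ a contraction-expansion), apply $\to_{!-c}$ to lift the inner contraction out of the outer promotion, invoke the IH on the resulting smaller promotion (whose $i$-th argument is the renamed expansion), and handle the lifted contraction (variable argument $z$) as the meta-substitution $\{z/x_0', z/y_0'\}$.

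\textbf{Main obstacle.} The principal difficulty is the $(\cdot)^{\#}$-bookkeeping through the two $\alpha$-renamed copies of the duplicated promotion produced by $\to_{\mathsf{dup}}$, and through the cascade of fresh variables introduced by $\to_{\mathsf{c}-\mathsf{c}}$ and $\to_{!-c}$. To tame this, I would first record two auxiliary facts: (a) $(\cdot)^{\#}$ commutes with variable-to-variable renamings; and (b) if $N$ is an expansion, so that $N^{\#}$ introduces no new $\beta$-redex, then $(L\{N/x\})^{\#} = L^{\#}\{N^{\#}/x\}$ whenever $x$ occurs in $L$ only at positions compatible with the substitution (in particular whenever $L$ is a pure $\Lambda$-term, which is the situation arising from the previous theorem). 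Once (a) and (b) are in hand, each of the four cases above becomes a routine substitution calculation.
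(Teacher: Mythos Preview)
The paper states this lemma without proof, so there is no authorial argument to compare against. Your overall strategy --- structural induction on the expansion $M$, using $\to_{\mathsf{dup}}$ and $\to_{\mathsf{c}-\mathsf{c}}$ for item~(i), and $\to_{!-!}$ and $\to_{!-c}$ for item~(ii) --- is the natural one, and your inductive steps are correct.

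There is, however, a genuine gap in both base cases. After a single $\to_{\mathsf{dup}}$ step in the base case of item~(i), the resulting term contains the meta-level substitution $L\{E_1/x\}\{E_2/y\}$, where $E_1,E_2$ are the two fresh-variable copies of the promotion. You then need $(L\{E_j/x\})^\# = L^\#\{E_j^\#/x\}$, which is exactly your auxiliary fact~(b). But this equality fails precisely when $x$ occurs in $L$ as the argument of a contraction $\contr{L'}{x}{u}{v}$ or as one of the $M_j$-slots of a promotion: the definition of $(\cdot)^\#$ branches on whether that slot is a variable, so replacing the variable $x$ by the non-variable $E_j$ switches from the ``meta-substitute'' branch to the ``introduce a $\beta$-redex'' branch. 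The two sides then differ by an unreduced $(\lambda z.\,\ldots)E_j^\#$. Your restriction on~(b) (``in particular whenever $L$ is a pure $\Lambda$-term'') does not hold here, since $L$ is an arbitrary $\LambdaEA$-term, and in the application of the lemma inside Proposition~\ref{prop:correspredseq} the corresponding term is likewise an arbitrary \ea-term.

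The fix is that a single $\to_{\mathsf{dup}}$ (resp.\ $\to_{!-!}$) step is not enough in the base case: after the substitution, the copy $E_j$ sitting in a contraction- or promotion-argument position of $L$ is itself an expansion, so you obtain a new instance of item~(i) or item~(ii) strictly deeper inside the term, and must reduce further. Your induction on the structure of $M$ alone cannot handle this, since $E_j$ has the same structural rank as $M$. A correct argument needs a simultaneous induction on both items together with a secondary measure on $L$ (for instance, the depth at which $x$ occurs in $L$, or equivalently an induction on the pair consisting of the expansion-structure of $M$ and the size of $L$), so that the recursive appeal to the lemma on the inner redex is justified.
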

\begin{prop}\label{prop:correspredseq}
  For every $M\in\LambdaEA$, if $\Gamma\vdashNEAL M:A$ and $(M)^\#\bv
  N$, then there is $L\in\LambdaEA$ such that $(L)^\#=N$ and
  $M\rightarrow^+ L$.
\end{prop}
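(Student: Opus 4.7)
The plan is to proceed by structural induction on $M \in \LambdaEA$, threading the typing hypothesis $\Gamma \vdashNEAL M:A$ through each case by inversion. The variable case $M = x$ admits no CBV redex in $(M)^\# = x$, so the conclusion is vacuous. In the abstraction case $M = \lambda y. M'$ the translation $(M)^\# = \lambda y.(M')^\#$ places every CBV redex strictly under the outermost binder, so inversion on rule $I_\lin$ types $M'$ and the induction hypothesis supplies $L' \in \LambdaEA$ with $M' \rightarrow^+ L'$ and $(L')^\# \equiv N'$ where $N \equiv \lambda y. N'$; taking $L = \lambda y. L'$ closes this case.

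The application case $M = M_1 M_2$ is the central one. Inversion on rule $E_\lin$ splits $\Gamma$ and types both $M_1$ and $M_2$. If the CBV redex lies strictly inside $(M_1)^\#$ or strictly inside $(M_2)^\#$, the induction hypothesis applied to the relevant sub-term, lifted through the application context, gives the result. The remaining sub-case is a root redex: $(M_1)^\# \equiv \lambda z. P$ with $(M_2)^\#$ a value and $N \equiv P\{(M_2)^\#/z\}$. Inspection of the clauses defining $(\cdot)^\#$, combined with the typing constraint that $M_1$ has arrow type (ruling out promotions, which produce modal types), shows that $M_1$ is either a $\lambda$-abstraction or is built from one by wrapping it in a sequence of contractions whose arguments are variables. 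In the former sub-sub-case I apply $\rightarrow_\beta$ in $\LambdaEA$ to obtain $M_1'\{M_2/z\}$; this translation need not equal $N$ on the nose because $(\cdot)^\#$ does not commute literally with substitution when $M_2$ is not a variable and $z$ occurs in $M_1'$ inside a contraction or a promotion argument. Here the typing discipline of \NEAL\ forces $z$ to have a modal type in those occurrences (by the shape of rules $C$ and $!$), which in turn forces $M_2$ to have modal type; together with $(M_2)^\#$ being a value, this pins $M_2$ down as an expansion in the sense of Lemma~\ref{lemma:expansion}, and that lemma supplies the reductions needed to rewrite $M_1'\{M_2/z\}$ into an $L$ with $(L)^\# \equiv N$. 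In the sub-sub-case where $M_1$ is a contraction wrapping a $\lambda$-abstraction, the rule $\rightarrow_{@-\mathsf{c}}$ (iterated as needed and combined with $\rightarrow_{\mathsf{c}-\mathsf{c}}$ for nested contractions) commutes the outer contractions past the application, reducing to the previous sub-sub-case.

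The cases $M = \contr{N'}{M''}{x}{y}$ and $M = \promote{N'}{M_1'}{x_1}{M_n'}{x_n}$ follow the same pattern. When every non-body argument is already a variable, the translation introduces no new root $\lambda$, so any redex lies strictly inside a sub-term and the induction hypothesis applies directly. When some argument is not a variable, the translation has the form $(\lambda w.\cdots)(\cdots)^\#$ and a root CBV redex fires exactly when the argument's translation is a value; inversion on rule $C$ (respectively rule $!$) forces the argument's type to be modal, which again constrains the argument's shape enough for Lemma~\ref{lemma:expansion} (together with $\rightarrow_{\mathsf{dup}}$ when the argument is itself a promotion) to supply the required $\LambdaEA$-reduction sequence.

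The main obstacle is exactly the failure of $(\cdot)^\#$ to commute with substitution: $(P\{Q/x\})^\#$ and $(P)^\#\{(Q)^\#/x\}$ can genuinely differ when $P$ contains a contraction or promotion acting on $x$ and $Q$ is not a variable, so a single $\rightarrow_\beta$-step in $\LambdaEA$ typically falls short of reaching $N$. The resolution is the Expansion Lemma, which uses the \ea-specific normalization rules $\rightarrow_{\mathsf{dup}}$, $\rightarrow_{!-!}$, $\rightarrow_{@-\mathsf{c}}$, $\rightarrow_{!-\mathsf{c}}$, $\rightarrow_{\mathsf{c}-\mathsf{c}}$ and $\rightarrow_{\lambda-\mathsf{c}}$ to transport the result of the naive $\beta$-step to a term whose translation exactly matches $N$. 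The typing of $M$ supplies precisely the modality and linearity information required to activate the lemma.
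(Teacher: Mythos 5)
Your proof follows essentially the same route as the paper's: structural induction on $M$, with the typing hypothesis used to pin down the shape of the subterms and Lemma~\ref{lemma:expansion} supplying the extra $\LambdaEA$-normalization steps in the contraction and promotion cases. The only divergence is that you treat the application case more cautiously than the paper, which simply asserts $R^\#\{Q^\#/x\}\equiv(R\{Q/x\})^\#$ and takes $L=R\{Q/x\}$, whereas you observe that this commutation can fail when $x$ sits under a contraction or promotion and repair it via the Expansion Lemma --- a refinement of the same argument rather than a different approach.
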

\begin{proof}
  We can proceed by induction on the structure of $M$.  If $M$ is a
  variable, then $M^\#$ is a variable, too, and so the premise is
  false. If $M$ is an abstraction, then the thesis follows from the
  inductive hypothesis.  If $M$ is an application $P\ Q$, then we can
  assume $P$ to be an abstraction $\lambda x.R$ and $N$ to be
  $R^\#\{Q^\#/x\}$ (in all the other cases the thesis easily follows
  by induction hypothesis). It is easy to see that
  $R^\#\{Q^\#/x\}\equiv (R\{Q/x\})^\#$ and so we can take $R\{Q/x\}$
  as our $L$. If $M$ is $\contr{P}{Q}{x}{y}$ and $Q$ is not a variable
  (otherwise the thesis easily follows by inductive hypothesis), then $M^\#=(\lambda
  z.P^\#\{z/x,z/y\})Q^\#$ and we can restrict to the case where $N$ is
  $P^\#\{Q^\#/x,Q^\#/y\}$. First of all, we can observe that $Q^\#$
  must be an abstraction. This means that $Q$ is an abstraction itself
  enclosed in one or more $\promote{\cdot}{x_1}{y_1}{x_n}{y_n}$
  contexts and zero or more $\contr{\cdot}{z}{x}{y}$ otherwise $M$
  cannot be typed in \EAL. This means $Q$ is an expansion and so, by
  Lemma~\ref{lemma:expansion}, we know there must be a term $R$ such
  that $R\#\equiv P^\#\{Q^\#/x,Q^\#/y\}$, and $M\rightarrow^*R$, that
  is the thesis.  $\promote{P}{Q_1}{x_1}{Q_n}{x_n}$ can be managed in
  a similar way.
\end{proof}

\begin{rem}
Notice that Proposition~\ref{prop:correspredseq} is not a
bisimulation result. In particular, there are normalization
steps in \NEAL\ that do not correspond to anything in \ETAS.
An example is the term $(\lambda x.x)(yz)$, which
rewrites to $yz$ in \NEAL\ but is a (call-by-value) normal
form as a pure lambda-term.
\end{rem}

\section{Bounds on Normalization Time}\label{sect:BOUNDS}
In order to prove elementary bounds on reduction sequences, 
we need to define a refined measure on lambda terms.
We can look at a type derivation
$\pi:\Gamma\mid\Delta\mid\Theta\vdash M:A$
as a labelled tree, where every node is labelled by a rule
instance. We can give the following definition:
\begin{defi}
Let $\pi:\Gamma\mid\Delta\mid\Theta\vdash M:A$.
\begin{enumerate}[i)]
\item An occurrence of a subderivation $\rho$ of $\pi$ has \emph{level} $i$ if there are
$i$ applications of the rule $!$ in the path from the root of $\rho$ to the root
of $\pi$.
\item An occurrence of a subterm $N$ of $M$ has \emph{level} $i$ in $\pi$ if 
$i$ is the maximum level of a
subderivation of $\pi$ corresponding to the particular occurrence of $N$
under consideration (and thus having $N$ as subject).
\item The level $\partial(\pi)$ of
$\pi$ is the maximum level of subderivations of $\pi$.  
\end{enumerate}
\end{defi}
\noindent Notice that the so defined level corresponds to the notion of 
box-nesting depth in proof-nets~\cite{Asperti98lics}.
\begin{exa}
Consider the derivation $\pi(2,B)$ from Example~\ref{exa:terms}.
All the occurrences of rules $A^P$ inside
$\pi(2,B)$ have level $1$, since there is one instance of $!$ in
the path joining the root of $\pi(2,B)$ to them. The occurrence of
rule $A^L$, on the other hand, has level $2$. As a consequence
all occurrence of variables in $\underline{2}$ have either level $1$ or level
$2$ in $\pi(2,B)$. Now, 
consider the (unique) occurrence of $\lambda y.x(xy)$ in $\underline{2}$. There are
two distinct subderivations corresponding to it, one with level $0$,
the other with level $1$. As a consequence, the level of $\lambda y.x(xy)$
in $\pi(2,B)$ is $1$. Since the maximum level of subderivations of
$\pi(2,B)$ is $2$, $\partial(\pi(2,B))=2$.
\end{exa}

The length $L(M)$ of a typable lambda term $M$ does not take into account levels
as we have just defined them. The following definitions reconcile
them, allowing $L(M)$ to be ``split'' on different levels.
\begin{defi}
Let $\pi:\Gamma\mid\Delta\mid\Theta\vdash M:A$.
\begin{enumerate}[i)] 
\item $S(\pi,i) $ is defined by induction on $\pi$ as follows:  
\begin{enumerate}[$\bullet$]
\item If $\pi$ consists of an axiom, then $S(\pi,0)=1$ and
  $S(\pi,i)=0$ for every $i\geq 1$;
  \item  If the last rule in $\pi$ is $I_\linear^I$ or
  $I_\linear^L$, then 
  $S(\pi,0)=S(\rho,0)+1$
  and $S(\pi,i)=S(\rho,i)$
  for every $i\geq 1$, where $\rho$ is the immediate
  subderivation of $\pi$;
  \item  If the last rule in $\pi$ is $E_\linear$ 
  then 
  $S(\pi,0)=S(\rho,0)+S(\sigma,0)+1$
  and $S(\pi,i)=S(\rho,i)+S(\sigma,i)$
  for every $i\geq 1$, where $\rho$ and $\sigma$ are
  the immediate subderivations of $\pi$;
  \item  If the last rule in $\pi$ is $!$, then
  $S(\pi,0)=0$ and $S(\pi,i)=S(\rho,i-1)$ for
  every $i\geq 1$, where $\rho$ is the immediate
  subderivation of $\pi$. 
\end{enumerate}
\item Let $n$ be the level of $\pi$. The {\em size} of $\pi$
is  $S(\pi)= \sum_{i\leq n} S(\pi,i)$.    
\end{enumerate}  
\end{defi}
\begin{exa}
Consider again $\pi(2,B)$. By definition, $S(\pi(2,B),2)=1$,
$S(\pi(2,B),1)=5$ and $S(\pi(2,B),0)=1$.
\end{exa}
The following relates $S(\pi)$ to the size of 
the term $\pi$ types:
\begin{lem}
Let $\pi:\Gamma\mid\Delta\mid\Theta\vdash M:A$. Then,
$S(\pi)=L(M)$.    
\end{lem}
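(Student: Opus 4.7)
The plan is a direct induction on the structure of the derivation $\pi$, using the observation that $S(\pi)$ simply counts, with the correct level bookkeeping, the occurrences in $\pi$ of the three ``term-constructing'' rules $A^L/A^P$, $I_\lin^L/I_\lin^I$ and $E_\lin$, while the $!$ rule contributes nothing of its own and only shifts levels. These three rule groups are exactly in bijection (by an easy inspection of the system, which is syntax-directed modulo $!$) with the variable occurrences, abstractions and applications of $M$, which is what $L(M)$ counts.

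First I would unfold $S(\pi)=\sum_{i\leq n}S(\pi,i)$ and observe that the $!$-case of the definition gives $S(\pi)=S(\rho)$ whenever $\pi$ ends in a $!$ with immediate subderivation $\rho$: indeed $S(\pi,0)=0$ and for $i\geq 1$ we have $S(\pi,i)=S(\rho,i-1)$, so reindexing $j=i-1$ yields $\sum_{i\leq n}S(\pi,i)=\sum_{j\leq n-1}S(\rho,j)=S(\rho)$. Since the $!$ rule does not alter the subject, the inductive hypothesis $S(\rho)=L(M)$ directly gives $S(\pi)=L(M)$.

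The remaining cases are routine. For an axiom ($A^L$ or $A^P$) the subject is a variable, $S(\pi)=S(\pi,0)=1=L(x)$. For $\pi$ ending in $I_\lin^L$ or $I_\lin^I$ with premise $\rho$ and subject $\lam x.N$, the definition gives $S(\pi,0)=S(\rho,0)+1$ and $S(\pi,i)=S(\rho,i)$ for $i\geq 1$, so $S(\pi)=S(\rho)+1$; by the inductive hypothesis this equals $L(N)+1=L(\lam x.N)$. For $\pi$ ending in $E_\lin$ with immediate subderivations $\rho,\sigma$ and subject $NP$, summing over levels gives $S(\pi)=S(\rho)+S(\sigma)+1$, which by the inductive hypothesis equals $L(N)+L(P)+1=L(NP)$.

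No step is really a hurdle; the only place that requires a little care is the reindexing in the $!$ case, where one must check that the level of $\pi$ is exactly one more than the level of $\rho$ so that the sums match up. Everything else is a mechanical unfolding of definitions.
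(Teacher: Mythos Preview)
Your argument is correct, and in fact the paper does not supply a proof of this lemma at all: it is stated without proof as a routine observation. The induction on $\pi$ you give is the natural one, and each case checks out.

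Two small remarks on presentation. In the $E_\lin$ case you write ``summing over levels gives $S(\pi)=S(\rho)+S(\sigma)+1$''; strictly speaking the sum defining $S(\pi)$ runs up to $\partial(\pi)=\max(\partial(\rho),\partial(\sigma))$, so you are implicitly using that $S(\rho,i)=0$ for $i>\partial(\rho)$ (and similarly for $\sigma$). This is true and follows by an immediate induction on the derivation, but it might be worth stating once. The same silent use occurs in the $!$ case, where your reindexing needs $\partial(\pi)=\partial(\rho)+1$ and hence that the sum $\sum_{j\leq\partial(\rho)}S(\rho,j)$ really equals $S(\rho)$; you flag this yourself, and it is indeed the only place that requires any thought.
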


Substitution Lemma can be restated in the following way:
\begin{lem}[Weakening Lemma, revisited]\label{lemma:weakening2}
If $\pi:\Gamma_1\mid\Delta_1\mid\Theta_1\vdash M:A$, then there is
$\rho:\Gamma_1,\Gamma_2\mid\Delta_1,\Delta_2\mid\Theta_1,\Theta_2\vdash M:A$.
such that $S(\pi,i)=S(\rho,i)$ for every $i$.
\end{lem}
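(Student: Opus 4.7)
The plan is to proceed by induction on $\pi$, essentially re-running the construction of Lemma~\ref{lemma:weakening} while additionally tracking, at every step, the equality $S(\pi,i)=S(\rho,i)$ across all levels $i$. The crucial observation — which makes the quantitative refinement essentially free — is that the value $S(\pi,i)$ depends only on the rule at the root of $\pi$ and on the values $S(\phi,j)$ of the immediate subderivations $\phi$, and not on the particular shape of the contexts that decorate its sequents. So, provided the induction preserves the root rule and inductively preserves the level-wise sizes of each subderivation, the desired equality propagates automatically.

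Concretely, for the axiom cases $A^L$ and $A^P$, the witness $\rho$ is the very same axiom with enlarged contexts, and both $\pi$ and $\rho$ satisfy $S(\cdot,0)=1$ and $S(\cdot,i)=0$ for $i\geq 1$. For the introduction rules $I_\lin^L,I_\lin^I$ and the elimination rule $E_\lin$, I invoke the induction hypothesis on each immediate subderivation with the same additions $\Gamma_2,\Delta_2,\Theta_2$, then reapply the same rule at the root; the arithmetic clauses defining $S$ at these rules then yield $S(\rho,i)=S(\pi,i)$ at every level.

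The only case where a small stylistic choice matters is the $!$ rule, since its conclusion already contains three freely chosen components. When $\pi$ ends with
\[
\infer[!]{\Gamma_2'\mid !\Gamma_1,!\Delta_1,!\Theta_1,\Delta_2'\mid\Theta_2'\vdash M:!A}{\phi:\Gamma_1\mid\Delta_1\mid\Theta_1\vdash M:A},
\]
I do not recurse into $\phi$: instead, I form $\rho$ by the very same instance of $!$ applied to the untouched $\phi$, absorbing any additional linear, modal, or parking entries of the weakening directly into the existing $\Gamma_2',\Delta_2',\Theta_2'$ slots respectively. Disjointness of the new material with the old is guaranteed by the hypothesis of the lemma, so this is well-formed. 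Since the premise is unchanged, we have $S(\rho,0)=0=S(\pi,0)$ and $S(\rho,i)=S(\phi,i-1)=S(\pi,i)$ for $i\geq 1$, completing the induction. There is no genuine obstacle; the proof is really a bookkeeping refinement of Lemma~\ref{lemma:weakening} that exploits the slack already present in the $!$ rule.
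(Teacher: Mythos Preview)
Your argument is correct and is precisely the routine induction the paper leaves implicit (no proof is written out for this lemma in the paper; it is stated as the obvious quantitative refinement of Lemma~\ref{lemma:weakening}). The only micro-imprecision is in the $E_\lin$ case: since the two premises must have disjoint linear parts, the extra linear material $\Gamma_2$ should be pushed into just one premise rather than both, but this is a trivial bookkeeping point and does not affect the $S(\cdot,i)$ computation.
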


\begin{lem}[Shifting Lemma,revisited]\label{lemma:linktopark2}
If $\pi:\Gamma,x:A\mid\Delta\mid\Theta\vdash M:B$, then there is
$\rho:\Gamma\mid\Delta\mid x:A,\Theta\vdash M:B$
such that $S(\pi,i)=S(\rho,i)$ for every $i$.
\end{lem}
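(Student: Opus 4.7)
The plan is to proceed by structural induction on $\pi$. The already-proved Shifting Lemma (Lemma~\ref{lemma:lintopark}) supplies, in each case, a candidate derivation $\rho$ of $\Gamma\mid\Delta\mid x:A,\Theta\vdash M:B$ built from the same rule instances as $\pi$; the refinement required here is to check that each rule instance occurs at the \emph{same level} in $\rho$ as in $\pi$, which immediately yields $S(\pi,i)=S(\rho,i)$ for all $i$ from the inductive definition of $S$.

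First I would dispose of the base cases. If $\pi$ ends in $A^L$, the subject is either $x$ itself (switch to $A^P$) or some other variable (apply $A^L$ with $x$ placed in the parking context instead of the linear context). If $\pi$ ends in $A^P$, simply reapply $A^P$ with $x:A$ moved to the parking side. In all three subcases $\rho$ is a single axiom, so $S(\rho,0)=1=S(\pi,0)$ and $S(\rho,i)=0=S(\pi,i)$ for $i\geq 1$. For the unary introduction rules $I_\lin^L$ and $I_\lin^I$, observe that $I_\lin^I$ places its bound variable in the modal context (not the linear one), so $x$ must come from the linear context of the premise. Applying the induction hypothesis to the immediate subderivation $\rho'$ and then the same introduction rule yields $\rho$; the $+1$ at level $0$ from the introduction is identical for $\pi$ and $\rho$, and the rest of the $S$-values agree by IH.

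For $E_\lin$, by the disjointness convention $x$ belongs to exactly one of $\Gamma_1,\Gamma_2$; say $\Gamma_1$. I would apply the induction hypothesis to the first premise to obtain a derivation with $x$ shifted to the parking context, and apply the revisited Weakening Lemma (Lemma~\ref{lemma:weakening2}) to the second premise to add $x:A$ to its parking context; both steps preserve $S(\cdot,i)$ for every $i$. Concluding with $E_\lin$ gives $\rho$, and $S(\rho,0)=S(\rho_1,0)+S(\rho_2,0)+1=S(\pi,0)$, with analogous equalities at higher levels. The only mildly subtle case is $!$. Here the conclusion has shape $\Gamma_2\mid !\Gamma_1,!\Delta_1,!\Theta_1,\Delta_2\mid\Theta_2\vdash M:!A$, so any $x:A$ in the linear part of the conclusion must lie in the fresh weakening context $\Gamma_2$, not in anything inherited from the premise. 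We simply move $x$ from $\Gamma_2$ to $\Theta_2$ and reapply $!$ on the unchanged premise; hence $S(\rho,0)=0=S(\pi,0)$ and $S(\rho,i)=S(\text{premise},i-1)=S(\pi,i)$ for $i\geq 1$.

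There is no real obstacle: the construction in every case either leaves the subderivations alone or modifies them via transformations (IH or revisited weakening) that are already known to preserve the level-indexed size. The mild conceptual point is recognising, in the $!$-case, that the variable being shifted necessarily sits in the outermost weakening zone and so the premise need not be touched at all, which is what keeps the $S$-values aligned across the single $!$-boundary.
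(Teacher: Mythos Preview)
Your argument is correct and is precisely the natural structural induction that the paper implicitly relies on (the paper states this lemma without proof, just as it does for the original Shifting Lemma, merely noting there that the number of rule instances is preserved). Your case analysis is accurate, including the key observation in the $!$ case that $x$ must lie in the fresh weakening context $\Gamma_2$ so the premise is untouched, and your use of the revisited Weakening Lemma in the $E_\lin$ case to synchronise the shared parking context of the second premise is exactly what is needed.
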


\begin{lem}[Substitution Lemma, revisited]\label{lemma:substitution2}\hfill
\begin{enumerate}[\em i)]
  \item
  If $\pi:\Gamma_1,x:A\mid\Delta\mid\Theta\vdash M:B$, 
  $\rho:\Gamma_2\mid\Delta\mid\Theta\vdash N:A$ and $N\in \mathcal{V}$, then
  there is $\sigma:\Gamma_1,\Gamma_2\mid\Delta\mid\Theta\vdash M\{N/x\}:B$
  such that $S(\sigma,i)\leq S(\rho,i)+S(\pi,i)$ for
  every $i$.
  \item
  If $\pi:\Gamma\mid\Delta\mid x:A,\Theta\vdash M:B$, 
  $\rho:\emptyset\mid\Delta\mid\Theta\vdash N:A$ and $N\in \mathcal{V}$, then
  there is $\sigma:\Gamma\mid\Delta\mid\Theta\vdash M\{N/x\}:B$
  such that $S(\sigma,i) \leq S(\pi,0)S(\rho,i)+S(\pi,i)$ for
  every $i$.
  \item
  If $\pi:\Gamma_1\mid\Delta,x:A\mid\Theta\vdash M:B$, 
  $\rho:\Gamma_2\mid\Delta\mid\Theta\vdash N:A$ and $N\in \mathcal{V}$, then
  there is $\sigma:\Gamma_1,\Gamma_2\mid\Delta\mid\Theta\vdash M\{N/x\}:B$
  such that $S(\sigma,0)\leq S(\pi,0)$ and
  $S(\sigma,i)\leq (\sum_{j\leq i}S(\pi,j))S(\rho,i)+S(\pi,i)$ for every $i\geq 1$. 
\end{enumerate}
\end{lem}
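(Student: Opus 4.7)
The plan is to prove the three statements simultaneously by induction on the structure of $\pi$, refining the case analysis already performed in the proof of Lemma~\ref{lemma:substitution} so as to track how the level-indexed size $S(\cdot,i)$ evolves under substitution. The underlying construction of $\sigma$ is in every case identical to the one given there; what is new is only the verification of the numerical bounds.

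For part~(i), the only non-trivial base case is $\pi$ ending with axiom $A^{L}$ on $x$: substituting $\rho$ for that axiom (which contributes $1$ at level~$0$ and nothing elsewhere) gives directly $S(\sigma,i)\leq S(\pi,i)+S(\rho,i)$. The rules $I_\lin^{L}$ and $I_\lin^{I}$ propagate the bound from the IH, and the $!$ case is trivial, since a linear declaration in the conclusion of $!$ is fresh and hence $x$ does not occur free in $M$. The decisive case is $E_\lin$, where linearity forces $x$ to occur in exactly one of the two premises: applying the IH on that side and Lemma~\ref{lemma:weakening2} on the other yields the required additive bound.

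Part~(ii) differs from part~(i) essentially in the $E_\lin$ case: because the parking context is shared between the two premises, \emph{both} may consume $x$, so applying the IH twice gives a sum whose coefficient of $S(\rho,i)$ is $S(\pi_1,0)+S(\pi_2,0)=S(\pi,0)-1$, and the missing unit is absorbed by the slack in the claimed bound. The base case $A^{P}$ is handled as $A^{L}$ above; the $!$ case is again trivial, because a parking declaration in the conclusion of $!$ lives in the fresh context $\Theta_2$ and hence cannot be free in $M$.

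The main obstacle is part~(iii), as was already the case in Lemma~\ref{lemma:substitution}. Writing $A\equiv{!D}$ and recalling that $\rho$ must end with at least one instance of $!$ over a premise $\psi$ typing $N$ with type $D$, the central subcase is when $\pi$ also ends with $!$ over a premise $\phi$. Four possibilities are then distinguished according to where the declaration $x:A$ originates in $\phi$'s contexts---namely in $!\Gamma_3$, $!\Delta_1$, $!\Theta_1$, or as a pure weakening in $\Delta_3$---and the IH is invoked on $\phi$ together with a suitably shifted version of $\psi$ in the form~(i), (iii) and~(ii) respectively (the fourth subcase is trivial). Context alignments are handled by Lemmas~\ref{lemma:weakening2} and~\ref{lemma:linktopark2}, which leave all $S(\cdot,i)$ unchanged. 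The quantitative bound is then obtained by re-applying rule $!$ and exploiting the recurrences $S(\sigma,0)=0$ and $S(\sigma,i)=S(\tau,i-1)$ for $i\geq 1$, where $\tau$ is the result of the inner substitution: the IH bounds on $\tau$ telescope, after matching level shifts on $\pi$ and $\rho$, into the claimed $\bigl(\sum_{j\leq i}S(\pi,j)\bigr)S(\rho,i)+S(\pi,i)$. The structural observation explaining the cleaner bound at level~$0$ is that a modal declaration is never consumed at level~$0$ of any derivation, so no copy of $\rho$ ever contributes to $S(\sigma,0)$ beyond what is already present in $S(\pi,0)$.
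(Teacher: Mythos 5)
Your proposal is correct and takes essentially the same route as the paper: induction on $\pi$ reusing the construction from Lemma~\ref{lemma:substitution}, the additive bound in (i) coming from the split linear context, the shared modal/parking contexts accounting for the coefficient (minus the absorbed unit) in (ii) and (iii), and, in the $!$-versus-$!$ case of (iii), the four-way split on where $x:A$ originates together with the fact that the bounds of (i) and (ii) are dominated by that of (iii), so the one-level shift telescopes into the claimed inequality. The only parts left implicit (the $I_\lin$ and $E_\lin$ cases of (iii)) are routine and go through exactly as in your treatment of (ii).
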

\begin{proof}
For each of the three claims, we can go by induction on the structure
of $\pi$. Here, we do not concentrate on proving the existence of
$\sigma$ (it follows from lemma~\ref{lemma:substitution}) but on
proving that $\sigma$ satisfies the given bounds.  We implicitly use
Lemma~\ref{lemma:weakening2} and Lemma~\ref{lemma:linktopark2} without
explicitly citing them.  Let us first analyze the claim i). We will
prove by induction on $\pi$ that $S(\sigma,i)\leq
S(\rho,i)\min\{1,S(\pi,0)\}+S(\pi,i)$ for every $i$ (observe that
$S(\rho,i)\min\{1,S(\pi,0)\}+S(\pi,i)\leq S(\rho,i)+S(\pi,i)$.  If
$\pi$ is just an axiom, then $\sigma$ is obtained from
$\rho$ by the weakening lemma and
the bound holds. If the last rule in $\pi$ is $I^L_\linear$
($I^I_\linear$), then $\sigma$ is obtained by using the inductive
hypothesis on the immediate premise $\phi$ of $\pi$ obtaining $\psi$
and then applying $I^L_\linear$ ($I^I_\linear$) to $\psi$. In both cases
\begin{eqnarray*}
S(\sigma,0)&=& S(\psi,0)+1\leq S(\rho,0)\min\{1,S(\phi,0)\}+S(\phi,0)+1\\
         &\leq& S(\rho,0)\min\{1,S(\pi,0)\}+S(\pi,0)\\
\forall i\geq 1.S(\sigma,i)&=& S(\psi,i)\leq S(\rho,i)\min\{1,S(\phi,i)\}+S(\phi,i)\\
         &\leq& S(\rho,0)\min\{1,S(\pi,i)\}+S(\pi,i)
\end{eqnarray*}
If the last rule in $\pi$ is $E_\linear$, then $\sigma$ is obtained by
using the inductive hypothesis on one of the immediate premises $\phi$ 
of $\pi$ obtaining $\psi$, applying $E_\linear$ to $\psi$ and the
other premise $\xi$ of $\pi$. We have:
\begin{eqnarray*}
S(\sigma,0)&=& S(\psi,0)+S(\xi,0)+1\\
          &\leq& S(\rho,0)\min\{1,S(\phi,0)\}+S(\phi,0)+S(\xi,0)+1\\
           &\leq& S(\rho,0)\min\{1,S(\pi,0)\}+S(\pi,0)\\
\forall i\geq 1.S(\sigma,i)&=& S(\psi,i)+S(\xi,i)\\
           &\leq& S(\rho,i)\min\{1,S(\phi,i)\}+S(\phi,i)+S(\xi,i)\\
           &\leq& S(\rho,i)\min\{1,S(\pi,i)\}+S(\pi,i)
\end{eqnarray*}
If the last rule in $\pi$ is $!$, then $\sigma$ is just obtained from
$\pi$ by weakening lemma, because $x$ cannot appear free in $M$. 
The inequality easily follows.\par
Let us now prove point ii).
If $\pi$ is just an axiom, we can proceed as previously. If the
last rule in $\pi$ is $I^L_\linear$ ($I^I_\linear$), then $\rho$ is
obtained as in point i) and, in both cases:
\begin{eqnarray*}
S(\sigma,0)&=& S(\psi,0)+1\leq S(\rho,0)S(\phi,0)+S(\phi,0)+1\\
         &\leq& S(\rho,0)S(\pi,0)+S(\pi,0)\\
\forall i\geq 1. S(\sigma,i)&=& S(\psi,i)\leq S(\rho,i)S(\phi,0)+S(\phi,i)\\
         &\leq& S(\rho,i)S(\pi,0)+S(\pi,i)
\end{eqnarray*}
If the last rule in $\pi$ is $E_\linear$, then $\sigma$ is
obtained  by using the inductive hypothesis on both the immediate
premises $\phi$ and $\psi$ of $\pi$ obtaining $\xi$ and $\chi$
and applying $E_\linear$ to them. We obtain:
\begin{eqnarray*}
S(\sigma,0)&=& S(\xi,0)+S(\chi,0)+1\\
           &\leq& (S(\rho,0)S(\phi,0)+S(\phi,0))+(S(\rho,0)S(\psi,0)+S(\psi,0))+1\\
           &\leq& S(\rho,0)(S(\phi,0)+S(\psi,0)+1)+(S(\phi,0)+S(\psi,0)+1)\\
           &=& S(\rho,0)S(\pi,0)+S(\pi,0)\\
\forall i\geq 1.S(\sigma,i)&=& S(\xi,i)+S(\chi,i)\\
           &\leq& (S(\rho,i)S(\phi,0)+S(\phi,i))+(S(\rho,i)S(\psi,0)+S(\psi,i))\\
           &=& S(\rho,i)(S(\phi,0)+S(\psi,0)+1)+(S(\phi,i)+S(\psi,i))\\
           &=& S(\rho,i)S(\pi,0)+S(\pi,i)
\end{eqnarray*}
If the last rule in $\pi$ is $!$, the $\sigma$ is again obtained by $\pi$ and the
inequality follows.\par
Let us now prove claim iii). Notice that the last rule in
$\rho$ must be $!$ rule, because
$A$ is modal and $N$ is a value. 
If the last rule in $\pi$ is $I^L_\linear$ ($I^I_\linear$), then $\sigma$ is
obtained in the usual way and:\\
\begin{eqnarray*}
S(\sigma,0)&=& S(\psi,0)+1\leq S(\phi,0)+1 = S(\pi,0)\\
\forall i\geq 1.S(\sigma,i)&=& S(\psi,i)\leq S(\rho,i)(\sum_{j\leq i}S(\phi,j))+S(\phi,i)\\
         &=& S(\phi,i)(\sum_{j\leq i}S(\pi,j))+S(\pi,i)\;\;\forall i\geq 1
\end{eqnarray*}
If the last rule in $\pi$ is $E_\linear$, then we apply the inductive hypothesis 
to the immediate premises $\phi$ and $\psi$ of $\pi$ and to a type derivation which
is structurally equivalent to $\rho$. We obtain $\xi$ and $\chi$ and
apply $E_\linear$ to them, obtaining a type derivation which is structurally
equivalent to the desired $\sigma$. Now we have:
\begin{eqnarray*}
S(\sigma,0)&=& S(\xi,0)+S(\chi,0)+1\leq S(\phi,0)+S(\psi,0)+1=S(\pi,0)\\
\forall i\geq 1.S(\sigma,i)&=& S(\xi,i)+S(\chi,i)\\
           &\leq& S(\rho,i)(\sum_{j\leq i}S(\phi,j))+S(\phi,i)+
                  S(\rho,i)(\sum_{j\leq i}S(\psi,j))+S(\psi,i)\\
           &=& S(\rho,i)(\sum_{j\leq i}(S(\phi,j)+S(\psi,j)))+S(\phi,i)+S(\psi,i)\\
           &=& S(\rho,i)(\sum_{j\leq i}S(\pi,j))+S(\pi,i)\\
\end{eqnarray*}
If the last rule in $\pi$ is $!$, then 
we can  suppose the last rule in $\rho$
to be a $!$ and let $\psi$ be the immediate premise
of $\rho$. We first
apply the induction hypothesis (or one of the
other two claims) to the immediate premise $\phi$ of
$\pi$ and to $\psi$ obtaining $\xi$; then, 
we apply rule $!$ to $\xi$ and we get $\sigma$. 
Clearly, $S(\sigma,0)=0$ by definition. For every
$i\geq 0$, we have that
$$
S(\xi,i)\leq (\sum_{j\leq i}S(\phi,j))S(\psi,i)+S(\phi,i)
$$
independently on the way we get $\xi$. Indeed,
\begin{eqnarray*}
\min\{1,\S(\phi,i)\}S(\psi,i)+S(\phi,i)&\leq&(\sum_{j\leq i}S(\phi,j))S(\psi,i)+S(\phi,i);\\
S(\phi,0)S(\psi,i)+S(\phi,i)&\leq&(\sum_{j\leq i}S(\phi,j))S(\psi,i)+S(\phi,i).
\end{eqnarray*}
As a consequence, for every $i\geq 1$,
\begin{eqnarray*}
S(\sigma,i)&=& S(\xi,i-1)\leq (\sum_{j\leq i-1}S(\phi,j))S(\psi,i-1)+S(\phi,i-1)\\
           &\leq& (\sum_{j\leq i}S(\pi,j))S(\rho,i)+S(\pi,i)
\end{eqnarray*}
This concludes the proof.
\end{proof}

The following can be thought of as a strengthening of subject reduction  
and is a corollary of Lemma~\ref{lemma:substitution2}. 
\begin{prop}\label{prop:subjred2} 
If $\pi:\Gamma\mid\Delta\mid\Theta\vdash M:A$,  
and $M\rightarrow_v N$ by reducing a redex 
at level $i$ in $\pi$, then there is  
$\rho:\Gamma\mid\Delta\mid\Theta\vdash N:A$
such that
\begin{eqnarray*}
\forall j<i.S(\rho,j)&=&S(\pi,j)\\
S(\rho,i)&<&S(\pi,i)\\
\forall j>i.S(\rho,j)&\leq& S(\pi,j)(\sum_{k\leq j}S(\pi,k))
\end{eqnarray*} 
\end{prop}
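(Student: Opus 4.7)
My plan is to follow the structure of the proof of the original subject reduction theorem (Theorem~\ref{theo:SR}), but carefully tracking sizes through the Substitution Lemma (revisited). The key is that a call-by-value redex $(\lambda x.P)V$ in $M$ corresponds to a subderivation $\pi'$ of $\pi$ whose final step is $E_\lin$, with left premise $\pi_\lambda$ ending in $I^L_\lin$ or $I^I_\lin$ and deriving $\lambda x.P$, and right premise $\pi_V$ deriving the value $V$. Let $\pi_P$ be the premise of $\pi_\lambda$. Since this redex sits at level $i$ in $\pi$, the subderivation $\pi'$ has its root at level $i$, and consequently a rule at level $k$ inside $\pi'$ lives at level $k+i$ in $\pi$; in particular $S(\pi,j)=R_j+S(\pi',j-i)$ for $j\geq i$, where $R_j$ is the contribution of the rules outside $\pi'$, and $S(\pi,j)=R_j$ for $j<i$.

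Next, I apply the appropriate clause of Lemma~\ref{lemma:substitution2} to $\pi_P$ and $\pi_V$. If $\pi_\lambda$ ends in $I^L_\lin$, then $x$ is in the linear context and clause~(i) gives $\sigma$ with $S(\sigma,k)\leq S(\pi_V,k)+S(\pi_P,k)$. If $\pi_\lambda$ ends in $I^I_\lin$, then $x$ is modal; by inspecting $\pi_V$ I observe that since $V$ is a value and the type of $x$ is modal, the derivation $\pi_V$ must end with the $!$ rule (as in the proof of Lemma~\ref{lemma:substitution}), so clause~(iii) applies and gives $S(\sigma,0)\leq S(\pi_P,0)$ together with $S(\sigma,k)\leq\bigl(\sum_{l\leq k}S(\pi_P,l)\bigr)S(\pi_V,k)+S(\pi_P,k)$ for $k\geq 1$. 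In both cases I then form $\rho$ by replacing the occurrence of $\pi'$ inside $\pi$ by $\sigma$; this is well-typed since $\sigma$ proves the same sequent as $\pi'$ but on $P\{V/x\}$.

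It remains to verify the three size bounds. For $j<i$, the redex does not contribute to level $j$ at all, so $S(\rho,j)=R_j=S(\pi,j)$ trivially. For $j=i$, I compute $S(\pi',0)=S(\pi_P,0)+S(\pi_V,0)+2$ (an $E_\lin$ plus an $I_\lin$ rule), whereas in both clauses of the Substitution Lemma I get $S(\sigma,0)\leq S(\pi_P,0)+S(\pi_V,0)$; hence $S(\sigma,0)<S(\pi',0)$ strictly, so $S(\rho,i)=R_i+S(\sigma,0)<R_i+S(\pi',0)=S(\pi,i)$. For $j>i$, I bound $S(\sigma,j-i)$ using the relevant clause of Lemma~\ref{lemma:substitution2} and then estimate $S(\pi_P,l)\leq S(\pi',l)\leq S(\pi,l+i)$ and similarly for $\pi_V$, so that $\sum_{l\leq j-i}S(\pi_P,l)\leq\sum_{k\leq j}S(\pi,k)$ and $S(\pi_V,j-i)\leq S(\pi,j)$; combining with $R_j\leq S(\pi,j)$ yields the desired inequality $S(\rho,j)\leq S(\pi,j)\bigl(\sum_{k\leq j}S(\pi,k)\bigr)$.

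The main obstacle will be the case $j>i$ under the $I^I_\lin$-clause: here the substitution genuinely duplicates $V$ at deeper levels, producing the multiplicative factor $\sum_{l\leq j-i}S(\pi_P,l)$, and I must verify that after adding the contribution $R_j$ from rules outside $\pi'$ the total still fits inside $S(\pi,j)\bigl(\sum_{k\leq j}S(\pi,k)\bigr)$. I expect this to go through by monotonicity of the size function and by exploiting that $S(\pi,j)\geq S(\pi_V,j-i)$, but it is the step that requires the most careful bookkeeping.
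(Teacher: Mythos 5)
Your proof is correct and is essentially the paper's own argument (which the paper states in three lines): $\rho$ is obtained from $\pi$ by replacing the subderivation typing the redex at level $i$ with the derivation supplied by Lemma~\ref{lemma:substitution2}, the three bounds being inherited from that lemma via the level-shift decomposition $S(\pi,j)=R_j+S(\pi',j-i)$ that you make explicit. For the step you flag, the clean way to close the $j>i$ modal case is to use $\sum_{k\le j}S(\pi,k)\ge 1$ (guaranteed since the redex forces $S(\pi,i)\ge 1$) to absorb the additive part: $S(\rho,j)\le R_j+S(\pi_P,j-i)+\bigl(\sum_{l\le j-i}S(\pi_P,l)\bigr)S(\pi_V,j-i)\le\bigl(R_j+S(\pi_P,j-i)+S(\pi_V,j-i)\bigr)\sum_{k\le j}S(\pi,k)=S(\pi,j)\sum_{k\le j}S(\pi,k)$, which avoids the slightly too weak $S(\pi,j)\bigl(1+\sum_{k\le j}S(\pi,k)\bigr)$ that a more naive combination of your estimates would yield.
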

\begin{proof}
Type derivation $\rho$
is identical to $\pi$ up
to level $i$, so the equality $S(\rho,j)=S(\pi,j)$
holds for all levels $j<i$. At levels
$j\geq i$, the only differences between
$\rho$ and $\pi$ are due to the replacement
of a type derivation $\phi$ for 
$(\lambda x.L)P$ with another type derivation
$\psi$ for
$L\{P/x\}$. $\psi$ is obtained by 
Lemma~\ref{lemma:substitution2}. The needed
inequalities follow from the ones in
Lemma~\ref{lemma:substitution2}.
\end{proof}
If $\pi$ is obtained from $\rho$ by reducing a
redex at level $i$ as in Proposition~\ref{prop:subjred2}, 
then we will write $\pi\rightarrow_v^i\rho$.
Consider now a term $M$ and a derivation
$\pi:\Gamma\mid\Delta\mid\Theta\vdash M:A$.
By Proposition~\ref{prop:subjred2}, every 
reduction sequence 
$M\rightarrow_v N\rightarrow_v L\rightarrow_v\ldots$
can be put in correspondence with a sequence
$\pi\rightarrow_v^i\rho\rightarrow_v^j\sigma\rightarrow_v^k\ldots$
(where $\rho$ types $N$, $\sigma$ types $L$, etc.). The following
result give bounds on the lengths of these sequences and on the
possible growth during normalization.
\begin{prop}\label{prop:bounding}
For every $d\in\N$, there are elementary
functions $f_d,g_d:\N\rightarrow\N$ such that,
for every sequence
$$
\pi_0\rightarrow_v^{i_0}\pi_1\rightarrow_v^{i_1}\pi_2\rightarrow^{i_2}_v\ldots
$$
it holds that
\begin{enumerate}[$\bullet$]
  \item
  For every $i\in\N$, $\sum_{e\leq d}S(\pi_i,e)\leq f_d(S(\pi_0))$;
  \item
  There are at most $g_d(S(\pi_0))$ reduction steps with
  indices $e\leq d$.
\end{enumerate}
\end{prop}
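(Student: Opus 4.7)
The plan is to prove the proposition by induction on $d$, establishing both bullets simultaneously. The main tool is Proposition~\ref{prop:subjred2}, which tells us exactly how the profile $(S(\pi,0),S(\pi,1),\ldots)$ evolves under a reduction step: a step at level $i$ leaves all $S(\pi,j)$ with $j<i$ unchanged, strictly decreases $S(\pi,i)$ by at least $1$, and at levels $j>i$ the value can grow only polynomially, by a factor bounded by $\sum_{k\leq j}S(\pi,k)$.

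For the base case $d=0$, no reduction at level $\geq 1$ touches level $0$, and every level-$0$ step strictly decreases $S(\pi,0)\in\N$. Hence there can be at most $S(\pi_0,0)\leq S(\pi_0)$ level-$0$ reductions, and $S(\pi_i,0)\leq S(\pi_0,0)\leq S(\pi_0)$ throughout. Taking $f_0(n)=g_0(n)=n$ works.

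For the inductive step, assume $f_{d-1}$ and $g_{d-1}$ are given. By the induction hypothesis, along the whole sequence $\sum_{e\leq d-1}S(\pi_i,e)\leq F:=f_{d-1}(S(\pi_0))$, and there are at most $K:=g_{d-1}(S(\pi_0))$ reduction steps at levels $\leq d-1$. I will now bound the evolution of $S(\pi_i,d)$. Consider the time instants $t_1<t_2<\cdots<t_m$ (with $m\leq K$) at which the low-level ($\leq d-1$) steps occur, and set $u_k=t_k+1$, $u_0=0$. Between $u_{k-1}$ and $t_k$, only steps at level $\geq d$ fire; these leave $S(\pi,d)$ unchanged (level $>d$) or strictly decrease it (level $=d$), hence $S(\pi_{t_k},d)\leq S(\pi_{u_{k-1}},d)$. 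At step $t_k$, Proposition~\ref{prop:subjred2} gives
\[
S(\pi_{u_k},d)\;\leq\;S(\pi_{t_k},d)\cdot\sum_{e\leq d}S(\pi_{t_k},e)\;\leq\;\alpha_{k-1}(\alpha_{k-1}+F),
\]
where $\alpha_k:=S(\pi_{u_k},d)$. Setting $\beta_k:=\alpha_k+F$ and assuming $F\geq 1$, one checks $\beta_k\leq\beta_{k-1}^{2}$, so $\beta_m\leq\beta_0^{2^{K}}$ with $\beta_0\leq S(\pi_0)+F$. Thus $\alpha_k\leq H:=(S(\pi_0)+F)^{2^{K}}$, which is elementary in $S(\pi_0)$. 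Since $S(\pi_i,d)$ is non-increasing between consecutive low-level steps, $S(\pi_i,d)\leq H$ for every $i$, so we may take $f_d(n)=F+H$.

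Finally, to count level-$d$ reductions, observe that between times $u_{k-1}$ and $t_k$ the value $S(\pi,d)$ decreases by at least $1$ at each level-$d$ step, so at most $\alpha_{k-1}\leq H$ level-$d$ steps occur there, and likewise after $t_m$. Therefore the total number of level-$d$ reductions is bounded by $(K+1)H$, and combined with the $K$ low-level reductions of the induction hypothesis we can set $g_d(n)=K+(K+1)H$, which remains elementary. I expect the main obstacle to be the precise bookkeeping in the quadratic recurrence $\alpha_k\leq\alpha_{k-1}(\alpha_{k-1}+F)$: one must set up the shifted quantity $\beta_k=\alpha_k+F$ correctly so that iterated squaring yields a clean bound, and then verify that the resulting $d$-fold iteration of squaring composed with $f_{d-1}$ and $g_{d-1}$ stays within the elementary functions (i.e.\ a tower of exponentials of height depending on $d$ only).
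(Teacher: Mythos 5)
Your proof is correct and follows essentially the same route as the paper: induction on $d$, using Proposition~\ref{prop:subjred2} to note that $S(\cdot,d)$ can only grow at the (at most $g_{d-1}(S(\pi_0))$) steps of level $<d$, bounding each growth by the recurrence $v\mapsto v(v+f_{d-1}(S(\pi_0)))$, and counting level-$d$ steps interval by interval. The only difference is cosmetic: you solve the recurrence by the shift $\beta_k=\alpha_k+F$ and iterated squaring, giving the uniform bound $(S(\pi_0)+F)^{2^{K}}$, whereas the paper proves the explicit bound $(f_{d-1}(S(\pi_0))+n)^{2^{n+1}}$ after $n$ increases by induction on $n$.
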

\begin{proof}
We go by induction on $d$ and define
$f_d$ and $g_d$ such that the given
inequalities hold and, additionally,
$f_d(n)\geq n$ for each $n\in\N$. $f_0$ and $g_0$ 
are both the identity on natural numbers,
because $S(\pi_0,0)$ can only decrease during
reduction and it can do that at most
$S(\pi_0,0)$ times. Consider now $d\geq 1$.
Each time $S(\pi_i,d)$ grows, its value
goes from $S(\pi_i,d)$ to at 
most $S(\pi_i,d)(S(\pi_i,d)+f_{d-1}(S(\pi_0)))$, because
by Proposition~\ref{prop:subjred2} it can grow
to, at most $S(\pi_i,d)(\sum_{k\leq d}S(\pi_i,k))$
and, by inductive hypothesis
$$
\sum_{k\leq d}S(\pi_i,k)=S(\pi_i,d)+\sum_{k\leq d-1}S(\pi_i,k)\leq S(\pi_i,d)+f_{d-1}(S(\pi_0)).
$$
We claim that after having increased $n$ times,
$S(\pi_i,d)$ is at most $(f_{d-1}(S(\pi_0))+n)^{2^{n+1}}$.
Indeed, initially 
$$
S(\pi_i,d)\leq S(\pi_0,d)\leq S(\pi_0)\leq (f_{d-1}(S(\pi_0)))^2
$$
And, after $n\geq 1$ increases,
\begin{eqnarray*}
S(\pi_i,d)&\leq& (f_{d-1}(S(\pi_0))+n-1)^{2^{n}}((f_{d-1}(S(\pi_0))+n-1)^{2^{n}}+
                 f_{d-1}(S(\pi_0)))\\    
          &\leq& (f_{d-1}(S(\pi_0))+n)^{2^{n}}((f_{d-1}(S(\pi_0))+n-1)^{2^{n}}+
                 (f_{d-1}(S(\pi_0))+n-1)^{2^{n-1}})\\
          &\leq& (f_{d-1}(S(\pi_0))+n)^{2^{n}}((f_{d-1}(S(\pi_0))+n-1+1)^{2^{n-1}})^2\\
          &=& (f_{d-1}(S(\pi_0))+n)^{2^{n+1}}
\end{eqnarray*}
From the above discussion, it follows that the functions 
\begin{eqnarray*}
f_d(n)&=&f_{d-1}(n)+(f_{d-1}(S(\pi_0))+g_{d-1}(n))^{2^{g_{d-1}(n)+1}}\\
g_d(n)&=&g_{d-1}(n)+\sum_{i=0}^{g_{d-1}(n)}(f_{d-1}(S(\pi_0))+i)^{2^{i+1}}
\end{eqnarray*}
are elementary and satisfy the conditions above.
This concludes the proof.
\end{proof} 
  
\begin{thm}\label{theo:BOUNDS}
For every $d\in\N$ there are elementary functions $p_d,q_d:\N\rightarrow\N$
such that whenever
$\pi:\Gamma\mid\Delta\mid\Theta\vdash M:A$,
the length of call-by-value reduction sequences starting from $M$
is at most $p_{\partial(\pi)}(L(M))$ and the length of any
reduct of $M$ is at most $q_{\partial(\pi)}(L(M))$.
\end{thm}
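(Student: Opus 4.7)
The plan is to read this off as a direct corollary of Proposition~\ref{prop:bounding}, using the Lemma that $S(\pi) = L(M)$ and the fact that levels do not grow under reduction. Concretely, I will set $p_d = g_d$ and $q_d = f_d$, where $f_d,g_d$ are the elementary functions produced by Proposition~\ref{prop:bounding}.

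First I would verify the key observation that reduction never increases the level: if $\pi \rightarrow_v^i \rho$, then $\partial(\rho) \leq \partial(\pi)$. This follows from Proposition~\ref{prop:subjred2}: for $j > \partial(\pi)$ we have $S(\pi,j) = 0$, hence the bound $S(\rho,j) \leq S(\pi,j)\bigl(\sum_{k\leq j} S(\pi,k)\bigr)$ forces $S(\rho,j) = 0$ as well. By induction along the reduction sequence $\pi_0 \rightarrow_v^{i_0} \pi_1 \rightarrow_v^{i_1} \cdots$, we obtain $\partial(\pi_n) \leq \partial(\pi_0) = \partial(\pi)$ for every $n$, and consequently each reduction index satisfies $i_n \leq \partial(\pi)$.

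Next, applying Proposition~\ref{prop:bounding} with $d = \partial(\pi)$: the second clause gives that the number of reduction steps with indices $\leq \partial(\pi)$ is bounded by $g_{\partial(\pi)}(S(\pi_0))$. Since by the previous paragraph \emph{every} reduction step is of such an index, this is a bound on the total length of the reduction sequence. Using the Lemma that $S(\pi) = L(M)$, this becomes $g_{\partial(\pi)}(L(M))$, so $p_{\partial(\pi)} := g_{\partial(\pi)}$ works. For the size bound, the first clause of Proposition~\ref{prop:bounding} yields $\sum_{e \leq \partial(\pi)} S(\pi_n,e) \leq f_{\partial(\pi)}(L(M))$ for every reduct $\pi_n$, and since $\partial(\pi_n) \leq \partial(\pi)$ we in fact have $L(N) = S(\pi_n) = \sum_{e \leq \partial(\pi_n)} S(\pi_n,e) \leq f_{\partial(\pi)}(L(M))$ for the corresponding reduct $N$, so $q_{\partial(\pi)} := f_{\partial(\pi)}$ works.

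The only potentially delicate point is the monotonicity of the level under reduction; everything else is a routine repackaging. Once that is in hand the two elementary bounds fall out of Proposition~\ref{prop:bounding} by reading off its two conclusions at $d = \partial(\pi)$ and substituting $L(M)$ for $S(\pi_0)$ via the size-length equality.
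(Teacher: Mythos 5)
Your proposal is correct and follows exactly the paper's route: the paper proves Theorem~\ref{theo:BOUNDS} simply as ``immediate from Proposition~\ref{prop:bounding}'', and your argument just spells out the details left implicit there --- the level-monotonicity observation via Proposition~\ref{prop:subjred2} (since $S(\pi,j)=0$ for $j>\partial(\pi)$ forces $S(\rho,j)=0$), the identification $S(\pi_0)=L(M)$, and the choices $p_d=g_d$, $q_d=f_d$. Nothing further is needed.
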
 
\begin{proof}
This is immediate from Proposition~\ref{prop:bounding}.
\end{proof}

\section{Type Inference}\label{sect:typeinf}
We prove, in a constructive way, that the type inference problem for
\ETAS{} is decidable. Namely a type inference algorithm is
designed such that, for every lambda term $M$ it produces
a \emph{principal typing} from which all and only its typings can be
obtained by a suitable substitution. The substitution is a partial
function, defined if it satisfies a set of linear constraints.  If
there is not a substitution defined on its principal typing, then $M$
is not typable.  We will also prove that the computational complexity
of the type inference procedure is of the same order as the type
inference for simple type assignment system.

The design of the algorithm is based on the following Generation Lemma.
\begin{lem}[Generation Lemma] 
Let $\Gamma \mid \Delta \mid \Theta \vdash M:A$.
 \begin{enumerate}[\em i)]
\item Let $M\equiv x.$ If $A$ is linear, then either $\{x:A\} \subseteq \Gamma$ or $\{x:A\} 
\subseteq \Theta$. 
Otherwise, $\{x:A\}\in \Delta$.
\item Let $M \equiv \lambda x.N$. Then $A$ is of the shape $\underbrace{!...!}_{n}(B \lin C) 
\ (n \geq 0)   $.
\begin{enumerate}[$\bullet$]
\item Let $n=0$. If $B$ is linear then $\Gamma,x:B \mid \Delta \mid \Theta \vdash N:C$, else
$\Gamma \mid \Delta,x:B \mid \Theta \vdash N:C.$ 
\item Let $n >0$. Then $\emptyset \mid \Delta \mid \emptyset \vdash M:A$ and
$\Gamma' \mid \Delta' \mid \Theta' \vdash N:C$, where 
$\Delta = \underbrace{!...!}_{n}((\Gamma' \cup \Delta' \cup \Theta')-\{x:B\}  )   $.   
\end{enumerate}
\item Let $M \equiv PQ$. Then $A$ is of the shape $\underbrace{!...!}_{n}B  \ (n \geq 0)   $,
$\Gamma_{1}  \mid \Delta' \mid \Theta' \vdash P: C \lin \underbrace{!...!}_{m}B$ and  
$\Gamma_{2}  \mid \Delta' \mid \Theta' \vdash Q:C$, for some $m \leq n$.
\begin{enumerate}[\em(a)]
\item If $n-m=0$, then $\Gamma=\Gamma_{1}\cup \Gamma_{2}  $, $\Delta=\Delta'$ and $\Theta =\Theta'$.
\item If $n-m>0$, then $\emptyset \mid \Delta \mid \emptyset \vdash M:A$, where
$\Delta = \underbrace{!...!}_{n-m}( \Gamma_{1}\cup \Gamma_{2}\cup \Delta' \cup \Theta'  )  $.      
\end{enumerate}
 \end{enumerate}
 \end{lem}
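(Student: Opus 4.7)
The plan is to prove each of the three cases by induction on the structure of the derivation $\pi$ of $\Gamma \mid \Delta \mid \Theta \vdash M:A$. The crucial observation is that in \ETAS\ every rule except $!$ is syntax-directed on its subject, so once the shape of $M$ is fixed the only freedom lies in how many instances of $!$ sit above the syntax-directed step that ``creates'' $M$. For each case I would peel off the topmost (possibly empty) sequence of $!$-applications from $\pi$ to expose a subderivation whose last rule has the right subject. For case i), $M \equiv x$, the peeled derivation ends in $A^L$ or $A^P$; if the $!$-prefix is empty then $A$ is linear and $x:A$ must sit in $\Gamma$ or $\Theta$ as claimed, otherwise the type has been successively boxed to $A$ and all the original linear and parking premises---including the axiom premise $x:B$---have been shifted into $\Delta$, placing $x:A\in\Delta$. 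The fact that $\Gamma$ and $\Theta$ only hold linear types by the well-formedness of judgements forbids any other configuration.

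For case ii), $M \equiv \lambda x.N$, peeling $n$ applications of $!$ strips exactly $n$ bangs from the type, yielding $A = \underbrace{!\cdots!}_n(B \lin C)$, and exposes a last rule $I_\lin^L$ or $I_\lin^I$. When $n = 0$ the premise of the $I_\lin$ rule is a derivation of $N$ in the contexts claimed by the statement. When $n > 0$, tracing the context transformations performed by the $n$ peeled instances of $!$ shows that $\Gamma$ and $\Theta$ can be taken empty and $\Delta$ has the form $\underbrace{!\cdots!}_n((\Gamma'\cup\Delta'\cup\Theta')-\{x:B\})$ claimed. Case iii), $M \equiv PQ$, is analogous: I would peel $n - m$ applications of $!$ to reach a last rule $E_\lin$ whose two premises give the derivations for $P$ and $Q$ with the stated types. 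The parameter $m$ records the number of bangs already present inside $P$'s return type $\underbrace{!\cdots!}_m B$ before the $n-m$ root $!$-applications are added.

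The main subtlety, and the main obstacle, is that the rule $!$ is permissive in two directions: it freely adds fresh premises $\Gamma_2$, $\Delta_2$, $\Theta_2$ in its conclusion, and it boxes the subderivation's linear and parking contexts into the modal context. This means that the contexts read off the root of $\pi$ are an over-approximation of the contexts carried by the subderivation exposed after peeling. Recovering the exact equalities claimed for $\Delta$ in the $n > 0$ branch of ii) and in the $n - m > 0$ branch of iii) therefore requires careful bookkeeping through the peeled sequence, and repeated appeal to the Weakening Lemma (Lemma~\ref{lemma:weakening}) to harmonise contexts across the peeled $!$-applications. Once this bookkeeping is done, the rest of the argument is routine inversion on the syntax-directed rules.
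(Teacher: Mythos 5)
Your overall strategy --- peel off the (possibly empty) prefix of $!$-instances, which is the only non-syntax-directed rule, and then invert the exposed rule determined by the shape of $M$ --- is certainly the intended argument; the paper in fact states this lemma without any proof, so there is nothing to compare against, and your treatment of case i), of case ii) with $n=0$, and of case iii)(a) is routine inversion and is fine. The claim $\emptyset\mid\Delta\mid\emptyset\vdash M:A$ in the $n>0$ branches is also unproblematic: the last rule is $!$, and its conclusion's linear and parking contexts are arbitrary, so one may simply re-instantiate them as empty without touching $\Delta$.

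The genuine gap is in the step you defer to ``careful bookkeeping'' plus the Weakening Lemma, namely the exact equalities $\Delta=\underbrace{!\cdots!}_{n}((\Gamma'\cup\Delta'\cup\Theta')-\{x:B\})$ in case ii) and $\Delta=\underbrace{!\cdots!}_{n-m}(\Gamma_1\cup\Gamma_2\cup\Delta'\cup\Theta')$ in case iii)(b). Weakening only lets you \emph{add} hypotheses to the inner judgement, and every hypothesis added there reappears in $\Delta$ with $n$ (resp.\ $n-m$) outer bangs after the peeled prefix; but $\Delta$ may contain hypotheses that were introduced as the side context $\Delta_2$ of an intermediate or of the last $!$-instance and therefore carry \emph{fewer} outer bangs, so no choice of $\Gamma',\Delta',\Theta'$ can make the equality hold. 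Concretely, $\emptyset\mid y:!!D,\,z:!a\mid\emptyset\vdash\lambda x.x:!!(b\lin b)$ (with $a,b$ base types) is derivable --- apply $!$ twice to $\emptyset\mid\emptyset\mid\emptyset\vdash\lambda x.x:b\lin b$, adding $y:!D$ at the first instance and $z:!a$ at the second --- yet $\{y:!!D,\,z:!a\}$ is not of the form $!!(\Sigma)$ for any context $\Sigma$. So the bookkeeping you postpone cannot be completed as stated: you must either weaken the equalities to inclusions (keeping $\emptyset\mid\Delta\mid\emptyset\vdash M:A$), or first prove a strengthening observation --- an assumption on a variable free in $M$ necessarily survives every peeled $!$-instance and thus acquires the full number of outer bangs, while assumptions on variables not free in $M$ can be discarded --- and assert the equality only for that relevant part of $\Delta$. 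Making this explicit is exactly the non-routine content of the lemma; asserting that weakening ``harmonises'' the contexts is where your proof, as written, would fail.
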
  
\noindent The principal typing is described through the notion of a
 \emph{type scheme}, which is an extension of that one used in~\cite{Coppola03tlca} in
 the context of $\LambdaEA$ and \NEAL.  Roughly speaking, a type scheme
 describes a class of types, i.e.  it can be transformed into a type
 through a suitable notion of a substitution.

\begin{defi}\hfill
\begin{enumerate}[i)]
\item\emph{Linear schemes} and \emph{schemes} are
  respectively defined by the grammars
\begin{align*}
  \mu & ::=\alpha \mid \sigma \linear \sigma\\
  \sigma &::= \mu \mid !^p\mu.
\end{align*}
where $\alpha$ belongs to a countable set of scheme variables and the \emph{exponential} $p$  
is defined by the grammar
\begin{displaymath}
 p::= a \mid p+p  
  \end{displaymath}
  where $a$ ranges over a countable set of \emph{literals}.
 Linear schemes are
  ranged over by $\mu, \nu$, schemes are ranged over by $\sigma,
  \tau, \rho$, exponentials are ranged over by $p,q,r$ and literals are ranged over by $a,b$.
  
 Note that the grammar does not generate nesting exponentials, i.e., $!^{p}!^{q}\alpha$ is not a correct scheme, while
 $!^{p+q}\alpha$ is correct. 
\item A \emph{modality set} is a set of linear constraints in the
form either $p=q$ or $p>0$ or $p=0$, where $p$ and $q$ are exponentials. Modality sets are ranged over by C.
\item A \emph{type scheme} is denoted by $\sigma\restriction_{C}$, where $\sigma$ is a scheme 
and $C$ is
a modality set. Type schemes will be ranged over by $\zeta, \theta$. Let $\mathcal{T}$ 
denote the set of type schemes.
  \item $\overline{\sigma \restriction_{C}}$ denotes the simple type skeleton underlying the 
 type scheme $\sigma\restriction_{C}$, and it is defined as follows:
  \begin{align*}
\overline{\alpha\restriction_{C}}=\alpha;\\
\overline{\sigma \lin \tau\restriction_{C}} =\overline{\sigma\restriction_{C}}\lin \overline{\tau\restriction_{C}}\\
 \overline{!^{p}\sigma\restriction_{C}} = \overline{\sigma\restriction_{C}}\\
\end{align*}
\item A \emph{scheme substitution} $S$ is a partial function from type schemes
  to types, replacing scheme variables by types and literals by
  natural numbers, in such a way that constraints in $C$ are satisfied. If $p$ is an exponential,
  let $S(p)$ be the result of applying the scheme substitution $S$ on all the literals in $p$, e.g. 
 if $p$ coincides with $a_{1}+...+a_{n}$, then $S(p)$ is $S(a_1)+...+S(a_n)$. 
 $C$ is satisfied by $S$ if, for every constraint
 $p=q$ ($p>0, p=0$) in $C$, $S(p)=S(q)$ ($S(p)>0, S(p)=0$)
Clearly the solvability of a
set of linear constraints is a decidable problem. 
The application of a substitution 
  $S$ satisfying $C$ to a type scheme $\sigma\restriction_C$ is defined inductively as follows:
  \begin{align*}
    S (\alpha\restriction_C) &=A &\texttt{ if } [\alpha \mapsto A]\in S;\\
    S (\sigma \linear \tau\restriction_C)& =S (\sigma\restriction_C) \linear S (\tau\restriction_C) &;\\
    S (!^{p}\mu\restriction_C)&=\underbrace{!...!}_n S (\mu\restriction_C) & 
    \texttt{ if } p=a_{1}+...+a_{m}\\
    & & n=S(a_1)+...+S(a_m).
  \end{align*}
   If $C$ is not satisfied by $S$, then $S(\sigma\restriction_C)$ is undefined.
\end{enumerate}
\end{defi}

Binary relation $\equiv$ is extended to denote the syntactical
identity between both types, schemes and type schemes.  Making clear what we
said before, a type scheme can be seen as a description of the set of
all types that can be obtained from it through a scheme
substitution defined on it. 
\medskip

A substitution is a total function from type schemes to type schemes mapping scheme variables 
to schemes, and generating some constraints. 
A substitution is denoted by a pair 
$\langle s,C\rangle$, where $s$ is a function from scheme variables to schemes and $C$ is a modality set. 
Substitutions will be ranged over by $t$. 
The behaviour of $\langle s,C\rangle$ is defined as follows.
\begin{align*}
    \langle s,C\rangle(\alpha\restriction_{C'}) &=\sigma\restriction_{C \cup C'} &\mbox{    if    } [\alpha \mapsto \sigma]\in s;\\
    \langle s,C\rangle (\sigma \linear \tau\restriction_{C'})& =\sigma' \linear \tau'\restriction_{C''} &\mbox{    if    } 
    \langle s,C\rangle(\sigma\restriction_{C'})=
    \sigma'\restriction_{C'''} \\
    & &\mbox{    and    } \langle s,C'''\rangle(\tau\restriction_{C'})=
    \tau'\restriction_{C''};\\
    \langle s,C\rangle (!^{p}\mu\restriction_{C'})&=!^{p}\nu\restriction_{C''} & \mbox{    if    }\langle s,C\rangle
    (\mu\restriction_{C'})=\nu\restriction_{C''};\\
    \langle s,C\rangle (!^{p}\mu\restriction_{C'})&=!^{r}\nu\restriction_{C''\cup \{r=p+q\}} & \mbox{    if    }
    \langle s,C\rangle(\mu\restriction_{C'})=!^{q}\nu\restriction_{C''} .
  \end{align*}
  Note that the last rule is necessary in order to preserve the 
  correct syntax of schemes, where the nesting of exponentials is not allowed. 

  In order to define the principal typing, we will use a unification
algorithm for type schemes, which is a variant of that defined in
\cite{Coppola03tlca}.  Let $=_e$ be the relation between type schemes
such that $\sigma\restriction_{C} =_{e}\tau\restriction_{C'}$ if $\overline{\sigma\restriction_{C}}\equiv \overline{\tau\restriction_{C}}$.

The unification algorithm, which we will present in Table
\ref{U-unification}, in Structured Operational Semantics style, is a function $U$ from $\mathcal{T}\times\mathcal{T}$
to substitutions.


\begin{table*}
\begin{center}
\fbox{
\begin{minipage}{.9\textwidth}
\begin{displaymath}
  \infer[(U1)]{U(\alpha\restriction_C, \alpha\restriction_{C'})=\langle [],\emptyset\rangle }{}
\end{displaymath}
\vspace{1mm}   
\begin{displaymath}
\infer[(U2)]{U(\alpha\restriction_C,\sigma\restriction_{C'})=\langle [\alpha\mapsto\sigma],\emptyset
    \rangle }{\alpha \textrm{
      does not occur in } \sigma }
      \end{displaymath}
\vspace{1mm}
\begin{displaymath}
\infer[(U3)]{U(\sigma\restriction_{C},\alpha\restriction_{C'})=
\langle [\alpha\mapsto\sigma],\emptyset\rangle }{\alpha \text{
      does not occur in } \sigma}
\end{displaymath}
\vspace{1mm}   
\begin{displaymath}
\infer[(U4)]{U(\sigma\linear\tau\restriction_{C},!^{q}\nu\restriction_{C'})=\langle s,C''\cup \{q = 0\}, \rangle}{%
U(\sigma\linear\tau\restriction_{C},\nu\restriction_{C'})=\langle s,C''\rangle}
\end{displaymath}
\vspace{1mm}
\begin{displaymath}
\infer[(U5)]{U(!^{p} \mu\restriction_{C} ,\sigma\linear\tau\restriction_{C'})=
\langle s,C''\cup \{p = 0\}\rangle}{%
U(\mu\restriction_{C}, \sigma\linear\tau\restriction_{C'})=\langle s,C''\rangle}
\end{displaymath}
\vspace{1mm}   
\begin{displaymath}
\infer[(U6)]{U(!^{p} \mu \restriction_{C} ,!^{q}\nu\restriction_{C'} )=
\langle s, C'' \cup \{p =q  \}\rangle }
{U(\mu\restriction_{C} ,\nu\restriction_{C'})=\langle s,C''\rangle }
\end{displaymath}
\vspace{1mm}   
\begin{displaymath}
\infer[(U7)]{U(\sigma_1 \linear \sigma_2\restriction_{C} ,\tau_1 \linear
  \tau_2\restriction_{C'})=\langle s_1,C_{1}\rangle \circ 
  \langle s_2,C_{2}\rangle}
  {\begin{array}{l}
  U(\sigma_1\restriction_{C} ,\tau_1\restriction_{C'} )=
  \langle s_{1},C_1\rangle\\
   U( \langle s_1,C_{1}\rangle(\sigma_2\restriction_{C}) , 
  \langle s_1,C_{1}\rangle(\tau_2\restriction_{C'}))=\langle
  s_2,C_{2}\rangle
  \end{array} }
\end{displaymath}

\vspace{1mm}  
$\langle s_1,C_{1}\rangle \circ 
  \langle s_2,C_{2}\rangle $ is the substitution such that \\
$\langle s_1,C_{1}\rangle \circ 
  \langle s_2,C_{2}\rangle(\sigma\restriction_{C})=\langle s_2,C_{2}\rangle(\langle s_1,C_{1}\rangle
  (\sigma\restriction_{C}))$.
\end{minipage}}
\end{center}
\caption{The unification algorithm $U$}\label{U-unification}
\end{table*}

The following lemma  proves that the function $U$ supplies a more general unifier for two type schemes,
in two steps: the substitution it generates is the most general unifier with respect to
the relation $=_{e}$, and moreover there is a most general scheme substitution
unifying the two type schemes modulo the syntactic equivalence $\equiv$.

\begin{lem}\label{e-unif}\hfill
\begin{enumerate}[\em i)]
\item (correctness) $U(\sigma\restriction_{C}, \tau\restriction_{C'})=\langle s,C''\rangle $ implies
  $\langle s,C''\rangle(\sigma\restriction_{C})=\sigma'\restriction_{C'''}$,   
  $\langle s,C''\rangle(\tau\restriction_{C'})=\tau'\restriction_{C''''}$ where $\sigma'\restriction_{C'''}=_{e}\tau'\restriction_{C''''}$.
  Moreover for every scheme substitution $S$, defined on both the type schemes,
  $S(\sigma'\restriction_{C'''})\equiv  S(\tau'\restriction_{C''''})$.
\item (completeness) $S(\sigma\restriction_{C})\equiv S(\tau\restriction_{C'})$ implies
  $U(\sigma\restriction_{C},\tau\restriction_{C'})=\langle s,C''\rangle $ 
  and  there is $S'$ such that, for every type scheme $\zeta$, 
  $S(\zeta)=S'(\langle s,C''\rangle(\zeta))$, for some $S'$.
\end{enumerate}
\end{lem}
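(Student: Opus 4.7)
The plan is to prove both clauses simultaneously by induction on the sum of sizes of $\sigma$ and $\tau$; this also gives termination of $U$, since in every recursive call (rules U4--U7) the total size strictly decreases.

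For correctness (clause i), I proceed by case analysis on which rule of Table~\ref{U-unification} applies. Cases U1--U3 are immediate from the definition of scheme substitution, once one observes that the occurs check guarantees that the assignment $[\alpha\mapsto\sigma]$ yields well-formed schemes. For U4 and U5 the generated constraint $q=0$ (respectively $p=0$) forces any satisfying scheme substitution $S$ to erase the outer banged layer, so that $S(!^q\nu\restriction_{C'})\equiv S(\nu\restriction_{C'})$, and the inductive hypothesis on the immediate subcall applies. Case U6 is analogous: the constraint $p=q$ ensures that $S$ assigns a common exponent on both sides and the inductive hypothesis on the kernels yields the desired equality. Case U7 is the composition step: the first recursive call unifies the domain components, and the second --- performed on schemes already rewritten by $\langle s_1,C_1\rangle$ --- unifies the codomain components; correctness follows because $\langle s_1,C_1\rangle\circ\langle s_2,C_2\rangle$ respects both systems of constraints. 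The $=_e$ part of the statement is obtained simultaneously by noting that erasing exponentials commutes with every clause of the definition of substitution.

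For completeness (clause ii), I proceed by the same induction, assuming a scheme substitution $S$ defined on both $\sigma\restriction_C$ and $\tau\restriction_{C'}$ with $S(\sigma\restriction_C)\equiv S(\tau\restriction_{C'})$. The hypothesis pins down the applicable rule of $U$ by inspection of the outermost constructors of $\sigma$ and $\tau$: if either is a scheme variable, U1--U3 fire (the occurs check is met because $S$ is defined); if one side is $!^p\mu$ while the other is an arrow, then $S$ must map $p$ (resp.\ $q$) to $0$, so U4 (resp.\ U5) applies and the added constraint is satisfied by $S$ itself; if both sides are banged, then $S(p)=S(q)$ and U6 applies; if both are arrows, U7 applies. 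The witness $S'$ is built explicitly: in U1--U6 we take $S'$ to agree with $S$ on the variables already present and extend it suitably on fresh literals (see the next paragraph); in U7 we first obtain $S_1$ from the inductive hypothesis on the domain components, and then apply the inductive hypothesis again to the codomain components already instantiated by $\langle s_1,C_1\rangle$ to obtain $S'$, so that $S=S'\circ\langle s_1,C_1\rangle\circ\langle s_2,C_2\rangle$ on every type scheme.

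The main obstacle I expect is the careful bookkeeping of exponentials. Because the scheme grammar forbids nesting of the form $!^p!^q\mu$, the substitution clause introduces, whenever the image of a banged scheme is itself banged, a fresh literal $r$ together with the constraint $r=p+q$. In the completeness direction, every such fresh literal must be assigned a value by $S'$ so that all accumulated constraints remain satisfied --- concretely, one sets $S'(r)=S(p)+S(q)$ --- and in the composition step of U7 one must check that the two constraint sets $C_1$ and $C_2$ interact coherently through $\circ$. Verifying that $C''$ is always satisfiable whenever a unifier exists, and that $S'$ can be chosen to extend $S$ consistently across all recursive calls, is the delicate algebraic part of the argument.
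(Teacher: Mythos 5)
Your correctness argument is essentially the paper's ("by induction on the rules defining $U$", i.e.\ case analysis on the last rule used), and your case analysis there is fine. The problem is the induction measure you use to drive everything, including termination and the completeness clause: induction on the sum of the sizes of $\sigma$ and $\tau$ does \emph{not} work for rule $(U7)$. The second recursive call in $(U7)$ is performed on $\langle s_1,C_{1}\rangle(\sigma_2\restriction_{C})$ and $\langle s_1,C_{1}\rangle(\tau_2\restriction_{C'})$, and applying the substitution $s_1$ can make these schemes arbitrarily larger than the original pair (e.g.\ when $s_1$ maps a variable of $\sigma_2$ to a copy of $\tau_1$). So your claim that ``in every recursive call the total size strictly decreases'' is false, your termination argument collapses, and in the completeness direction the inductive hypothesis is simply not available for that second call --- which is exactly the delicate step.

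The paper avoids this with the standard Robinson-style measure: induction on the pair $(n,m)$, ordered lexicographically, where $n$ is the number of scheme variables occurring in $\sigma$ and $\tau$ and $m$ is their total number of symbols. In the $(U7)$ case one argues: if $s_1$ is nonempty, then (thanks to the occurs check) the variable it instantiates disappears from $\langle s_1,C_1\rangle(\sigma_2\restriction_C)$ and $\langle s_1,C_1\rangle(\tau_2\restriction_{C'})$, so $n$ strictly decreases; if $s_1$ is empty, the schemes are unchanged and $m$ strictly decreases. With that measure the completeness argument you sketch (build $S_1$ from the hypothesis on the domain components, then apply the hypothesis again to the instantiated codomain components, composing to get $S'$) goes through, and your bookkeeping of the fresh literals introduced by the $r=p+q$ clause is the right remaining concern. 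As written, though, the proposal has a genuine gap at the induction measure, not merely a presentational one.
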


\proof\hfill
\begin{enumerate}[i)]
\item Easy, by induction on the rules defining $U(\sigma\restriction_{C}, \tau\restriction_{C'})$.
\item By induction on the pair ($n,m$), where $n$ and $m$ are respectively the number of scheme variables occurring in
both $\sigma$ and $\tau$ and the total number of symbols of $\sigma$ and $\tau$.\\
Let $\sigma \equiv \alpha$, and let $S(\sigma\restriction_{C})\equiv S(\tau\restriction_{C'})\equiv A$; clearly 
either $\tau \equiv \alpha$ or $\alpha$ cannot occur as proper subterm of $\tau$. 
In the first case $U(\alpha\restriction_{C},\alpha\restriction_{C'})=\langle [], \emptyset\rangle $, and
$S =S'$. In the second case
$U(\sigma\restriction_{C},\tau\restriction_{C'})=\langle [\alpha \mapsto \tau], \emptyset\rangle $. Then every 
scheme substitution $S'$, solving both $C$ and $C'$ and acting as $S$ on all the scheme 
variables occurring in $\sigma$ and $\tau$ but $\alpha$
does the desired job.\\
Let $\sigma \equiv \sigma_{1}\linear \sigma_{2}$ and $\tau \equiv \tau_{1}\linear \tau_{2}$.
So $S(\sigma_{1}\linear \sigma_{2}\restriction_{C})\equiv S(\sigma_{1}\restriction_{C})\linear S(\sigma_{2}\restriction_{C})$,
and $S(\tau_{1}\linear \tau_{2}\restriction_{C'})\equiv S(\tau_{1}\restriction_{C'})\linear S(\tau_{2}\restriction_{C'})$.
So by induction 
$U(\sigma_{1}\restriction_{C},\tau_{1}\restriction_{C'})=\langle s_{1}, C_{1}\rangle $,
$\langle s_{1}, C_{1}\rangle (\sigma_{1}\restriction_{C})= \sigma'\restriction _{C'_{1}}$
and $\langle s_{1}, C_{1}\rangle (\tau_{1}\restriction_{C'})= \tau'\restriction _{C''_{1}}$
where $\sigma'\restriction _{C'_{1}}=_{e}\tau'\restriction _{C''_{1}}$. 
Moreover there is $S_{1}$
such that $S(\sigma_{2}\restriction_{C}) \equiv S_{1}(\langle s_{1}, C_{1}\rangle
(\sigma_{2}\restriction _{C}))$ and
$S(\tau_{2}\restriction_{C'})\equiv S_{1}(\langle s_{1}, C_{1}\rangle
(\tau_{2}\restriction_{C'}))$.
In case $s_{1}$ is not empty, the number 
of scheme variables in both the type schemes is less than
in $\sigma $ and $\tau$; otherwise their total number of symbols is less than the one in 
$\sigma$ and $\tau$.
In both cases we can apply the induction hypothesis and conclude the proof.\\
All the other cases follow directly from the induction hypothesis.\qed
\end{enumerate}

The principal type scheme of a term is a pair in the form
$\langle\Sigma;\zeta\rangle$, where $\Sigma$ is a scheme context (i.e., a set of assignments
of the shape $x:\theta$, where no variable is repeated), and $\zeta$ is a
type scheme. 

In order to simplify the text of the algorithm, we will use the following conventions:
\begin{enumerate}[$\bullet$]
\item Let $\sigma $ be a scheme. $!^{p}\sigma $ denotes $!^{p}\mu$ in case $\sigma \equiv \mu$,
$!^{r}\mu$, where $r=p+q$ in case $\sigma \equiv !^{q}\mu$; if $\zeta \equiv \sigma \restriction_{C}$, then
$!^{p}\zeta $ denotes $!^{p}\sigma \restriction_{C}$;
\item Let $\Sigma $ be a scheme context. $!^{p}\Sigma$ denotes the scheme context 
$\{x:!^{p}\zeta \mid x:\zeta \in \Sigma \}. $
\end{enumerate}

The principal type scheme algorithm is defined in
Table~\ref{table-def-type-inf}.

\begin{table*}
\begin{center}
\scalebox{.9}{%
\framebox{%
  \begin{minipage}{.9\textwidth}
\begin{itemize}
\item $ \PT(x)=\langle\{x:!^a \alpha\restriction_{\emptyset}\},
!^a \alpha\restriction_{\emptyset}\rangle$
\item \begin{tabbing}
    $\PT(\lambda x.M)=$ \= \texttt{let} \=
    $\PT(M)=\langle\Sigma,\sigma\restriction_{C}\rangle$ \texttt{in}\\ 
    \>\> \texttt{if} \=$x$ doesn't occur free in $M$\\
     \>\>$\langle!^b\Sigma, !^b(!^{a}\alpha\linear\sigma)\restriction_{C}\rangle$\\
    \>\> \texttt{else}\\
    \>\>\> \texttt{let} \=$\Sigma = \{x:\tau\restriction_{C'}\}\cup\Sigma'$ \texttt{in}\\
    \>\>\>\> \texttt{if} \= $x$ occurs multiple times in $M$\\
    \>\>\>\>\> \texttt{let} \=$U(\tau\restriction_{C'},!^{a}\alpha\restriction_{\{a>0\}})=\langle s_{1},C_1\rangle$ 
    \texttt{in}\\
 \>\>\>\>\>\> \texttt{let} \=$\langle s_{1},C_1\rangle(\tau\restriction_{C'})=\tau'\restriction_{C''} $  \texttt{and}\\
  \>\>\>\>\>\>\>  $\langle s_{1},C_1\rangle(\sigma\restriction_{C})=\sigma'\restriction_{C'''} $ 
  \texttt{in}\\
    \>\>\>\>\>\>\> $\langle \langle s_{1},C_1\rangle(!^{b}\Sigma'), 
    !^{b}(\tau' \lin \sigma')\restriction_{C'' \cup C'''}\rangle$\\
    \>\>\>\> \texttt{else}\\
    \>\>\>\>\> $\langle !^{a}\Sigma', !^a(\tau\linear \sigma)\restriction_{C\cup C'} \rangle$
  \end{tabbing}
\item \begin{tabbing}
   $\PT(M_1\ M_2)=$ \= \texttt{let} \=
     $\PT(M_1) = \langle \Sigma_1, \sigma_1\restriction_{C_1} \rangle$ \texttt{,}
     $\PT(M_2) = \langle \Sigma_2, \sigma_2\restriction_{C_2}\rangle$ \\
 \>\>   \texttt{and let they are disjoint in}\\
 \>\> \texttt{let} \= $\{x_1,\ldots,x_k\} = dom(\Sigma_1)\cap
dom(\Sigma_2)$ \texttt{and} \\
\>\>\> $\forall 1\le i\le k\ \forall 1\le j\le 2\
x_i:\zeta_j^i\in\Sigma_j$ \texttt{and}\\
\>\>\> $U(\sigma_1\restriction_{C_1},(\sigma_2\linear!^{a}\alpha)\restriction_{C_2}) = \langle s',C'\rangle$
\texttt{and}\\
\>\>\> $U(\zeta_1^1,\zeta_2^1) = t_{1}$ \texttt{and}\\
\>\>\> $U(t_{i-1}(\zeta_1^i),t_{i-1}(\zeta_2^i)) = t_{i}$ \texttt{and} \\
\>\>\>\texttt{let} $t=t_{1}\circ t_{2}\circ...\circ t_{k}$ \texttt{in}\\
\>\>\>$\langle !^{b}(t(\Sigma_{1})\cup t(\Sigma_{2})), !^{b}(t(!^{a}\alpha)\restriction_{C'})\rangle$
  \end{tabbing}
\end{itemize}
$\alpha, a,b$ are fresh variables.
\end{minipage}
}}
\caption{The type inference algorithm $\mathit{PT}$.} \label{table-def-type-inf}
\end{center}
\end{table*}



\begin{thm}[Type Inference]\hfill 
\begin{enumerate}[\em i)] 
\item (correctness) If $\PT(M) = \langle \Sigma, \zeta\rangle$ then
for every scheme substitution
$S$ defined on all the type schemes occurring in $PT(M)$, $\exists \Gamma,\Delta,\Theta$ partitioning
$S(\Sigma)$ s.t. $\Gamma \mid \Delta \mid \Theta \vdash M: S(\zeta)$.
\item (completeness)  If $\Gamma \mid \Delta \mid \Theta \vdash M: A$ then
 $\PT(M)=\langle \Sigma,\zeta\rangle $ and there exists a scheme
 substitution $S$ defined on all the type schemes occurring in $PT(M)$ such that 
 $S(\Sigma) \subseteq \Gamma \cup \Delta \cup \Theta $ and $A=S(\zeta)$.
\end{enumerate} 
\end{thm}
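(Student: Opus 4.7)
The plan is to prove both parts simultaneously by structural induction on $M$, relying on the Generation Lemma for completeness, on the correctness and completeness of the unification algorithm $U$ (Lemma~\ref{e-unif}) for the inductive step, and on the Weakening and Shifting Lemmas to accommodate the partitioning of $S(\Sigma)$ into $\Gamma$, $\Delta$, $\Theta$. The unifying intuition is that a scheme $!^{p}\mu$ encodes an unknown number $S(p)\ge 0$ of $!$'s in front of a linear type, so that whether a variable ends up in $\Gamma$ (linear, $S(p)=0$) or in $\Delta$ (modal, $S(p)>0$) is decided by the scheme substitution $S$ together with the constraints accumulated in $C$.

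For correctness~(i), the base case $M\equiv x$ is handled by inspecting $S(!^a\alpha)$: if $S(a)=0$ take $\Gamma=\{x:S(!^a\alpha)\}$ and apply $A^L$, otherwise take $\Delta=\{x:S(!^a\alpha)\}$ and use the derivable $A^I$. For $M\equiv\lambda x.N$ I would split according to the three branches of $\PT$: if $x\notin\FV(N)$ the inductive derivation for $N$ gives, after Weakening, a derivation to which $I_\lin^L$ or $I_\lin^I$ applies before being wrapped in $!$'s induced by the $!^{b}$ decoration; if $x$ occurs once, the argument is symmetric without unification; if $x$ occurs multiply, the constraint $a>0$ forces $S(!^a\alpha)$ to be modal, so after the unification $U(\tau,!^a\alpha)$ the variable $x$ lands in the modal context and $I_\lin^I$ applies. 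For $M\equiv M_1M_2$, the unifications $U(\sigma_1,\sigma_2\lin!^a\alpha)$ and the pairwise $U(\zeta_1^i,\zeta_2^i)$ ensure that the two recursive derivations can be merged; after applying $t$ we obtain compatible contexts so that $E_\lin$ fires, and the outer $!^b$ provides the needed instances of $!$ to reach $S(!^b(!^a\alpha))$. Throughout, Lemma~\ref{e-unif}(i) gives the types produced by the unifiers.

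For completeness~(ii), induction on $M$ again, guided by the Generation Lemma. In the variable case the principal type scheme $!^a\alpha$ admits a substitution producing any possible type. For $\lambda x.N$ I would apply Generation Lemma~(ii) to obtain a derivation for $N$, invoke the inductive hypothesis to get $\PT(N)=\langle\Sigma',\zeta'\rangle$ with witnessing $S'$, then use completeness of $U$ to lift $S'$ through the unifiers in $\PT(\lambda x.N)$, extending it to a substitution $S$ on the fresh literals $a,b$ so as to produce exactly the chosen number of outer $!$'s from the Generation Lemma and to match the modal/linear status of $x$. The application case is analogous, using Generation Lemma~(iii) to obtain derivations for $M_1$ of type $C\lin\underbrace{!...!}_{m}B$ and for $M_2$ of type $C$; the inductive $\PT$'s then unify via the algorithm's calls to $U$, and $S$ is obtained by combining the two inductive substitutions and fixing $S(a)=m$, $S(b)=n-m$ so that $S(!^b(!^a\alpha))=\underbrace{!...!}_{n}B=A$. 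Agreement of the two inductive substitutions on shared variables is forced by the chained unifications $U(\zeta_1^i,\zeta_2^i)$, whose completeness yields a common extension.

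The main obstacle will be the bookkeeping of the exponential literals across the inductive call in the application case: the two subcalls $\PT(M_1)$ and $\PT(M_2)$ produce independent literals, but the Generation Lemma gives a common context $\Delta',\Theta'$ under a common block of $!$'s, so I must argue that the successive unifiers $t_1,\ldots,t_k$ together with the constraints in $C'$ are exactly what is needed to align these blocks, and that the scheme substitution $S$ extending both inductive substitutions remains well-defined and solves the accumulated constraints. A secondary subtlety is the non-nesting convention on schemes (the auxiliary "$!^p\sigma$" notation that collapses $!^p!^q\mu$ into $!^{p+q}\mu$): I would handle it by a small auxiliary lemma stating that, for any $S$, $S(!^p\sigma)$ agrees with the intended meaning $\underbrace{!...!}_{S(p)}S(\sigma)$, so that the normal forms imposed by the grammar of schemes do not obstruct either direction of the proof.
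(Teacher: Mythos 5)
Your proposal is correct in outline, and for part~(i) it is essentially the paper's own argument: induction on $M$, using Lemma~\ref{e-unif}(i) for the unifiers, and Weakening (plus Shifting, to park shared linearly-typed variables) to assemble the partition before firing $E_\lin$ and the trailing instances of $!$ coming from the outer $!^{b}$. For part~(ii) you take a mildly different route: you do structural induction on $M$ and use the Generation Lemma as the inversion tool, whereas the paper inducts on the typing derivation and treats the $!$ rule as its own case, building the new substitution directly by setting $S(\zeta)\equiv\;!S'(\zeta)$ for the substitution $S'$ given by the induction hypothesis. The two are interchangeable here: the Generation Lemma (stated in the paper, and in effect what the paper's correctness case for applications appeals to when it observes that every derivation for $M_1M_2$ ends in $E_\lin$ followed by a possibly empty block of $!$'s) packages exactly the trailing-$!$ bookkeeping that the paper's derivation-induction handles rule by rule; your version buys a single induction on terms matching the recursion of $\PT$, at the price of relying on an inversion lemma the paper states but does not prove, while the paper's version needs no inversion but must thread the fresh exponential through the explicit $!$-rule case. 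One point to state more carefully: in the application case the success of the chained unifications $U(\zeta_1^i,\zeta_2^i)$ is not what \emph{forces} agreement of the two inductive substitutions; rather, agreement $S_1(\zeta_1^i)\equiv S_2(\zeta_2^i)$ comes first, from the typing judgment (a variable free in both $M_1$ and $M_2$ cannot sit in either linear context, so it lies in the shared $\Delta$ or $\Theta$ and has one type), and it is this agreement, via Lemma~\ref{e-unif}(ii), that guarantees the unifications succeed and that the combined $S$ factors through them — exactly as in the paper.

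One small inaccuracy to repair: the dichotomy ``$x$ goes to $\Gamma$ iff $S(a)=0$'' is not quite right, because a scheme variable $\alpha$ may itself be instantiated to a modal type, so $S(!^{a}\alpha)$ can be modal even when $S(a)=0$. The case split in the variable case (and wherever you decide between $A^L$/$A^P$ and the derived $A^I$, or between $I^L_\lin$ and $I^I_\lin$) must be on whether the \emph{type} $S(!^{a}\alpha)$ is modal, not on the value of the exponential; the paper's proof phrases it this way. Conversely, your use of the constraint $a>0$ in the multiple-occurrence branch is fine in the correctness direction (it does force a modal type), and in the completeness direction it is satisfiable because a variable with two or more free occurrences can only be abstracted via $I^I_\lin$ — a fact both you and the paper leave implicit. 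Neither point changes the structure of your argument.
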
  

\proof\hfill
\begin{enumerate}[i)] 
\item By induction on $M$.
\begin{enumerate}[$\bullet$]
\item $M \equiv x$. Then $PT(M)=\langle\{x:!^a \alpha\restriction_{\emptyset}\};
!^a \alpha\restriction_{\emptyset}\rangle$.
Every scheme substitution $S$ satisfies the empty set of constraints. 

If 
$S(!^a \alpha\restriction_{\emptyset})$ is linear, then
take $\Delta=\emptyset$ and either $\Gamma =\{x:S(!^a \alpha\restriction_{\emptyset})\}$ 
and $\Theta = \emptyset$ 
or $\Theta = \{x:S(!^a \alpha\restriction_{\emptyset})\}$ and $\Gamma =\emptyset$. 
Otherwise choose 
$\Gamma =\Theta =\emptyset$, and $\Delta=\{x:S(!^a \alpha\restriction_{\emptyset})\}$.
\item $M \equiv \lambda x.P$. This case follows directly by induction.
\item $M\equiv M_{1}M_{2}$. 
Then $\PT(M_1) = \langle \Sigma_1, \theta_{1}\rangle$, 
$\PT(M_2) = \langle \Sigma_2, \theta_{2}\rangle$, and 
$PT(M)=\langle \Sigma',\theta' \rangle $,
where $\theta' $ is defined as in Table \ref{table-def-type-inf}.

Let $S$ be a scheme substitution defined on all the type schemes occurring in $PT(M)$:
note that, by the definition of the function $U$ this implies that $S$ is defined on both $PT(M_{1})$
and $\PT(M_2)$. Moreover, by construction of $PT$, $x:\zeta_{i} \in \Sigma_{i}$ ($1 \leq i \leq 2$)
implies $U(\zeta_{1},\zeta_{2})$ is defined, and $S(\zeta_{1})\equiv S(\zeta_{2})$,
by Lemma \ref{e-unif}.\\
By induction 
$\Gamma_{i} \mid \Delta_{i} \mid \Theta_{i}\vdash M_{i}:S(\theta_{i})$
($1 \leq i \leq 2$). Note that every type derivation for $M$ ends with an application of the rule $(E^{\lin})$, followed by a sequence, 
may be be empty, of rule ($!$). Since in rule $(E^{\lin})$ the two linear contexts are disjoint, we can build the partition of the contexts in
the following way:\\
$\Gamma_{1}=\{x:S(\zeta) \mid (\{x:S(\zeta)\}\subseteq S(\Sigma_{1})) \wedge
(x\in FV(M_{1}) \wedge x\not \in FV(M_{2})) \wedge
(S(\zeta) \mbox{ is linear })\}$,\\
$\Gamma_{2}=\{x:S(\zeta) \mid \{x:S(\zeta)\}\subseteq S(\Sigma_{2}) 
\wedge (x\in FV(M_{2}) \wedge x\not \in FV(M_{1})) \wedge
(S(\zeta) \mbox{ is linear })\}$,\\
$\Delta_{i}=\{x:S(\zeta) \mid (\{x:S(\zeta)\}\subseteq S(\Sigma_{i}))  \wedge
(S(\zeta) \mbox{ is modal })\}$ and\\
$\Theta_{i}=\Sigma_{i}-(\Gamma_{i} \cup \Delta_{i})  $ ($1 \leq i \leq 2$).\\
 By the weakening Lemma, we have that 
$\Gamma_{i} \mid \Delta_{1},\Delta_{2} \mid \Theta_{1},\Theta_{2}
\vdash M_{i}:S(\theta_{i})$ ($1 \leq i \leq 2$). 
Since $S(\theta_{1})\equiv S(\theta_{2})\lin
S(!^{a}\alpha\restriction_{C_{2}}))$ (by Lemma \ref{e-unif}.i), 
the proof follows by rule ($E_{\lin}$).
\end{enumerate}
\item By induction on the derivation proving $\Gamma \mid \Delta \mid \Theta \vdash M: A$.

Let the last used rule be ($A^{L}$). Then $M \equiv x$, and $\{x:A\} \subseteq\Gamma $. By definition, 
$ \PT(x)=\langle\{x:!^a \alpha\restriction_{\emptyset}\};!^a \alpha\restriction_{\emptyset}
\rangle$, and the proof follows easily.

The case ($A^{P}$) is similar.

The cases $(I^{L}_{\lin})$ and $(I^{I}_{\lin})$ both follow by induction and weakening lemma.

Let us consider the case when the last used rule is $(E_{\lin})$. 
Then $M \equiv M_{1}M_{2}$, and 
$\Gamma_{1}\mid \Delta \mid \Theta \vdash M_{1}:B \lin A $ and 
$\Gamma_{2}\mid \Delta \mid \Theta \vdash M_{2}:B $, for some $B$,
where $\Gamma_{1}$ and $\Gamma_{2}$ are disjoint.
By induction $\PT(M_i) = \langle \Sigma_i, \zeta_{i}\rangle$ and 
there is $S_{i}$ defined on all type schemes in $PT(M_i)$ such that $S_{1}(\zeta_{1})\equiv B\lin A$,
and $S_{2}(\zeta_{2})\equiv B $.\\
Moreover 
$S_{i}(\Sigma_{i})\subseteq \Gamma_{i}\cup \Delta \cup \Theta$ ($1 \leq i \leq 2$). 
Since by construction $PT(M_{1})$ and $PT(M_{2})$ are disjoint, there is a well defined scheme substitution
$S$ acting as both $S_{1}$ and $S_{2}$ and such that 
$S(!^a\alpha\restriction_{\emptyset})\equiv A$, for $\alpha, a$ fresh. So
if $\zeta_{2}\equiv \sigma\restriction_{C}$,
$S(\zeta_{1})\equiv S(\sigma\lin !^{p_{1}}\alpha\restriction_{C})$, 
and $S$ is defined on all the type schemes in
$PT(M_i)$ ($1\leq i \leq 2$).
Then 
by Lemma \ref{e-unif}.ii), 
$U(\zeta_{1},\sigma\lin !^{p_{1}}\alpha\restriction_{C})=\langle s',C' \rangle$.
Since $S$ satisfies all the constraints in $PT(M_{i})$ ($1\leq i \leq 2$), if 
$x:\theta_{1}$ and $x:\theta_{2}$ belong to $\Sigma_{1}$ and $\Sigma_{2}$
respectively, then $S(\theta_{1})\equiv S(\theta_{2})$, and so, by Lemma \ref{e-unif}.ii), 
they can be unified.
So $PT(M_{1}M_{2})$ is defined, and by 
induction it enjoys the desired properties.

Let the last used rule be $(!)$. Then $A \equiv !A'$ and the premise of the rule is 
$\Gamma' \mid \Delta' \mid \Theta' \vdash M:A'$, where 
$!\Gamma' \cup !\Delta' \cup ! \Theta' \subseteq \Delta $.
By induction $PT(M)=\langle \Sigma, \zeta \rangle$ and there is a scheme substitution $S'$
such that $S'(\Sigma)\subseteq \Gamma' \cup \Delta' \cup \Theta' $.
Let $S$ be such that $S$ is defined on all the type schemes in $PT(M)$, and
such that $S(\zeta)\equiv !S'(\zeta)$, for all 
type scheme $\zeta$: it is easy to check that, if $S'$ is defined, i.e., it satisfies all the constraints in $PT(M)$, than $S$ is 
well defined too, and so it does the right job.\qed
\end{enumerate}


\noindent The complexity of the type inference algorithm $PT$ is of
the same order as the type inference algorithm for simple types. In
order to prove this, we need some notations.  If $A(n)$ is an
algorithm running on a datum $n$, let us denote by $|A(n)|$ its
asymptotic complexity.  Moreover, if $\sigma$ is a scheme, let
$|\sigma | $ be the number of symbols in it.  Let $\mathit{TA}(M)$ be
the type inference algorithm for simple types running on a term
$M$. By abuse of notation, we assume that, for every type scheme
$\sigma$, $\overline{\sigma}$ denotes a simple type: in fact the
syntax of type schemes, when erasing exponentials and constraints,
coincides with that of simple types.
\begin{thm}[Complexity]
$PT(M) \in \mathit{O}(L(M)+|\mathit{TA}(M)|)$.
\end{thm}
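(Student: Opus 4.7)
The plan is to establish a tight correspondence between $PT$ and the simple-type inference algorithm $\mathit{TA}$, and then charge the extra work done by $PT$ (fresh literals, constraint collection, $!^p$-labels) to corresponding nodes already visited by $\mathit{TA}$, so that the overhead is at most linear in $L(M)$ on top of the cost of $\mathit{TA}(M)$.

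First I would observe the structural invariant on schemes: since the grammar forbids nesting of $!$, an exponential $!^p$ is simply a label attached to a scheme node, and therefore the skeleton map $\zeta\mapsto\overline{\zeta}$ is a bijection of underlying trees, with $|\sigma|=\Theta(|\overline{\sigma}|)$. Consequently, every recursive call of the unification algorithm $U$ of Table~\ref{U-unification} on $\sigma\restriction_C,\tau\restriction_{C'}$ can be put in one-to-one correspondence with a call of Robinson's algorithm on $\overline{\sigma},\overline{\tau}$: rules $(U1)$--$(U3)$ and $(U7)$ mirror Robinson exactly, while $(U4)$--$(U6)$ are pure decoration rules that consume one $!^p$-label and emit a single linear constraint. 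Hence a single invocation of $U$ performs $O(|\sigma|+|\tau|)$ operations beyond what Robinson does on the skeletons, and produces a constraint set of the same size. The auxiliary operation of applying a substitution $\langle s,C\rangle$ to a scheme is similarly linear in the size of the scheme, with at most one extra constraint $r=p+q$ generated per $!^p$-node flattened.

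Next I would argue that $PT$ mimics $\mathit{TA}$ clause by clause. For a variable, both algorithms do constant work (plus the generation of two fresh names $\alpha,a$). For an abstraction, both recurse once on the body and then perform either a trivial context manipulation or, when $x$ is used several times, a single call to $U$ against $!^a\alpha\restriction_{\{a>0\}}$; the latter adds $O(1)$ overhead. For an application $M_1M_2$, both algorithms recurse on the two subterms and then unify: $PT$ performs the main unification of $\sigma_1$ with $\sigma_2\lin!^a\alpha$ plus one further unification for each variable shared between $\Sigma_1$ and $\Sigma_2$, then composes the resulting substitutions and wraps the context with a fresh $!^b$. Because the schemes returned have the same tree size as the simple types returned by $\mathit{TA}$, each unification in $PT$ can be charged to the analogous unification in $\mathit{TA}$, and the composition $t_1\circ\cdots\circ t_k$ can be implemented in the same union-find style that $\mathit{TA}$ already uses.

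Summing the two charges, the term-traversal part contributes $O(L(M))$ (one visit per node, with constant work outside calls to $U$, including fresh-variable allocation and context splitting), while all calls to $U$ and all substitution compositions total $O(|\mathit{TA}(M)|)$ by the correspondence above; hence $|PT(M)|=O(L(M)+|\mathit{TA}(M)|)$. The main obstacle I expect is making the bookkeeping of the modality set $C$ truly linear: naively re-walking a scheme each time a substitution is applied could blow up. The argument must therefore rely on the standard sharing/union-find representation of scheme terms, in which every $!^p$-label is touched only as often as the underlying skeleton symbol is touched by $\mathit{TA}$; once this sharing is in place, the constraint set grows only additively and the overall bound follows.
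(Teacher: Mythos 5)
Your proposal is correct and follows essentially the same route as the paper's proof: it rests on the observation that the skeleton of the schemes computed by $PT$ is exactly the output of $\mathit{TA}$, that each call to $U$ costs no more than the corresponding Robinson unification plus $O(|\sigma|+|\tau|)$ for the exponential/constraint decoration (the paper's bound $|U(\sigma\restriction_{C},\tau\restriction_{C'})|\leq|RU(\overline{\sigma},\overline{\tau})|+|\sigma|+|\tau|$), and then proceeds clause by clause on $M$ charging each step of $PT$ to the corresponding step of $\mathit{TA}$ plus constant or scheme-size overhead. Your additional remarks on union-find sharing just make explicit an implementation assumption the paper leaves implicit; they do not change the argument.
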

\proof
First of all, let us observe that the the unification algorithm $U$ coincide with the Robinson unification, when it is applied to two type schemes
whose set of constraints is empty, so, if $RU$ denotes the Robinson's unification, 
$|U(\sigma \restriction_{\emptyset},
\tau \restriction_{\emptyset} )|=|RU (\overline{\sigma},\overline{\tau})|$. 
Then 
$|U( \sigma\restriction_{C} , 
  \tau\restriction_{C'})|\leq |RU(\overline{\sigma},\overline{\tau})| +|\sigma | + |\tau |$.
Remember that Robinson unification is equivalent to the principal simple type assignment.
Moreover it is easy to see that, if $PT(M)=\langle \Sigma, \sigma \restriction_{C} \rangle$, then 
$\mathit{TA}(M)=\langle\overline{\Sigma}, \overline{\sigma}\rangle$, when $\overline{\Sigma}$ denotes the context obtained 
from $\Sigma$ by applying the function $\overline{[.]}$ to all types in it.

The proof is by induction on $M$. If $M $ is a constant the proof is trivial.
If $M \equiv \lambda x. N$, then $|PT(M)|= |PT(N)| + |U(\tau\restriction_{C}, !^{a}\alpha \restriction_{a>0})| =
|PT(N)| + k$, for a constant $k$, since $\alpha$ is a scheme variable. Then the result follows by induction.
Let $M\equiv PQ$. Then $|PT(PQ)| = |PT(P)| + |PT(Q)| + |U(\sigma\restriction_{C}, 
  (\tau\linear !^{a}\alpha)\restriction_{C'})| + |U(\Sigma_{1},\Sigma_{2})|$ 
if $PT(P)=\langle \Sigma_{1},\sigma\restriction_{C} \rangle$ and 
  $PT(Q)=\langle \Sigma_{2},\tau\restriction_{C'}\rangle $. $U(\Sigma_{1},\Sigma_{2})$ is an abbreviation for the unification of the 
  two scheme contexts $\Sigma_{1}$ and $\Sigma_{2}$, as specified in the algorithm. 
 By induction $PT(PQ)= |\mathit{TA}(P)| +k_{1} + |\mathit{TA}(Q)| + k_{2} + |RU(\overline{\sigma},\overline{\tau})| +|\sigma| +|\tau|+
 |RU(\overline{\Sigma_{1}},\overline{\Sigma_{2}}) + |\Sigma_{1}| + |\Sigma_{2}|$.
 Remembering that $|\mathit{TA}(PQ)|= |\mathit{TA}(P)| + |\mathit{TA}(Q)| + |RU(\overline{\sigma},\overline{\tau})| +
 |RU(\overline{\Sigma_{1}},\overline{\Sigma_{2}})|$, the result follows.\qed
 



\begin{exa}
  Let us illustrate the application of the type inference algorithm to
  the term:
$$\underline{2}\mbox{ }\underline{3}\equiv (\lambda xy. x(xy))\lambda
xy.x(x(xy)).
$$ 
  Let $a,b,c,d,e,f,g,h,k,i,m,n,p,q,r$ be literals and $\alpha,
  \beta,\gamma, \delta,\epsilon$ be scheme variables.  First we will
  build PT(\underline{2}). Starting from
$$PT(x)=\langle \{x:!^{a}\alpha\restriction_{\emptyset}\},!^{a}\alpha\restriction_{\emptyset} \rangle 
\mbox{ and } PT(y)=\langle
\{y:!^{b}\beta\restriction_{\emptyset}\},!^{b}\beta\restriction_{\emptyset}
\rangle,
$$
  due to $U(!^{a}\alpha\restriction_{\emptyset},!^{b}\beta \linear
  !^{c}\gamma\restriction_{\emptyset})=\langle [\alpha \mapsto
  !^{b}\beta \linear !^{c}\gamma ], \{a=0\} \rangle$, we obtain:
$$PT(xy)=\langle \{x:!^{d+a}(!^{b}\beta \linear !^{c}\gamma)\restriction_{C_{0}}, y:!^{d+b}\beta\restriction_{C_{0}}\}, 
!^{c+d}\gamma\restriction_{C_{0}}\rangle
$$ 
  where $C_{0}=\{a=0\}$.

  Now a fresh version of $PT(x)$ is needed, so consider $\langle
  \{x:!^{a_{1}}\alpha_{1}\restriction_{\emptyset}\},!^{a_{1}}\alpha_{1}
  \restriction_{\emptyset} \rangle$.  The rule for application allows
  us to perform certain unifications.  First we obtain
$$U(!^{a_{1}}\alpha_{1}\restriction_{\emptyset}, (!^{c+d}\gamma \linear !^{e} \delta)\restriction_{C_{0}})=
\langle [\alpha_{1}\mapsto !^{c+d}\gamma \linear !^{e} \delta],
C_{1}\rangle
$$ 
  where $C_{1}=\{a_{1}=0 \}$.  A second unification is
  necessary for unifying the two premises on $x$ in the first
  component of $PT(xy)$ and $PT(x)$, respectively:
$$U(!^{a_{1}}\alpha_{1}\restriction_{\emptyset}, !^{d+a}(!^{b}\beta \linear !^{c}\gamma)\restriction_{C_{0}})=
\langle [\alpha_{1}\mapsto !^{b}\beta \linear !^{c}\gamma], C_{2}
\rangle
$$ 
  where $C_{2}= \{a_{1}=a+d\} $.  By composing the two substitutions,
  we have $\langle [\gamma \mapsto \beta, \gamma \mapsto \delta],
  \{c+d=b, e=c \}\rangle$.  So 
$$PT(x(xy))=\langle
  \{x:!^{h+a_{1}}(!^{b }\beta \linear !^{e}\beta)\restriction_{C_{3}},
  y:!^{h+b+d}\beta\restriction_{C_{3}}\}, !^{h+e }\beta
  \restriction_{C_{3}} \rangle
$$
  where $C_{3}=C_{0}\cup C_{1}\cup C_{2}\cup \{c+d=b, c=e\}.$ By
  applying the rule for the abstraction, we obtain:
$$PT(\lambda y.x(xy))=\langle \{x:!^{f}(!^{b }\beta \linear !^{e}\beta)\restriction_{C_{3}}\}, !^{k}(!^{h+b+d}\beta \linear
!^{h+e} \beta)\restriction_{C_{3}}   \rangle
$$
  and
$$PT(\lambda xy.x(xy)) = \langle\emptyset; !^{i}(!^{f}(!^{b}\beta \linear !^{e}\beta)\linear !^{k}(!^{h+b+d}\beta \linear
!^{h+e} \beta))\restriction_{C_{4}}   \rangle
$$
  where $C_{4}=C_{3}\cup \{f=k+h+a_{1}, f>0\}$.

  Due to the particular form of $PT(\lambda xy.x(xy))$, we can deduce
  that the term $\lambda xy.x(xy)$ can be assigned, among others,
  the following types
$$!(A \lin A)\lin !(A\lin A), !(A\lin A)\lin !A \lin !A, !!(A\lin
A)\lin !(!A \lin !A).
$$
  In particular, the scheme substitution that replaces $\beta$ by $!(A
  \lin A)$, furthermore $b$, $e$, $h$, $d$ and $a_1$ by $0$, and both
  $k$ and $f$ by $1$, satisfies the constraints and generates the
  typing $\emptyset \mid \emptyset \mid \emptyset \vdash
  \underline{2}:!(A \lin A) \lin !(A \lin A)$, whose derivation is
  shown in Example \ref{exa:terms}.

  In order to build the principal type scheme of $\underline{3}$, we
  need to start from two fresh copies of $PT(x)$ and $PT(x(xy))$, let
$$\langle \{x:!^{n} \epsilon \restriction_{\emptyset}\}, !^{n} \epsilon \restriction_{\emptyset}\rangle \mbox{   and   }
\langle \{x:!^{h'+a'_{1}}(!^{b'}\alpha \linear !^{e'}\alpha)\restriction_{C'_{3}}, y:!^{h'+b'+d'}\alpha\restriction_{C'_{3}}\},
!^{h'+e' }\alpha \restriction_{C'_{3}} \rangle
$$
  where $C'_{3}=\{a'=0,a'_{1}=0,a'_{1}=d'+a', b'=c'+d', e'=c' \}$.

  By applying the rule for application, we obtain, for the first
  unification:
$$U(!^{n} \epsilon\restriction_{\emptyset},!^{h'+e' }\alpha \linear !^{p}\eta\restriction_{C'_{3}} )= 
\langle [\epsilon \mapsto !^{h'+e' }\alpha \linear !^{p}\eta ],
C_{5}\rangle
$$
  where $C_{5}=\{n=0\}.$ By unifying the two premises on $x$, we have
$$U(!^{n} \epsilon\restriction_{\emptyset},
!^{h'+a'_{1}}(!^{b '}\alpha \linear !^{e'}\alpha)\restriction_{C'_{3}} )=\langle [\epsilon \mapsto !^{b '}\alpha \linear !^{e'}\alpha] 
, C_{6} \rangle
$$ 
  where $C_{6 }= \{n=h'+a'_{1}\}$. So, composing the two substitutions:
$$PT(x(x(xy)))=\langle \{x:!^{q+h'+a'_{1}}(!^{b'}\alpha \linear !^{e'}\alpha)\restriction_{C_{7}}, 
y:!^{q+h'+b'+d'} \alpha\restriction_{C_{7}} \},
!^{p+q}\alpha\restriction_{C_{7}} \rangle$$
  where $C_{7}=C'_{3}\cup C_{5}\cup C_{6}\cup \{b'=h'+e',e'=p  \}$.
  So, by applying the rules for abstraction, we have:
$$PT(\lambda y.x(x(xy)) )=\langle \{x:!^{r+q+h'+a'_{1}}(!^{b'}\alpha
\linear !^{e'}\alpha)\restriction_{C_{7}} \}, !^{r}(!^{q+h'+b'+d'}
\alpha \linear !^{p+q}\alpha)\restriction_{C_{7}}
\rangle
$$
  and
$$PT(\lambda x y.x(x(xy)) )=\langle \emptyset,!^{s}(!^{g}(!^{b'}\alpha
\linear !^{e'}\alpha) \linear !^{r}(!^{q+h'+b'+d'} \alpha \linear
!^{p+q}\alpha))\restriction_{C_{8}} \rangle
$$
  where $C_{8}=C_{7}\cup \{g=r+q+h'+a'_{1}, g>0   \}$.

  It is easy, but boring, to check that the typings for
  $\underline{3}$ are the same that the ones for $\underline{2}$, by
  inspecting the modality set.  Now, in order to build
  $PT(\underline{2}\mbox{ }\underline{3})$, we need to unify the two
  type schemes:
$$\sigma \equiv !^{i}(!^{k+h+a_{1}}(!^{b }\beta \linear
!^{e}\beta)\linear !^{k}(!^{h+b+d}\beta \linear !^{h+e}
\beta))\restriction_{C_{4}}
$$
  and 
$$\tau \equiv !^{s}(!^{g}(!^{b'}\alpha \linear !^{e'}\alpha) \linear
!^{r}(!^{q+h'+b'+d'} \alpha \linear !^{p+q}\alpha))\linear
!^{t}\gamma)\restriction_{C_{8}}
$$ 
  obtaining:
$$U(\sigma, \tau)=\langle [\beta \mapsto !^{b'}\alpha\linear !^{e'}\alpha, \gamma \mapsto !^{h+b+d}\beta \linear !^{h+e}\beta] , C_{9}  
\rangle
$$
  where $C_{9}=\{ i=0, k+h+a_{1}=s,t=k,s=k+h+a_{1},b=g, e=r,
  b'=q+h'+b'+d',e=p+q \}$. So
$$PT(\underline{2}\mbox{ }\underline{3})=\langle \emptyset, !^{t}(!^{h+b+d}(!^{b'}\alpha \linear !^{e'}\alpha)\linear !^{h+e}(!^{b'}\alpha
\linear !^{e'}\alpha))\restriction_{C_{10}}  \rangle
$$
  where $C_{10}= C_{8}\cup C_{9}$.

  Finally, the scheme substitution that replaces $\alpha$ by $A$,
  furthermore $b$, $e$, $b'$ and $e'$ by $0$, and both $t$ and
  $h$ by $1$, satisfies the constraints and generates the typing
  $\emptyset \mid \emptyset \mid \emptyset \vdash\underline{2}\mbox{ }
  \underline{3}:!(!(A\lin A)\lin!(A \lin A))$, whose derivation is
  shown in Example \ref{exa:terms}.
\end{exa}

\section{Achieving completeness}\label{sect:completeness}
The type-system we introduced in this paper is not
complete for the class of elementary time functions,
at least if we restrict to uniform encodings.
Indeed, simply typed lambda-calculus \emph{without constants}
is itself incomplete with respect to any reasonable complexity class
(see, for example, \cite{DallagoBaillot06mscs}).
In order to achieve completeness, two different extensions
of the system can be built, one adjoining basic types and constants, and 
the other adjoining second order types. In this section we 
will briefly discuss these two solutions.
\subsection{Basic Types and Constants}
Let us fix a finite set of free algebras
$\mathcal{A}=\{\A_1,\ldots,\A_n\}$. 
The constructors of $\A_i$ will
be denoted as $c_{\A_i}^1,\ldots,c_{\A_i}^{k(\A_i)}$.
The arity of constructor $c^j_{\A_i}$ will be denoted
as $\mathcal{R}_{\A_i}^j$. The algebra $\U$ of unary
integers has two constructors $c_{\U}^1,c_{\U}^2$,
where $\mathcal{R}_\U^1=1$ and $\mathcal{R}_\U^2=0$.
The languages of types, terms and values are 
extended by the the following productions
\begin{eqnarray*}
A&::=&\A_j\\
M&::=&\mathit{iter}_{\A_j}\mid\mathit{cond}_{\A_j}\mid c^i_{\A_j}\\
V&::=&\mathit{iter}_{\A_j}\underbrace{V\ldots V}_{\mbox{$k$ times}}
  \mid\mathit{cond}_{\A_j}\underbrace{V\ldots V}_{\mbox{$k$ times}}
  \mid c^i_{\A_j}\underbrace{V\ldots V}_{\mbox{$l$ times}}
\end{eqnarray*}
where $\A_j$ ranges over $\mathcal{A}$, $i$
ranges over $\{1,\ldots,k(\A_j)\}$ and 
$k$ ranges over $\{0,\ldots,k(\A_j)\}$
and $l$ ranges over  $\{0,\ldots,\mathcal{R}_{\A_j}^i\}$. 
If $t$ is a term of the free algebra $\A$ and 
$M_1,\ldots,M_{k(\A)}$ are terms, then
$t\{M_1,\ldots,M_{k(\A)}\}$ is defined by
induction on $t$: 
$(c_{\A}^it_1\ldots t_{\mathcal{R}_{\A}^i})\{M_1,$ $\ldots,M_{k(\A)}\}$
will be
$$
M_i(t_1\{M_1,\ldots,M_{k(\A)}\})\ldots(t_{\mathcal{R}_{\A}^i}\{M_1,\ldots,M_{k(\A)}\}).
$$
The new constants receive the following types
in any context:
\begin{eqnarray*}
\mathit{iter}_\A&:&\A\linear
     !(\underbrace{A\linear\ldots\linear A}_
       {\mbox{\emph{$\mathcal{R}_\A^1$ times}}}\linear A)
     \linear\ldots\linear
     !(\underbrace{A\linear\ldots\linear A}_
       {\mbox{\emph{$\mathcal{R}_\A^{k(\A)}$ times}}}\linear A)\linear !A\\
\mathit{cond}_\A&:&\A\linear
     (\underbrace{\A\linear\ldots\linear \A}_
       {\mbox{\emph{$\mathcal{R}_\A^1$ times}}}\linear A)
     \linear\ldots\linear
     (\underbrace{\A\linear\ldots\linear \A}_
       {\mbox{\emph{$\mathcal{R}_\A^{k(\A)}$ times}}}\linear A)\linear A\\
\mathit{c}_\A^i&:&\underbrace{\A\linear\ldots\linear \A}_
       {\mbox{\emph{$\mathcal{R}_\A^i$ times}}}\linear\A
\end{eqnarray*}
New (call-by-value) reduction rules are the following ones:
\begin{eqnarray*}
\mathit{iter}_\A t V_1\ldots V_{k(\A)}&\rightarrow_v & t\{V_1\ldots V_{k(\A)}\}\\
\mathit{cond}_\A c_\A^i(t_1\ldots t_{\mathcal{R}_\A^i}) V_1\ldots V_{k(\A)}
&\rightarrow_v & V_i t_1\ldots t_{\mathcal{R}_\A^i}
\end{eqnarray*}
It is easy to check that Lemma~\ref{lemma:substitution2} holds
 in the presence of the new constants. Moreover:
\begin{prop}
Every typable closed normal form is a value.
\end{prop}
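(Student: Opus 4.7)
The plan is a structural induction on $M$, supported by an auxiliary canonical-forms lemma for algebra types that I would prove first: every closed, typable, normal-form value of type $\A_j$ has the shape $c^i_{\A_j} V_1 \ldots V_{\mathcal{R}^i_{\A_j}}$. This lemma goes by induction on the size of the value, case-analysed over the value grammar. Variables are excluded by closedness; abstractions carry an arrow type; a constructor application $c^i_{\A_{j'}} V_1 \ldots V_l$ has type $\A_{j'}$ only when fully applied, which forces $\A_{j'}\equiv\A_j$ and gives the desired shape; iterator applications $\mathit{iter}_{\A_{j'}} V_1 \ldots V_l$ always yield either an arrow or a modal type $!A$, neither of which can coincide with $\A_j$; and conditional applications $\mathit{cond}_{\A_{j'}} V_1 \ldots V_l$ yield an arrow whenever $l < k(\A_{j'})$. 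The only remaining subcase is $\mathit{cond}_{\A_{j'}} V_1 \ldots V_{k(\A_{j'})}$ with the polymorphic return type instantiated to $\A_j$, but there $V_1$ is a strictly smaller closed normal-form value of type $\A_{j'}$, so the induction hypothesis forces it to be a constructor term, making the outer $\mathit{cond}$ itself a $\bv$-redex and contradicting the normal-form hypothesis.

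The main proof is then a structural induction on $M$. Variables are ruled out, while abstractions and the nullary-applied constants are already values, so the real work is the application case $M \equiv PQ$. Both $P$ and $Q$ are closed normal forms, hence values by the induction hypothesis, and the typability of $M$ forces $P$ to carry an arrow type. Case analysis on $P$ yields:
\begin{itemize}
\item $P \equiv \lambda x.N$ makes $PQ$ a $\bv$-redex, a contradiction.
\item $P \equiv c^i_{\A_j} V_1 \ldots V_l$: a fully applied constructor has type $\A_j$, not an arrow, so $l < \mathcal{R}^i_{\A_j}$, and $M$ is the value $c^i_{\A_j} V_1 \ldots V_l Q$ with $l+1 \le \mathcal{R}^i_{\A_j}$.
\item $P \equiv \mathit{iter}_{\A_j} V_1 \ldots V_l$ or $P \equiv \mathit{cond}_{\A_j} V_1 \ldots V_l$ with $l < k(\A_j)$: $M$ is again a value, with $l+1 \le k(\A_j)$ arguments, short of the $k(\A_j)+1$ arguments required to form the corresponding redex.
\item $P \equiv \mathit{iter}_{\A_j} V_1 \ldots V_{k(\A_j)}$ or $P \equiv \mathit{cond}_{\A_j} V_1 \ldots V_{k(\A_j)}$: the auxiliary canonical-forms lemma applied to $V_1$ (a strictly smaller closed normal-form value of type $\A_j$) forces $V_1$ to be a constructor term, whence $PQ$ is itself a $\bv$-redex, contradicting the normality of $M$.
\end{itemize}

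The main obstacle is the last configuration above: at maximal arity, typing alone does not forbid $\mathit{iter}$ or $\mathit{cond}$ from occupying application position, since the remaining type is either a further arrow (in the $\mathit{iter}$ case) or the polymorphic return (in the $\mathit{cond}$ case). What rules it out is precisely the canonical-forms lemma, which identifies the algebra-typed first argument $V_1$ as a constructor term and thereby exposes a hidden redex inside $PQ$. Factoring this lemma out and establishing it by its own induction on value size is the conceptual key to making the main induction go through.
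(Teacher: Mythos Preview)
Your approach is essentially the paper's: structural induction on $M$, with the work concentrated in the application case and a case split on the head value $P$. The paper handles the full-arity $\mathit{iter}$/$\mathit{cond}$ cases in one line (``$M$ is a redex, because $V_1$ is a closed value with type $\A_j$''), leaving implicit exactly the canonical-forms fact you isolate as a lemma; making it explicit is a reasonable choice.

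One arity miscount to clean up: in your auxiliary lemma, the case $\mathit{cond}_{\A_{j'}} V_1 \ldots V_{k(\A_{j'})}$ does \emph{not} have the polymorphic return type. Since $\mathit{cond}_{\A_{j'}}$ takes $1+k(\A_{j'})$ arguments, a $\mathit{cond}$-value with $k(\A_{j'})$ arguments still has an arrow type, so it cannot have type $\A_j$ and the ``remaining subcase'' is vacuous (and your claim that it would already be a redex is off by one). The same remark applies to $\mathit{iter}$-values: with at most $k(\A_{j'})$ arguments they always carry an arrow type, never $!A$. This only simplifies your lemma. Separately, for the $\mathit{iter}$-redex in the main proof you need $V_1$ to be a \emph{full} algebra term, not merely constructor-headed; either state the lemma in that stronger form or note that iterating it on the sub-values yields the required conclusion.
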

\begin{proof}
By induction on the structure of a normal form $M$:
\begin{enumerate}[$\bullet$]
  \item
  A variable is not closed.
  \item
  If $M$ is an abstraction, then it is a value by definition.
  \item
  If $M$ is a constant, then it is a value by definition.
  \item
  If $M$ is an application $NL$, then $\vdash N:A\linear B$
  and $\vdash L:A$. By induction, $N$ and $L$ are both values and, as a consequence,
  $N$ cannot be a variable nor an abstraction. So, $N$ must
  be obtained from one of the new productions for values; let us
  distinguish some cases:
  \begin{enumerate}[$-$]
     \item
     If $N\equiv\mathit{iter}_{\A_j}V_1\ldots V_k$ with
     $k<k(\A_j)$, then $M$ is a value itself.
     \item
     If $N\equiv\mathit{iter}_{\A_j}V_1\ldots V_{k(\A_j)}$, 
     then $M$ is a redex, because $V_1$ is a closed value
     with type $\A_j$.
     \item
     If $N\equiv\mathit{cond}_{\A_j}V_1\ldots V_k$ with
     $k<k(\A_j)$, then $M$ is a value itself.
     \item
     If $N\equiv\mathit{cond}_{\A_j}V_1\ldots V_{k(\A_j)}$, 
     then $M$ is a redex, because $V_1$ is a closed value
     with type $\A_j$. 
     \item
     If $N\equiv c^i_{\A_j}V_1\ldots V_k$, then $k<\mathcal{R}_{\A_j}^i$
     because $N$ has an arrow type. As a consequence, $M$
     is a value.
  \end{enumerate}
\end{enumerate}
This concludes the proof.
\end{proof}
We can prove the following theorem:
\begin{thm}\label{theo:completeness}
There is a finite set of free algebras $\mathcal{A}$ including the
algebra $\U$ of unary integers such that for every 
elementary function $f:\N\rightarrow\N$, there is
a term $M_f:\U\rightarrow!^k\U$ such that
$M_f \lceil u \rceil\rightarrow^*_v \lceil f(u)\rceil$
(where $\lceil n \rceil$ is the term in $\U$ corresponding
to the natural number $n$).
\end{thm}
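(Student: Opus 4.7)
The approach is to adapt the standard EAL completeness argument for elementary functions to the call-by-value setting with free-algebra constants. The plan combines the tower-building capacity of iteration at higher types with a simulation of a universal Turing machine inside the free-algebra fragment.

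First, I would enrich $\mathcal{A}$ to include $\U$ together with an auxiliary algebra $\mathbb{T}$ encoding Turing machine configurations (for example, pairs of binary words representing the two halves of the tape, paired with a finite control state). Using $\mathrm{cond}_{\mathbb{T}}$ and the constructors of $\mathbb{T}$, for every deterministic Turing machine $\mathcal{M}$ one writes a closed step term $\mathrm{step}_{\mathcal{M}}:\mathbb{T}\multimap\mathbb{T}$ at box-depth zero.

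Second, I would build towers. By induction on $k$, produce closed terms $E_k:\U\multimap !^k\U$ satisfying $E_k\,\lceil n\rceil\to_v^*\lceil 2_k(n)\rceil$, where $2_0(n)=n$ and $2_{k+1}(n)=2^{2_k(n)}$. The base case $E_0=\lambda x.x$ is trivial. The case $E_1$ uses $\mathrm{iter}_\U$ instantiated at a higher type $A$, in a Church-numeral style similar to Example~\ref{exa:terms} where $\underline{2}\,\underline{3}:!C$ essentially represents $2^3$, so that a single $!$ carries the exponentiation. For the inductive step I would exploit that every closed term $f:\U\multimap !^k\U$ can be lifted to $\widetilde{f}:!\U\multimap !^{k+1}\U$ via $\widetilde{f}=\lambda x.f\,x$, typed using the $!$-rule to move $x$ from linear to modal and then $I_\multimap^I$. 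Setting $E_{k+1}=\lambda x.\widetilde{E_k}(E_1\,x)$ then yields the required term, using the identity $2_{k+1}(n)=2_k(2^n)$.

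Third, given an arbitrary elementary $f$, I would choose $k$ and a polynomial $p$ such that a Turing machine $\mathcal{M}_f$ computes $f$ in at most $2_k(p(n))$ steps on input $n$. Applying $\mathrm{iter}_\U$ at $A=\mathbb{T}$ to $\mathrm{step}_{\mathcal{M}_f}$ and to an initial configuration term $\mathrm{init}(n):\mathbb{T}$ derived from $n$ yields a term $\mathrm{run}_{\mathcal{M}_f}:\U\multimap !\mathbb{T}$ that, applied to $\lceil t\rceil$, reduces to a boxed configuration encoding the result of $t$ steps from $\mathrm{init}(n)$. After lifting $\mathrm{run}_{\mathcal{M}_f}$ $k$ times through the $!$-functoriality argument and feeding in the boxed time bound $E_k(p(n))$, a decoder $\mathrm{decode}:\mathbb{T}\multimap\U$, itself lifted through the appropriate number of $!$-boxes, produces $M_f:\U\multimap !^{k+c}\U$ for a fixed $c$ with the correct reduction behaviour.

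The main obstacle will be keeping the three-context bookkeeping consistent through repeated lifting. Each use of the $!$-rule constrains the linear/modal/parking partition, and each composition across box levels must invoke the extended Substitution Lemma (Lemma~\ref{lemma:substitution2}) to ensure that call-by-value subject reduction continues to hold under the new $\mathrm{iter}_\A$ and $\mathrm{cond}_\A$ rules. A separate but routine task is verifying that elementary-time Turing machines fit within a tower of fixed height $k$, which follows from the standard closure properties of the elementary class; the concrete Turing machine encoding is itself independent of the EAL-specific concerns.
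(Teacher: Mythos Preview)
Your approach is essentially the same as the paper's: encode Turing machine configurations in an auxiliary free algebra, build a transition term with $\mathit{cond}$, build tower terms by induction using $\mathit{iter}_\U$ at a higher type, lift through boxes, and assemble.

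There is one technical slip worth flagging. You claim $E_1:\U\multimap!\U$, i.e.\ that a single $!$ suffices for one exponentiation. It does not: the step function $\lambda y\lambda z.y(yz)$ uses $y$ twice, so $y$ must be modal, forcing the step to have type $!(\U\multimap\U)\multimap!(\U\multimap\U)$. Since $\mathit{iter}_\U$ at parameter $A$ returns $!A$, you must take $A=!(\U\multimap\U)$, and after applying the result to $\lceil 0\rceil$ inside two boxes you land in $!!\U$. Your inductive lifting then yields $E_k:\U\multimap!^{2k}\U$, which is exactly what the paper obtains (it writes $\mathit{tower}^n:\U\multimap!^{2n}\U$). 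This does not affect the statement, which only asks for \emph{some} $k$, but your sentence ``a single $!$ carries the exponentiation'' is wrong and your $E_{k+1}=\lambda x.\widetilde{E_k}(E_1\,x)$ needs two liftings of $E_k$, not one.

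A smaller point: the paper separates the two uses of the input (as time bound and as initial configuration) by introducing explicit coercions $\mathit{coerc}^n:\U\multimap!^n\U$ so that the two copies end up at compatible box depths before being fed to $\mathit{iter}_\U$. Your sketch of ``lifting $\mathrm{run}_{\mathcal{M}_f}$ $k$ times and feeding in the boxed time bound'' gestures at this but leaves the depth alignment implicit; when you write it out you will need the same coercions (or an equivalent device), and you should note that the input $x$ is used twice, so it must sit in the modal context in the final abstraction. None of this is a gap in the strategy, only in the bookkeeping.
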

\begin{proof}
We will show that if $f:\N\rightarrow\N$ is computable
by a Turing Machine $\mathcal{M}$ running in elementary 
time, then there is a term $M_f$ representing that same 
function. First of all, $\mathscr{A}$ will contain
a free algebra $\C$ with four constructors
$c_\C^1,c_\C^2,c_\C^3,c_\C^4$ having
arities $\mathcal{R}_\C^1=4,\mathcal{R}_\C^2=1,\mathcal{R}_\C^3=1,
\mathcal{R}_\C^4=0$. Constructors $,c_\C^2,c_\C^3,c_\C^4$
can be used to build binary strings and a configuration
will correspond to a term $c_\C^1 t_1 t_2 t_3 t_4$ where
$t_1$ represent the current state, $t_2$ represents the
current symbol, $t_3$ represents the left-hand side of
the tape and $t_4$ represents the right-hand side of the
tape. A closed term $\mathit{trans}:\C\multimap\C$ encoding
the transition function can be built using, in particular,
the new constant $\mathit{cond}_\C$. Iteration, on the
other hand, helps when writing $\mathit{init}:\U\multimap!\C$
(whose purpose is to translate a unary integer $t$ into the initial
configuration of $\mathcal{M}$ for $t$) and $\mathit{final}:\C\multimap !\U$
(which extracts a unary integer from the final configuration of $\mathit{M}$).
In this way, the so-called qualitative part of the encoding
can be done. The quantitative part, on the other hand, can be
encoded as follows. We will show there are
terms $\mathit{tower}^n:\U\multimap !^{2n}\U$ such that
$\mathit{tower}^n\lceil m\rceil\rightarrow_v^*\lceil 2_n(m) \rceil$
where $2_0(m)=m$ and $2_{n+1}(m)=2^{2_n(m)}$ for every $n\geq 0$.
We will prove the existence of such terms by induction on $n$.
$\mathit{tower}^0:\U\rightarrow \U$ is simply the identity
$\lambda x.x$. Consider now the term
$$
\mathit{exp}\equiv \lambda x.\mathit{iter}_\U x(\lambda y\lambda z.y(yz))(\lambda y.c_\U^1 y):
\U\rightarrow !!(\U\rightarrow\U)
$$
We now prove that for every $m\in\N$,
$
\mathit{exp}\lceil m\rceil\rightarrow^*_v V_m
$
where $V_m$ is a value such that $V_m\lceil p\rceil\rightarrow_v^* \lceil 2^m +p\rceil$
for every $p\in\N$. We go by induction on $m$. If $m=0$, then
$$
\mathit{exp}\lceil m\rceil\rightarrow^*_v (\lambda x.c_\U^1 x)
$$
and $(\lambda x.c_\U^1 x)\lceil p\rceil\rightarrow_v^* \lceil 1+p\rceil\equiv\lceil 2^m+p\rceil$.
If $m>0$, then
$$
\mathit{exp}\lceil m\rceil\rightarrow^*_v (\lambda x.\lambda y.x(xy))V_{m-1}\rightarrow_v
\lambda y.V_{m-1}(V_{m-1}y)
$$
and
$$
\lambda y.V_{m-1}(V_{m-1}y)\lceil p\rceil\rightarrow^*_v V_{m-1}\lceil 2^{m-1}+p\rceil
\rightarrow^*_v\lceil 2^{m-1}+2^{m-1}+p\rceil\equiv\lceil 2^m+p\rceil
$$
$\mathit{tower}^n$ is 
$$
\lambda x.(\lambda y.\mathit{tower}^{n-1}y)((\lambda z.z\lceil 0\rceil)(\mathit{exp}\;x))
$$
Finally, we need terms $\mathit{coerc}^n:\U\rightarrow !^n\U$
such that $\mathit{coerc}^n\lceil m\rceil\rightarrow_v^*\lceil m \rceil$.
$\mathit{coerc}^0$ is simply the identity, while $\mathit{coerc}^{n}$
is $\lambda x.\mathit{iter}_\U x c_\U^1 (\lambda x.c_\U^2x)$ for every $n\geq 1$.
We can suppose there is $d$ such that $\mathcal{M}$ performs at most $2_d(n)$ steps 
processing any input of length $n$. The term $M_f$ encoding $f$ will then be:
$$
\lambda x.(\lambda y.\mathit{final}\;y)((\lambda z.\lambda v.\mathit{iter}_\U\;
z\;\mathit{trans}\;(\mathit{init}\;v))(\mathit{coerc}^{2d}\;x)(\mathit{tower}^d\;x))
$$
This concludes the proof.
\end{proof}
For the extended system a principal type property can be proved, extending the algorithm 
defined in Table \ref{table-def-type-inf} in order to take into account the new constants. 
Clearly, the system can further be extended with other constants without losing
its nice properties, provided Lemma~\ref{lemma:substitution2} is satisfied.

\subsection{Restricted Second Order Quantification}
If we had the full power of second-order quantification in \ETAS,
we would easily found a counterexample to Substitution Lemma and,
as a consequence, to Subject Reduction. Consider the following
type derivation:
$$
\infer
{x:\forall\alpha.\alpha\mid\emptyset\mid\emptyset\vdash x:!\beta}
{
  \infer{x:\forall\alpha.\alpha\mid\emptyset\mid\emptyset\vdash x:\forall\alpha.\alpha}{}
}
$$
This shows we would be able to give type $!\beta$ to the variable $x$,
but without using any instance of rule $!$. This undermines any
hope to prove subject reduction in presence of types like $\forall\alpha.\alpha$.
The same holds when we have types in the form $\forall\alpha.!A$.
Restricting second order quantification to arrow types (and, recursively,
to second-order types) allows us to preserve all results we proved in 
sections~\ref{sect:ETAS} and~\ref{sect:BOUNDS}.

Formally, a subclass of formulae can be defined by the following two productions
\begin{eqnarray*}
S&::=&A\linear A\mid\;\forall\alpha.S
\end{eqnarray*}
and the following rules are added to the type system:
$$
\begin{array}{rcl}
\infer[I_\forall]{\Gamma\mid\Delta\mid\Theta\vdash M:\forall\alpha.S}{
  \Gamma\mid\Delta\mid\Theta\vdash M: S & 
  \alpha\notin\mathit{FV}(\Gamma)\cup\mathit{FV}(\Delta)\cup\mathit{FV}(\Theta)}
&\hspace{4mm}&
\infer[E_\forall]{\Gamma\mid\Delta\mid\Theta\vdash M:S\{A/\alpha\}}{
  \Gamma\mid\Delta\mid\Theta\vdash M:\forall\alpha.S}
\end{array}
$$
As can be easily checked, Theorem~\ref{theo:SR} and Theorem~\ref{theo:BOUNDS} still hold.\\
Type inference in presence of second order is at least problematic~\cite{Wells94lics}.
We conjecture that, even if second order quantification is the restricted one described here,
decidability of the type inference is lost.

\section{Extensions to other Logics}\label{sect:otherlightlogics}
We believe the approach described in this paper to
be applicable to other logics besides Elementary Affine
Logic. Two examples are Light Affine Logic~\cite{Asperti98lics}
and Soft Affine Logic~\cite{Baillot04fossacs}.
Light Affine Logic needs an additional modality, denoted with $\S$. 
So, there will be two modal rules:
$$
\infer[!]
{\Gamma_2\mid !\Gamma_1,!\Delta_1,\Delta_2\mid\Theta_2\vdash M:!A}
{\Gamma_1\mid\Delta_1\mid\emptyset\vdash M:A & |\Gamma_1|+|\Delta_1|\leq 1}
$$
$$
\infer[\S]
{\S\Gamma_1,\S\Delta_1,\Gamma_4\mid !\Gamma_2,!\Delta_2,!\Theta_1,\Delta_4\mid\S\Theta_2,\S\Delta_3,\Theta_4\vdash M:\S A}
{\Gamma_1,\Gamma_2\mid\Delta_1,\Delta_2\Delta_3\mid\Theta_1\Theta_2\vdash M:A}
$$
Notice that we do not need an additional context for the new paragraph modality, since
contraction on formulae like $\S A$ is not allowed.

Soft Affine Logic is even simpler than elementary affine logic: there is
just one modality and the context is split into just two sub-contexts. The
$!$ rule becomes:
$$
\infer[!]
{!\Gamma_1,\Gamma_2\mid !\Delta_1,\Delta_2\vdash M:!A}
{\Gamma_1\mid\Delta_1\vdash M:A}
$$
However, the contraction in \SAL{} is deeply different from the one in \EAL{} and \LAL{}. In particular the formula $!A\lin!A\otimes!A$ is not provable anymore and is ``replaced'' by
\begin{displaymath}
!A\linear \underbrace{A\otimes\cdots\otimes A}_{\mbox{$n$ times}}.
\end{displaymath}
In the type system the axiom above is reflected by the following two rules:
\begin{displaymath}
\infer[\epsilon_1]
{\Gamma \mid\Delta,x: !A\vdash M: B}
{\Gamma\mid\Delta,x: A\vdash M:B}\qquad
\infer[\epsilon_2]
{\Gamma,x: !A\mid\Delta\vdash M: B}
{\Gamma\mid x:A,\Delta\vdash M:B}
\end{displaymath} 
Notice that the analogous of Shifting Lemma
(Lemma~\ref{lemma:lintopark} of Section~\ref{sect:ETAS}) stating that
every formula in left context can be shifted to the right one holds
in this case too.
\section{Conclusions}\label{sect:conclusions}
We presented a type system for the call-by-value lambda-calculus,
called~\ETAS, designed from Elementary Affine Logic and enjoying
subject reduction and elementary time normalization. Inference of
principal types can be done in polynomial time thanks to the fact that
the type system is {\em almost} syntax directed. We believe the
approach to be extendible to other systems besides \EAL, in particular
to Light Affine Logic and Soft Affine Logic (as sketched in
Section~\ref{sect:otherlightlogics}). Moreover, we show that adding
constants for iteration makes the system (extensionally) complete for
elementary time, without altering its good properties. We briefly
discuss also the problem of extending the system with second order.

\bibliographystyle{plain}
\bibliography{llcbv}

\end{document}